\def\BibTeX{{\rm B\kern-.05em{\sc i\kern-.025em b}\kern-.08em
		T\kern-.1667em\lower.7ex\hbox{E}\kern-.125emX}}
	\newcommand*{\upm}[1]{\mathsf{#1}}%
	\newcommand*{\upm}[1]{\mathrm{#1}}%
\newcommand*{\vc}[1]{\bm{#1}}
\newcommand*{\mat}[1]{\bm{#1}}
\newcommand*{\opt}{{(\upm{opt})}}
\newcommand*{\what}[1]{\widehat{#1}}
\DeclarePairedDelimiter{\pdel}{(}{)} %
\DeclarePairedDelimiter{\bdel}{[}{]} %
\newcommand*{\idx}[1]{^{(#1)}}
\DeclarePairedDelimiter{\abs}{\lvert}{\rvert}
\DeclarePairedDelimiter{\bmat}{[}{]} %
\DeclarePairedDelimiter{\uset}{\{}{\}} %
\DeclarePairedDelimiter{\oset}{(}{)} %
\DeclareMathOperator*{\argmin}{\arg\min}
\DeclareMathOperator*{\argmax}{\arg\max}
\newcommand{\eul}{{\upm{e}}}
\newcommand{\ju}{{\upm{j}}}
\newcommand*{\onevec}{\vc{1}}
\newcommand*{\nullvec}{\vc{0}}
\newcommand*{\idmat}{\mat{I}}
\newcommand*{\nullmat}{\mat{0}}
\newcommand*\Rset{\mathbb{R}}
\newcommand*\Cset{\mathbb{C}}
\DeclarePairedDelimiter{\norm}{\lVert}{\rVert}
\newcommand*{\T}{\upm{T}}
\newcommand*{\He}{\upm{H}} %
\DeclarePairedDelimiter{\fnorm}{\lVert}{\rVert_\upm{F}}
\newcommand*{\trace}{\operatorname{Tr}\pdel}
\newcommand*{\Diag}{\operatorname{Diag}\pdel}
\newcommand*{\nullspace}{\operatorname{ker}\pdel}
\newcommand*{\rank}{\operatorname{rank}\pdel}
\newcommand*{\hmul}{\odot}
\newcommand*{\kron}{\otimes}
\newcommand*{\cnormdistr}{\mathcal{CN}\pdel}
\newcommand*{\probsym}{\mathbb{P}}
\NewDocumentCommand{\prob}{o}{
	\IfNoValueTF{#1}{\probsym\pdel*}
	{\probsym_{#1}\pdel*}
}
\NewDocumentCommand{\emprob}{o}{
	\IfNoValueTF{#1}{\what{\probsym}\pdel*}
	{\what{\probsym}_{#1}\pdel*}
}
\newcommand*{\expectsym}{\mathbb{E}}
\NewDocumentCommand{\expect}{o}{
	\IfNoValueTF{#1}{\expectsym\pdel*}
	{\expectsym_{#1}\pdel*}
}
\NewDocumentCommand{\emexpect}{o}{
	\IfNoValueTF{#1}{\what{\expectsym}\pdel*}
	{\what{\expectsym}_{#1}\pdel*}
}
\NewDocumentCommand{\var}{o}{
	\IfNoValueTF{#1}{\operatorname{Var}\pdel*}
	{\operatorname{Var}_{#1}\pdel*}
}
\NewDocumentCommand{\emvar}{o}{
	\IfNoValueTF{#1}{\what{\operatorname{Var}}\pdel*}
	{\what{\operatorname{Var}}_{#1}\pdel*}
}
\newcommand*\diff{\operatorname{d}\!}
\NewDocumentCommand{\jcby}{o}{\IfNoValueTF{#1}{\operatorname{D}\!}{\operatorname{D}_{#1}}} %
\newcommand{\cmdstkout}[1]{\ifmmode\text{\sout{\ensuremath{#1}}}\else\sout{#1}\fi}%
\lstdefinelanguage{NTSpython}[]{Python}{
	deletekeywords=[2]{max, range, sum, abs}, %
	morekeywords={as}
}
\tiny\color{black}, 
\definecolor{thisPurple}{rgb}{0.502,0.501,0.969}
\definecolor{veryDarkGray}{rgb}{0.2,0.2,0.2}
\definecolor{darkGray}{rgb}{0.663,0.663,0.663}
\definecolor{midGray}{rgb}{0.783,0.783,0.783}
\definecolor{lightGray}{rgb}{0.902,0.902,0.902}
\definecolor{lightBlue}{rgb}{0.75,0.85,1.0}
\definecolor{darkBlue}{rgb}{0.45,0.55,1.0}
\definecolor{darkRed}{rgb}{0.961,0.426,0.411}
\definecolor{lightRed}{rgb}{0.961,0.626,0.611}
\definecolor{darkGreen}{rgb}{0.426,0.761,0.411}
\colorlet{blue1}{darkBlue!25!white}
\colorlet{blue2}{darkBlue!62!white}
\colorlet{blue3}{darkBlue}
\colorlet{blue4}{darkBlue!62!black}
\colorlet{blue5}{darkBlue!25!black}
\colorlet{red1}{darkRed!25!white}
\colorlet{red2}{darkRed!62!white}
\colorlet{red3}{darkRed}
\colorlet{red4}{darkRed!62!black}
\colorlet{red5}{darkRed!25!black}
\tikzset{
	>={Triangle[scale=0.7]},
	nc/.style={circle, draw=veryDarkGray, fill=lightGray, minimum width=6mm, inner sep=0.00cm, align=center, font=\footnotesize\sffamily},
	lstynw/.style={line width=0.5mm, veryDarkGray},
	lstynwhl/.style={line width=0.6mm, darkRed},
	ncm/.style={circle, draw=veryDarkGray, fill=lightGray, minimum width=0.75cm, inner sep=0.08cm, align=center, font=\footnotesize\sffamily},
	nst/.style={draw=none, align=center, font=\footnotesize},
	>={Latex[scale=0.8]},
	nr/.style={draw, rectangle, fill=lightGray, minimum height=10mm, minimum width=10mm, inner sep=1.5mm, rounded corners=0mm, align=center},
	nl/.style={draw=none, rectangle, fill=lightBlue, opacity=0.5, inner sep=1.5mm, rounded corners=2mm, align=center},
	bdot/.style={draw, circle, fill=veryDarkGray, minimum width=1mm, inner sep=0mm, align=center},
	circgray/.style={draw, circle, fill=lightGray, minimum width=8mm, inner sep=0mm, align=center},
	circred/.style={draw, circle, fill=lightRed, minimum width=8mm, inner sep=0mm, align=center, font=\footnotesize},
	txt/.style={inner sep=1mm, align=center, font=\footnotesize},
	lsty/.style={line width=0.25mm, veryDarkGray},
	boxgray/.style={nr, minimum width=8mm, minimum height=8mm},
	anlgcirc/.style={draw, circle, fill=lightGray, minimum height=4mm, inner sep=0.2mm, align=center},
}
\newcommand{\adder}[1]{
	[line cap=round, line join=round, sharp corners]
	+(0,0) circle (#1)
	+(-1*#1, 0) -- +(#1, 0)
	+(0, -1*#1) -- +(0, #1)
	+(0, 0)
}
\newcommand{\antenna}[1]{
	[line cap=round, line join=round, sharp corners]
	+(0, 0) -- +(0, 1.1*#1)
	+(0, 0.6*#1) -- +(0.5*#1, 1.1*#1)
	+(0, 0.6*#1) -- +(-0.5*#1, 1.1*#1)
	+(-0.5*#1, 1.1*#1) -- +(0.5*#1, 1.1*#1)
	+(0, 0)
}
\tikzset{deepnwnode/.style={circle, draw=none, fill=lightRed, minimum height=0, minimum width=0}}
\DeclareRobustCommand{\tikzvdots}{%
	\vbox{
		\baselineskip4\p@\lineskiplimit\z@
		\kern1\p@
		\hbox{.}\hbox{.}\hbox{.}
}}
\tikzset{external/system call={pdflatex \tikzexternalcheckshellescape -halt-on-error -interaction=batchmode --extra-mem-bot=2000000000 --extra-mem-top=2000000000 -jobname "\image" "\texsource"}}
\pgfplotsset{compat=1.18}
\pgfplotsset{
	minor grid style={dotted},
	major grid style={solid},
	grid = major, 
	minor tick num=4,
	xlabel near ticks,
	ylabel near ticks,
	every axis post/.style={
		xlabel style={align=center}, 
		ylabel style={align=center},
	},
	every axis plot/.append style={
		thick, 
		mark options=solid, 
	},
	legend style={at={(1,1.02)}, draw=none, fill=none, font=\small, anchor=south east, /tikz/every even column/.append style={column sep=0.2cm}},
}
\newtheorem{theorem}{Theorem}[section]
\newtheorem{lemma}[theorem]{Lemma}
\newtheorem{assumption}{Assumption}[section]
\newtheorem*{remark}{Remark}
\DeclareSIUnit{\dB}{dB}
\DeclareSIUnit{\kkilo}{\kilo\relax}
\DeclareRobustCommand\onedot{\futurelet\@let@token\@onedot}
\def\@onedot{\ifx\@let@token.\else.\null\fi\xspace}
\def\eg{e.g\onedot}
\def\ie{i.e\onedot}
\def\cf{c.f\onedot}
\def\wrt{w.r.t\onedot}
\def\wrt{w.r.t\onedot}
\newcommand*{\matel}[2]{[#1]_{#2}}
\newcommand*{\vect}{\operatorname{vec}\pdel}
\newcommand*{\vectd}{\operatorname{vec}_{\upm{d}}\pdel}
\newcommand*{\vectr}{\operatorname{vec}_{\upm{re}}\pdel}
\newcommand*{\vecti}{\operatorname{vec}_{\upm{im}}\pdel}
\newcommand*{\Blkdiag}{\operatorname*{Blkdiag}\pdel}
\newcommand*{\vxi}{\vc{\xi}}
\newcommand*{\va}{\vc{a}}
\newcommand*{\vb}{\vc{b}}
\newcommand*{\vf}{\vc{f}}
\newcommand*{\vg}{\vc{g}}
\newcommand*{\vh}{\vc{h}}
\newcommand*{\vn}{\vc{n}}
\newcommand*{\vp}{\vc{p}}
\newcommand*{\vq}{\vc{q}}
\newcommand*{\bs}{\vc{s}}
\newcommand*{\vw}{\vc{w}}
\newcommand*{\vx}{\vc{x}}
\newcommand*{\vy}{\vc{y}}
\newcommand*{\vtheta}{\vc{\theta}}
\newcommand*{\vlambda}{\vc{\lambda}}
\newcommand*{\vzeta}{\vc{\zeta}}
\newcommand*{\vpsi}{\vc{\psi}}
\newcommand*{\bA}{\mat{A}}
\newcommand*{\bB}{\mat{B}}
\newcommand*{\bC}{\mat{C}}
\newcommand*{\bE}{\mat{E}}
\newcommand*{\bR}{\mat{R}}
\newcommand*{\bS}{\mat{S}}
\newcommand*{\bX}{\mat{X}}
\newcommand*{\bZ}{\mat{Z}}
\newcommand*{\hbC}{\widehat{\bC}}
\newcommand*{\hbR}{\widehat{\bR}}
\newcommand*{\hvh}{\widehat{\vh}}
\newcommand*{\hvpsi}{\widehat{\vpsi}}
\newcommand*{\tvh}{\widetilde{\vh}}
\newcommand*{\tbR}{\widetilde{\bR}}
\newcommand*{\bva}{\breve{\va}}
\newcommand*{\brs}{\breve{s}}
\newcommand*{\bvb}{\breve{\vb}}
\newcommand*{\bvzeta}{\breve{\vzeta}}
\newcommand*{\uvb}{\underline{\vb}}
\newcommand*{\uvf}{\underline{\vf}}
\newcommand*{\ubvb}{\underline{\bvb}}
\newcommand*{\uvg}{\underline{\vg}}
\newcommand*{\uvzeta}{\underline{\vzeta}}
\newcommand*{\uvpsi}{\underline{\vpsi}}
\newcommand*{\ubLambda}{\underline{\bLambda}}
\newcommand*{\ubvzeta}{\underline{\breve{\vzeta}}}
\newcommand*{\uhvpsi}{\underline{\what{\vpsi}}}
\newcommand*{\uhvpsid}{\uhvpsi_{\upm{d}}}
\newcommand*{\uhvpsir}{\uhvpsi_{\upm{re}}}
\newcommand*{\uhvpsii}{\uhvpsi_{\upm{im}}}
\newcommand*{\uhbPsi}{\underline{\what{\bPsi}}}
\newcommand*{\bGamd}{\bGamma_{\upm{d}}}
\newcommand*{\bGamr}{\bGamma_{\upm{re}}}
\newcommand*{\bGami}{\bGamma_{\upm{im}}}
\newcommand*{\bGamein}{\bGamma_{\alpha}}
\newcommand*{\bGamma}{\mat{\Gamma}}
\newcommand*{\bTheta}{\mat{\Theta}}
\newcommand*{\bPhi}{\mat{\Phi}}
\newcommand*{\bPsi}{\mat{\Psi}}
\newcommand*{\bLambda}{\mat{\Lambda}}
\newcommand*{\hbPsi}{\widehat{\bPsi}}
\newcommand*{\cA}{\mathcal{A}}
\newcommand*{\cB}{\mathcal{B}}
\newcommand*{\cF}{\mathcal{F}}
\newcommand*{\cQ}{\mathcal{Q}}
\newcommand*{\cS}{\mathcal{S}}
\newcommand*{\cbook}{\cA_\upm{rf}}
\newcommand*{\numcb}{A}
\newcommand*{\numfbv}{N_\upm{fb}} %
\newcommand*{\numrf}{M_\upm{rf}}
\newcommand*{\numaly}{L_\upm{rf}}
\newcommand*{\numgnnly}{L_\upm{gcn}}
\newcommand*{\numgnnfilt}{F_\upm{gcn}}
\newcommand*{\weightvec}{\vw}
\newcommand*{\prdf}{\mathcal{P}\pdel}
\newcommand*{\supp}{\operatorname{supp}}
\newcommand*{\relu}{\operatorname{ReLU}}
\newcommand*{\cM}{\mathcal{M}}
\newcommand*{\ssample}{\mathcal{S}}
\newcommand*{\dataset}{\mathcal{D}}
\newcommand*{\trainset}{\dataset_\upm{train}}
\newcommand*{\valset}{\dataset_\upm{val}}
\newcommand*{\estqtl}{\what{\cQ}}
\newcommand*{\stepfun}{u}
\newcommand*{\apxstepfun}{\widetilde{u}}
\newcommand*{\sinrdl}{\operatorname{SINR}^\upm{Dl}}
\newcommand*{\sinrul}{\operatorname{SINR}^\upm{Ul}}
\newcommand*{\sinrdlest}{\what{\operatorname{SINR}}^\upm{Dl}}
\newcommand*{\sinrulest}{\what{\operatorname{SINR}}^\upm{Ul}}
\newcommand*{\pout}{P_{\upm{out}}}
\newcommand*{\estpout}{\what{P}_{\upm{out}}}
\newcommand*{\pmax}{P_{\upm{max}}}
\newcommand*{\ppilot}{P_{\upm{pil},i}}
\newcommand*{\gcnn}{\bPhi_{\upm{gcn}}}
\newcommand*{\gcnnly}{\breve{\bPhi}_{\upm{gcn}}}
\newcommand*{\ocplx}{\mathcal{O}\pdel}
\newcommand*{\digbfmap}{\mathcal{F}_\upm{Dl}}
\newcommand*{\digbfmapul}{\mathcal{F}_\upm{Ul}}
\newcommand*{\realulfun}{\uvf_\upm{Ul}}
\renewcommand*{\opt}{\star}
\DeclareRobustCommand{\cvdots}{%
	\vbox{
		\baselineskip4\p@\lineskiplimit\z@
		\kern-\p@
		\hbox{.}\hbox{.}\hbox{.}
}}
\newacronym{ad}{AD}{anomaly detection}
\newacronym{roc}{ROC}{receiver operating characteristic}
\newacronym{auc}{AUC}{area under the curve}
\newacronym{psd}{PSD}{positive semidefinite}
\newacronym{svd}{SVD}{singular value decomposition}
\newacronym{pca}{PCA}{principal component analysis}
\newacronym{rpca}{RPCA}{robust principal component analysis}
\newacronym{snle}{SNLE}{system of nonlinear equations}
\newacronym{bcd}{BCD}{block coordinate descent}
\newacronym{bsca}{BSCA}{block successive convex approximation}
\newacronym{mip}{MIP}{Mixed-Integer Program}
\newacronym{kkt}{KKT}{Karush-Kuhn-Tucker}
\newacronym{pgd}{PGD}{projected gradient descent}
\newacronym{dn}{DN}{deep network}
\newacronym{dl}{DL}{deep learning}
\newacronym{nn}{NN}{neural network}
\newacronym{gnn}{GNN}{graph neural network}
\newacronym{gcnn}{GCN}{graph convolutional neural network}
\newacronym{lista}{LISTA}{robust principal component analysis}
\newacronym{mlp}{MLP}{multilayer perceptron}
\newacronym{sgd}{SGD}{stochastic gradient descent}
\newacronym{fdd}{FDD}{frequency division duplex}
\newacronym{mu}{MU}{multi-user}
\newacronym{miso}{MISO}{multiple-input single-output}
\newacronym{mimo}{MIMO}{multiple-input multiple-output}
\newacronym{mumiso}{MU MISO}{multi-user multiple-input single-output}
\newacronym{csi}{CSI}{channel state information}
\newacronym{snr}{SNR}{signal-to-noise ratio}
\newacronym{sinr}{SINR}{signal-to-interference-plus-noise ratio}
\newacronym{cb}{CB}{codebook}
\newacronym{noma}{NOMA}{non-orthogonal multiple access}
\newacronym{qos}{QoS}{quality of service}
\newacronym{bs}{BS}{base station}
\newacronym{ue}{UE}{user equipment}
\newacronym{rf}{RF}{radio frequency}
\newacronym{dft}{DFT}{discrete Fourier transform}
\newacronym{bf}{BF}{beamforming}
\begin{document}
	
\receiveddate{7 January 2026}
\reviseddate{9 February 2026}
\accepteddate{24 February, 2026}
\publisheddate{5 March, 2026}
\currentdate{18 March, 2026}
\doiinfo{10.1109/OJCOMS.2026.3671023}
\title{Hybrid Downlink Beamforming with Outage Constraints under Imperfect CSI using Model-Driven Deep Learning}
	
\author{~Lukas~Schynol\IEEEauthorrefmark{1}\IEEEmembership{(Graduate Student Member, IEEE)} and Marius~Pesavento\IEEEauthorrefmark{1} \IEEEmembership{(Senior Member, IEEE)}}
\affil{Technical University of Darmstadt, Darmstadt, Germany}
\corresp{CORRESPONDING AUTHOR: Lukas~Schynol (e-mail: lschynol@nt.tu-darmstadt.de).}
\authornote{This work was financially supported by the Federal Ministry of Research, Technology and Space of Germany in the project "Open6GHub+" (grant no. 16KIS2407).\\[-32pt]}
\markboth{Submission to OJCOMS}{Schynol and Pesavento}

\begin{abstract}
	We consider energy-efficient multi-user hybrid downlink \gls{bf} and power allocation under imperfect \gls{csi} and probabilistic outage constraints.
	In this domain, classical optimization methods resort to computationally costly conic optimization problems.
	Meanwhile, generic \gls{dn} architectures lack interpretability and require large training data sets to generalize well.
	In this paper, we therefore propose a lightweight model-aided deep learning architecture based on a greedy selection algorithm for analog beam codewords.
	The architecture relies on an instance-adaptive augmentation of the signal model to estimate the impact of the \gls{csi} error. 
	To learn the \gls{dn} parameters, we derive a novel and efficient implicit representation of the nested constrained \gls{bf} problem and prove sufficient conditions for the existence of the corresponding gradient.
	In the loss function, we utilize an annealing-based approximation of the outage compared to conventional quantile-based loss terms.
	This approximation adaptively anneals towards the exact probabilistic constraint depending on the current level of \gls{qos} violation.
	Simulations validate that the proposed \gls{dn} can achieve the nominal outage level under \gls{csi} error due to channel estimation and channel compression, while allocating less power than benchmarks.
	Thereby, a single trained model generalizes to different numbers of users, \gls{qos} requirements and levels of \gls{csi} quality.
	We further show that the adaptive annealing-based loss function can accelerate the training and yield a better power-outage trade-off.
\end{abstract}
\glsresetall
\begin{IEEEkeywords}
	Deep unrolling with constraints, hybrid beamforming, multi-user MISO, probabilistic outage constraint, mixed-integer, implicit function differentiation
\end{IEEEkeywords}
	
	\maketitle

\section{Introduction}\label{sec:introduction}
Downlink \gls{bf} serves as a fundamental technology in modern wireless networks, including 3GPP 5G \cite{heathjrFoundationsMIMOCommunication2018} and the upcoming 6G standard \cite{you6GWirelessCommunication2021}.
Nonetheless, large-scale multi-antenna systems, network densification and surging traffic demands led to an increased energy usage and cost \cite{iEnergyefficient5GGreener2020, cheng5GNetworkDeployment2022}.
As such, energy efficiency is of marked interest in future wireless systems \cite{jiangRoad6GComprehensive2021}.
\par
Considering these challenges, hybrid \gls{rf} front-ends that integrate low-dimensional digital processing with high-dimensional analog processing offer a compelling advantage.
Hybrid \gls{bf} architectures improve energy consumption and cost while achieving a spatial separation and throughput comparable to fully digital architectures \cite{molischHybridBeamformingMassive2017, sohrabiHybridDigitalAnalog2016}.
However, restrictions on the configuration of analog components make it challenging to optimize beam patterns \wrt network utilities such as sum-rate or allocated power under \gls{qos} requirements in \gls{mu} systems \cite{shlezingerArtificialIntelligenceEmpoweredHybrid2024}.
The problem is intensified in practical systems with a large number of antennas, \eg, systems operating millimeter wave or the upcoming 6G FR3 bands, where \gls{csi} degradation arises from factors such as estimation errors, quantization of feedback, time-varying fading or space and frequency subsampling of the channel \cite{shlezingerArtificialIntelligenceEmpoweredHybrid2024, joudehRobustTransmissionDownlink2016}.
Classical approaches commonly tackle the resulting robust downlink \gls{bf} problem by regularization and diagonal loading \cite{bjornsonOptimalMultiuserTransmit2014, nguyenMMSEPrecodingMultiuser2014, bjornsonMassiveMIMOImperfect2016}.
However, these approaches are not easily applicable in multi-user downlink scenarios with \gls{qos} constraints.
Alternatively, robust methods based on conic optimization, which require knowledge of the error distribution and are often computationally costly, have been explored, \eg, \cite{joudehRobustTransmissionDownlink2016, wangOutageConstrainedRobust2014, botrosshenoudaProbabilisticallyConstrainedApproachesDesign2008, wajidRobustDownlinkBeamforming2013, xuRobustPowerControl2015, vucicRobustQoSConstrainedOptimization2009}.
\par
In recent years, \gls{dl} has thus been considered to address \gls{bf} and power allocation in a computationally efficient manner \cite{zhangDeepLearningMobile2019}.
To mitigate concerns regarding the black-box nature of generic \gls{dn} architectures and their generalization capability \cite{shafinArtificialIntelligenceEnabledCellular2020}, recent work turns to the concept of model-aided \gls{dl} and deep unrolling \cite{gregorLearningFastApproximations2010, shlezingerModelBasedDeepLearning2022a}.
By integrating existing domain knowledge and modifying classical algorithms with learnable components, model-aided \gls{dn} architectures offer advantages in training data efficiency, generalization capability and explainability \cite{schynolCoordinatedSumRateMaximization2023}.
This motivates us to tackle energy-efficient and robust hybrid \gls{bf} from the perspective of model-aided \gls{dl}.

\subsection{Related Work}\label{sec:relwork}
Hybrid \gls{bf} is extensively discussed in the literature \cite{molischHybridBeamformingMassive2017}.
The majority of existing non-\gls{dl} approaches decouple \gls{rf} and baseband beamformers and either aim to maximize the sum-rate under power constraints, \eg, \cite{sohrabiHybridDigitalAnalog2016, wangHybridPrecoderCombiner2018}, minimize the MMSE or reconstruct a desired beam pattern, \eg, \cite{ayachSpatiallySparsePrecoding2014, alkhateebChannelEstimationHybrid2014}, see also \cite{elbirTwentyFiveYearsAdvances2023}.
\par
Energy-efficient hybrid \gls{bf} realized by power minimization under \gls{qos} constraints is investigated in \cite{mallaTransmissionStrategiesMultiuser2017, tajallifarQoSAwareHybridBeamforming2021, chenHybridBeamformingData2022, kimHybridBeamformingBased2024, vazquezMultiuserDownlinkHybrid2017, zangPartiallyConnectedHybridBeamforming2020, wenQoSGuaranteedHybridBeamforming2023}.
In \cite{mallaTransmissionStrategiesMultiuser2017, tajallifarQoSAwareHybridBeamforming2021, chenHybridBeamformingData2022} and \cite[Alg.~2]{kimHybridBeamformingBased2024}, the \gls{mu} \gls{miso} and \gls{mu} \gls{mimo} problem is tackled by zero-forcing precoding and block diagonalization.
Under certain conditions, block diagonalization and zero-forcing fully eliminate interference between user signals with digital precoding, thereby simplifying the problem substantially. 
However, when \gls{csi} uncertainty is present, full cancellation cannot be guaranteed, resulting in a degraded \gls{qos} level.
Instead of block diagonalization, the algorithms in \cite{vazquezMultiuserDownlinkHybrid2017, zangPartiallyConnectedHybridBeamforming2020,kimHybridBeamformingBased2024} and \cite[Alg.~1]{wenQoSGuaranteedHybridBeamforming2023} address \gls{qos}-constrained power minimization with a block-iterative approach, where the nested subproblems rely on conic and semidefinite programming. 
In \cite{yePowerMinimizationHybrid2016, hegdeHybridBeamformingLargescale2017}, the analog beams are restricted to a discrete set of codewords, enabling a particularly cost-efficient implementation.
Algorithm~2 of \cite{hegdeHybridBeamformingLargescale2017} employs a greedy correction strategy for codeword selection, whereas \cite{yePowerMinimizationHybrid2016} relaxes the codeword selection problem into a cone program with sparse regularization.
Generally speaking, the corresponding optimization problems of all aforementioned methods assume perfect \gls{csi}.

\par
To the best of our knowledge, the only non-\gls{dl}-based work addressing robust \textit{and} energy-efficient hybrid \gls{bf} is \cite{tajallifarRobustFeasibleQoSAware2024}, which investigates \gls{mu} \gls{mimo} \gls{bf} using a worst-case framework in two variants.
The first applies a cutting-set method that iteratively solves conic optimization problems to construct a set of worst-case channels.
The second reduces the complexity by leveraging block diagonalization, leading to a more conservative allocation of resources.
\par
Considering these trade-offs, model-aided \gls{dl} methods present an opportunity for robust \gls{bf}.
Most literature on fully digital or hybrid \gls{bf} using \gls{dl} focuses on sum-rate maximization under power constraints, since power constraints are straightforward to implement by a projection.
In the aforementioned, robustness is introduced by training with imperfect \gls{csi} data, \eg, \cite{huIterativeAlgorithmInduced2021, jinModelDrivenDeepLearning2023}, by sample averaging \cite{jinLowComplexityJointBeamforming2024}, or by addressing uncertainty via the usage of statistical \gls{csi} \cite{liuRobustDownlinkPrecoding2022, shiDeepLearningBasedRobust2021}.
A projection approach, however, is not suitable for most \gls{qos} constraints due to their nonconvexity or probabilistic formulation.
In \cite{xiaDeepLearningFramework2020}, a model-aided \gls{dl} framework leveraging uplink-downlink duality for \gls{qos}-constrained fully-digital \gls{bf} is proposed.
The authors design a \gls{dn} that predicts the power allocation, which enables the recovery of precoders by the analytic solution structure, but only perfect \gls{csi} is considered.
In \cite{laviLearnRapidlyRobustly2023, wangRobustHybridBeamforming2025}, sum-rate maximization in robust hybrid \gls{bf} without \gls{qos} constraints is tackled by a worst-case max-min approach using unrolled \gls{pgd}.
The authors of \cite{liuBiLevelDeepUnfolding2024} extend the work in \cite{laviLearnRapidlyRobustly2023, wangRobustHybridBeamforming2025} for fully digital \gls{bf} in an integrated sensing and communication system with \gls{qos} constraints.
Thereby, a probabilistic outage constraint is converted into a worst-case error constraint and integrated into the unrolled \gls{dn} architecture via the Lagrangian formulation. %
However, the accuracy of constraint satisfaction during testing remains uncertain.
In \cite{shresthaOptimalSolutionsJoint2023}, worst-case robust energy-efficient \gls{bf} and antenna selection with a branch-and-bound algorithm and semidefinite programming is accelerated by branch selection using a \gls{gnn}. %
\par
\Gls{dl} for \gls{bf} that incorporate probabilistic outage constraints into the training objective are considered in \cite{psomasDesignAnalysisSWIPT2022, yingDeepLearningBasedJoint2024, jangDeepLearningApproach2022, youDataAugmentationBased2021}. 
In \cite{psomasDesignAnalysisSWIPT2022, yingDeepLearningBasedJoint2024, jangDeepLearningApproach2022}, generic \gls{dn} architectures learn robust \gls{bf} policies by embedding the outage constraint into the training objective as a penalty.
Specifically, \cite{psomasDesignAnalysisSWIPT2022} and \cite{yingDeepLearningBasedJoint2024} employ a quantile-based formulation of the constraints for fully digital \gls{mu}-\gls{miso} downlink \gls{bf}, while \cite{jangDeepLearningApproach2022} applies a fixed approximation of the outage probability for power allocation in multiple access.
In contrast, \cite{youDataAugmentationBased2021} combines the quantile-based approach with primal-dual stochastic gradient descent/ascent for robust fully-digital \gls{bf}, thereby avoiding the need to predefine penalty weights.
In \cite{liangDataModelDrivenDeep2024}, a model-aided \gls{dn} is trained to maximize a nominal quantile of the sum-rate, after which the maximum transmit power is bisected until the target \gls{qos} is achieved.
The method does not support instance-dependent \gls{qos} constraints and the inclusion of differing \gls{qos} levels is not straightforward.
We further observe that the approaches in \cite{psomasDesignAnalysisSWIPT2022, yingDeepLearningBasedJoint2024, jangDeepLearningApproach2022, youDataAugmentationBased2021, liangDataModelDrivenDeep2024} are only verified on systems with small arrays or low \gls{sinr} requirements ($\leq \qty{0}{\decibel}$).

\subsection{Contributions}
Motivated by the drawbacks of \cite{psomasDesignAnalysisSWIPT2022, yingDeepLearningBasedJoint2024, jangDeepLearningApproach2022, youDataAugmentationBased2021, liangDataModelDrivenDeep2024}, we study energy-efficient \gls{mu} hybrid downlink \gls{bf} under probabilistic \gls{qos} constraints with model-aided \gls{dl}, which, to our knowledge, has not been previously investigated.
As in \cite{yePowerMinimizationHybrid2016, hegdeHybridBeamformingLargescale2017, heCodebookBasedHybridPrecoding2017}, but in contrast to the classical method for hybrid \gls{bf} with \gls{qos} constraints under imperfect \gls{csi} in \cite{tajallifarRobustFeasibleQoSAware2024}, the columns of the analog beamformer are restricted to a discrete \gls{cb}. %
Our contributions can be summarized as follows:
\begin{itemize}
	\item Leveraging model-aided \gls{dl}, we propose a novel \gls{dn} architecture based on uplink-downlink duality for power minimization and precoding in hybrid \gls{bf} under outage probability constraints.
	In contrast to the framework in \cite{xiaDeepLearningFramework2020} and similar succeeding works, we do not predict the power allocation and instead rely on an adaptive lightweight mapping of \gls{csi} using \glspl{gnn}, thereby avoiding the second-order cone programming or semidefinite programming of classical approaches.
	Compared to previous works on \gls{qos}-constrained \gls{bf} under imperfect \gls{csi} \cite{psomasDesignAnalysisSWIPT2022, yingDeepLearningBasedJoint2024, jangDeepLearningApproach2022, youDataAugmentationBased2021, liangDataModelDrivenDeep2024, liuBiLevelDeepUnfolding2024, shresthaOptimalSolutionsJoint2023}, the resultant models can adapt to various channels conditions with changing \gls{csi} quality, required \gls{sinr} levels, the number of users, and supports instantaneous and statistical \gls{csi}.
	Crucially, our method does not rely on knowledge of the analytical \gls{csi} error distribution as in \cite{liuBiLevelDeepUnfolding2024, shresthaOptimalSolutionsJoint2023}.
	\item In the process, we derive a novel and efficient implicit representation for the underlying uplink \gls{bf} and power minimization problem under \gls{qos} constraints and the corresponding gradient \wrt the \gls{csi}.
	We further prove sufficient conditions for the existence of the gradient.
	\item We propose an annealing-based approach that, unlike the fixed approximation used in \cite{jangDeepLearningApproach2022}, employs an adaptive annealing coefficient to integrate probabilistic constraints into \gls{dn} training.
	The effectiveness of this method is demonstrated in comparison to conventional quantile-based implementations as in \cite{psomasDesignAnalysisSWIPT2022, yingDeepLearningBasedJoint2024, youDataAugmentationBased2021}.
	\item We provide an extensive empirical comparison of the power-outage trade-off of our proposed method against benchmark algorithms.
	We further verify the generalization and adaptation capabilities of our proposed method under varying \gls{csi} quality, \gls{qos} requirements and number of users in the network.
\end{itemize}

\par
Compared to our preliminary work in \cite{schynolCodebookBasedDownlinkBeamforming2024} that examines purely codebook-based \gls{bf}, this work considers hybrid \gls{bf} instead, thereby providing a novel representation of the underlying uplink \gls{bf} problem to enable an efficient gradient computation during \gls{dn} learning.
In addition, we extensively validate our method with 5-fold cross validation, thereby proposing a convergence metric for the constrained training problem to ensure fairness and consistency between runs.
We further provide empirical comparisons between the annealing-based and quantile-based probabilistic constraint implementation. %

\subsection{Paper Structure}
The remainder of the paper is structured as follows.
In Sec.~\ref{sec:model_and_task}, we specify the system model and define the investigated optimization problem.
Building on the greedy hybrid \gls{bf} approach \cite{hegdeHybridBeamformingLargescale2017} reviewed in Sec.~\ref{sec:greedybf_perfcsi}, we present the proposed architecture and its gradient based on a novel implicit representation in Sec.~\ref{sec:prop_robust_greedybf}.
In particular, we discuss training with the proposed annealing-based loss function and provide a technical comparison to the prevailing approach for chance-constrained \gls{qos} constraints.
Sec.~\ref{sec:results} discusses empirical results and Sec.~\ref{sec:conclusion} concludes this paper.

\subsection{Notation}
Scalars, vectors and matrices are denoted as $x$, $\vx$ and $\bX$.
$\matel{\vx}{i}$ and $\matel{\vx}{i,j}$ represent the $i$th and $(i,j)$th element, respectively.
The notation of functions follows the elements of their co-domain.
$\matel{\bX}{i,:}$ and $\matel{\bX}{:, j}$ are the vectors resulting from slicing the $i$th row or $j$th column, while $\matel{\vx}{\neg i}$, $\matel{\bX}{\neg i, :}$ or $\matel{\bX}{:, \neg j}$ are the vector or matrix with the $i$th element, row or column excluded.
Hadamard and Kronecker products are denoted by $\hmul$ and $\kron$.
We define $\prob[]{\cdot}$, $\emprob[]{\cdot}$, $\expect[]{\cdot}$ and $\emexpect[]{\cdot}$ as probability, empirical probability, expected value and sample mean, respectively.
The Jacobi matrix of a vector-valued function $\vf(\vx)$ is denoted as $\jcby[\vx] \vf (\vx)$.
Finally, an underline $\underline{\vx}$ indicates the real-valued version of a complex variable $\vx$.

\section{System Model and Problem Formulation}\label{sec:model_and_task}
\subsection{System Model}\label{sec:system_model}
\begin{figure}
	\newcommand*{\xrf}{1.5}
\newcommand*{\xanlg}{\xrf+1.2}
\newcommand*{\xadd}{\xanlg+0.75}
\newcommand*{\xant}{\xadd+0.5}
\newcommand*{\xuant}{\xant+1.75}
\newcommand*{\xuadd}{\xuant+0.6}
\newcommand*{\xoutgplt}{\xuant+0.75}
\newcommand*{\yrf}{1.2}
\newcommand*{\yue}{1.4}
\newcommand*{\ydesc}{-1.95}

\tikzsetnextfilename{system_model}
\begin{tikzpicture}
	\node[boxgray, minimum height=3.6cm] (bb) at (0.0, 0.0) {$\bB$};
	\draw[lsty, <-] (bb.west) -- ++(-0.5, 0) node[txt, above] {$\oset{\breve{s}_i}_i$}; %
	
	\node[boxgray, minimum height=1.2cm] (rf1) at (\xrf, \yrf) {RF\\Chain};
	\node[boxgray, minimum height=1.2cm] (rfN) at (\xrf, -\yrf) {RF\\Chain};
	\node[txt] (rfdot) at (\xrf, 0) {\tikzvdots};
	
	\draw[lsty] ([yshift=1.2cm]bb.east) -- ([yshift=0mm]rf1.west);
	\draw[lsty] ([yshift=-1.2cm]bb.east) -- ([yshift=0mm]rfN.west);

	\node[anlgcirc, minimum width=4mm] (a11) at (\xanlg, \yrf+0.4) {$\times$};
	\node[anlgcirc, minimum width=4mm] (aM1) at (\xanlg, \yrf-0.4) {$\times$};
	\node[txt] (rfdot) at (\xanlg, 1.2) {\scalebox{0.6}{\textbf{\tikzvdots}}};
	
	\node[anlgcirc, minimum width=4mm] (a1M) at (\xanlg, -\yrf+0.4) {$\times$};
	\node[anlgcirc, minimum width=4mm] (aMM) at (\xanlg, -\yrf-0.4) {$\times$};
	\node[txt] (rfdot) at (\xanlg, -1.2) {\scalebox{0.6}{\textbf{\tikzvdots}}};
	
	\begin{scope}[on background layer]
		\node[nl, fill=midGray, fit=(a11)(aM1), label={[label distance=-0.6mm]above:$\va_1$}] (a1) {};
		\node[nl, fill=midGray, fit=(a1M)(aMM), label={[label distance=-0.6mm]above:$\va_{\numrf}$}] {};
	\end{scope}
	
	\draw[lsty] ([yshift=0.4cm]rf1.east) -- ([yshift=0mm]a11.west);
	\draw[lsty] ([yshift=-0.4cm]rf1.east) -- ([yshift=0mm]aM1.west);
	\draw[lsty] ([yshift=0.4cm]rfN.east) -- ([yshift=0mm]a1M.west);
	\draw[lsty] ([yshift=-0.4cm]rfN.east) -- ([yshift=0mm]aMM.west);

	\draw[lsty] (a11) -- (\xadd+0.1, \yrf+0.4) \adder{0.15} -- (\xant, \yrf+0.4) \antenna{0.5};
	\draw[lsty] (a1M) -| (\xadd+0.1, \yrf+0.4);
	\node[txt] at (\xant+0.05, \yrf+0.15) {$\matel{\vx}{1}$};
	\node[txt] at (\xadd+0.1-0.2, \yrf+0.4-0.2) {\scalebox{0.6}{\textbf{\tikzvdots}}};
	\draw[lsty] (aMM) -- (\xadd-0.1, -\yrf-0.4) \adder{0.15} -- (\xant, -\yrf-0.4) \antenna{0.5};
	\draw[lsty] (aM1) -| (\xadd-0.1, -\yrf-0.4);
	\node[txt] at (\xant+0.05, -\yrf-0.65) {$\matel{\vx}{M}$};
	\node[txt] at (\xadd-0.1-0.2, -\yrf-0.4+0.2) {\scalebox{0.6}{\textbf{\tikzvdots}}};
	\node[txt] (antdot) at (\xant, 0) {\tikzvdots};

	\draw[lsty] (\xuant, \yue) \antenna{0.5} -- (\xuadd, \yue) \adder{0.15};
	\node[txt] (y1) at (\xuant+1.5, \yue) {$\!y_1$};
	\draw[lsty, ->] (\xuadd, \yue) -- (y1);
	\node[txt, inner sep=0.25mm] (n1) at (\xuadd, \yue+0.6) {$n_1$};
	\draw[lsty, ->] (n1) -- (\xuadd, \yue+0.15);
	\node[txt] at (\xuant+0.25, \yue-0.35) {\scriptsize User $1$};
	\draw[lsty] (\xuant, -\yue) \antenna{0.5} -- (\xuadd, -\yue) \adder{0.15};
	\node[txt] (yI) at (\xuant+1.5, -\yue) {$\!y_I$};
	\draw[lsty, ->] (\xuadd, -\yue) -- (yI);
	\node[txt, inner sep=0.25mm] (nI) at (\xuadd, -\yue+0.6) {$n_I$};
	\draw[lsty, ->] (nI) -- (\xuadd, -\yue+0.15);
	\node[txt] at (\xuant+0.25, -\yue-0.35) {\scriptsize User $I$};
	\node[txt] (uantdot) at (\xuant, 0) {\tikzvdots};

	\draw[lsty, ->] (\xant+0.2, 0.2) -- (\xuant-0.2, \yue+0.2) node[txt, midway, sloped, above] {$\vh_1^*$};
	\draw[lsty, ->] (\xant+0.2, -0.2) -- (\xuant-0.2, -\yue+0.2) node[txt, midway, sloped, above] {$\vh_I^*$};

	\draw [lsty, decorate, decoration={brace, amplitude=2mm, mirror}] (-0.4,\ydesc) -- (0.4, \ydesc) node[txt, midway,yshift=-4.5mm]{Digital Precoder};
	\draw [lsty, decorate, decoration={brace, amplitude=2mm, mirror}] (\xrf-0.5,\ydesc) -- (\xadd+0.1, \ydesc) node[txt, midway,yshift=-4.5mm]{Analog Precoder};

\end{tikzpicture}
	\caption{Diagram of the considered hybrid system model.\label{fig:system_model}}
\end{figure}
We consider a \gls{mumiso} downlink channel with an $M$-antenna hybrid digital-analog transmitter, which is equipped with $\numrf \leq M$ \gls{rf} chains, serving $I$ single-antenna receivers as shown in Fig.~\ref{fig:system_model}.
The transmitter allocates a transmit power $p_i\geq 0$ to send symbol $\brs_i\in \Cset$ to each user $i$, where $\expect{\abs{\brs_i}^2} = 1$ and $\expect{\brs_i \brs_j^\ast} = 0$ for $i\neq j$, over a frequency-flat and block-fading channel $\vh_i^* \in \Cset^M$.
\par
For each user $i$, the transmitted symbol is formed using a hybrid beamformer $\bA \vb_i$, where the matrix $\bA = \bmat{\va_1 ~\cdots~\va_{\numrf}}$ is common to all users and consists of $\numrf$ analog beamformers $\va_r\in \cbook$ selected from a discrete finite codebook $\cbook = \uset{\bva_1,\dots,\bva_{\numcb}}$.
The vector $\vb_i \in \Cset^{\numrf}$ is the digital precoder for user $i$ and satisfies $\norm{\vb_i}_2=1$ for $i=1,\dots,I$.
Given the total transmit signal $\vx=\sum_{i=1}^I\sqrt{p_i}\bA\vb_i\brs_i$, the received signal $y_i$ at user $i$ including multi-user interference is \cite{hegdeHybridBeamformingLargescale2017}
\begin{align}
	y_i =  \sqrt{p_i}\vh_i^\He \bA\vb_i \brs_i + \sum_{j \neq i } \sqrt{p_j}\vh_i^\He \bA\vb_j \brs_j  + n_i.
\end{align}
Without loss of generality, $n_i$ is modeled as additive white Gaussian noise with unit variance $\expect[]{\abs{n_i}^2} = 1$. %
Defining $\vp = \bmat{p_1~ \cdots~ p_I}^\T$ and $\bB = \bmat{\vb_1~ \cdots~ \vb_I}^\T$ for convenience, the resulting downlink \gls{sinr} at user $i$ is characterized by
\begin{align}
	\sinrdl_i(\vp, \bB, \bA) &= \frac{p_i \vb_i^\He \bPsi_{i}(\bA) \vb_i}{\sum_{j\neq i} p_j\vb_j^\He\bPsi_{i}(\bA) \vb_j + 1}\label{eq:hybrid_sinr_dl}. %
\end{align}
Here, we define for $i=1, \dots, I$
\begin{equation}
	\bPsi_{i}(\bA) = \bA^\He \bR_i \bA,
\end{equation}
where $\bR_i\succeq \nullmat$.
In case instantaneous \gls{sinr} is considered, $\bR_i = \vh_i\vh_i^\He$ is the outer channel product.
Alternatively, in case statistical \gls{csi} is available, $\bR_i$ can be substituted by the channel covariance $\bR_i = \expect[]{\vh_i\vh_i^\He}$ such that $\bPsi_{i}(\bA)$ is the channel covariance \textit{seen} by the digital precoder, leading to $\sinrdl_i(\vp, \bB, \bA)$ being an approximation of the average \gls{sinr} \cite{shaoSimpleWayApproximate2017}.

\subsection{Energy-efficient Hybrid Beamforming with Exact CSI}
First, the case of availability of exact \gls{csi} at the transmitter is considered.
The goal is to find the power allocation $\vp^\opt$ and corresponding beamformers $\oset{\bA^\opt, \bB^\opt}$ such that the weighted sum-power $\vp^\T \weightvec(\bA, \bB)$ is minimized while a minimum \gls{sinr} $\gamma_i > 0$ at each user $i$ is guaranteed.
The weights $\weightvec(\bA, \bB) \in \Rset^I$ are given as
\begin{equation}
	\matel{\weightvec(\bA, \bB)}{i} = \vb_i^\He \bPsi_{0}(\bA) \vb_i, \label{eq:pow_weight}
\end{equation}
where $\bPsi_{0}(\bA) = \idmat$ if the baseband power shall be minimized as in \cite{hegdeHybridBeamformingLargescale2017}, or $\bPsi_{0}(\bA) = \bA^\He \bA$ if the \gls{rf} power shall be minimized as in, \eg, \cite{vazquezMultiuserDownlinkHybrid2017}.
As we will see in Sec.~\ref{sec:greedybf_perfcsi}, $\bPsi_{0}$ serves a role similar to $\bPsi_{i}$ for $i \neq 0$ in the equivalent virtual uplink system, where it represents the noise covariance.
For ease of notation, we omit the arguments of $\vw$ in the following.
Given a maximum total allocated power $\pmax$, the energy-efficient hybrid \gls{bf} problem is formulated as
\begin{equation}
	\begin{split}
		\min_{\vp, \bA, \bB} \quad& \weightvec^\T \vp\\[-1.5mm]
		\upm{s.t.} \quad & \bA \in \mathcal{A}, \bB \in \cB\\
		& \weightvec^\T \vp \leq \pmax,~~ \vp \geq 0,\\
		& (\forall i)\, \sinrdl_i(\vp, \bB, \bA) \geq \gamma_i,%
	\end{split}\label{eq:problem_dl_exactcsi_hybrid}
\end{equation}
for which we define the set of possible analog beamformer matrices $\bA$ and baseband \gls{bf} matrices $\bB$, respectively, as
\begin{alignat}{2}
	\mathcal{A} &= \big\{ \bmat{\va_1~\cdots~\va_{\numrf}}  &&\big| (\forall r \in \uset{1,\dots,\numrf} )~ \va_r\in \cbook, \nonumber\\ 
	&&&~\rank{\bmat{\va_1~\cdots~\va_{\numrf}}} = \numrf \big\}, \label{eq:analogbf_set}\\
	\cB &= \big\{ \bmat{\vb_1 ~\cdots~\vb_I}  &&\big| (\forall i \in \uset{1,\dots,I}) \nonumber\\
	&&&~ \vb_i \in\Cset^{\numrf} \land \norm{\vb_i}_2 = 1\big\}.
\end{alignat}
Note that in \eqref{eq:analogbf_set} we restrict $\bA$ to a selection of linearly independent codewords.
\subsection{Energy-efficient Hybrid Beamforming with Inexact CSI}
Next, we assume that only inexact \gls{csi}
\begin{align}
	\hbR_i = \bR_i + \tbR_i, \label{eq:imperfect_R}
\end{align}
is available at the transmitter, where $\tbR_i \in \Cset^{M\times M}$ such that $\hbR \succeq \nullmat$ is the \gls{csi} error resulting from, \eg, channel estimation error or feedback quantization.
Furthermore, let $\prdf{\cS}$ denote a probability distribution of system realizations $\cS = \oset{\oset{\bR_i, \tbR_i, \gamma_i, \vxi_i}_{i=1}^I, \pmax}$, where $\vxi_i$ denotes available auxiliary features that encode information about the characteristics of the \gls{csi} estimate.
For instance, $\vxi_i$ may represent the quantization level of the channel feedback or the pilot power used during channel sounding.
Specific choices of $\vxi_i$ are discussed in Sec.~\ref{sec:results}.
In addition, we define a mapping
\begin{equation}
	\cM(\cS) = \oset*{\vp_{\cM}(\cS), \bA_{\cM}(\cS), \bB_{\cM}(\cS)}
\end{equation}
where $\vp_{\cM}(\cS)$, $\bA_{\cM}(\cS)$ and $\bB_{\cM}(\cS)$ are individual mappings from realizations $\ssample$ onto the power allocation $\vp$, the analog beamformers $\bA$ and digital precoders $\bB$, respectively.
Under the scenario distribution characterized by $\prdf{\cS}$, our aim is to find a mapping $\cM(\cdot)$ that, without accessing the ground-truth \gls{csi} $\bR_i$, minimizes the expected allocated power while the \gls{qos} constraints $\sinrdl_i\pdel*{ \cM\oset{\cS}}\geq \gamma_i$ are satisfied with probability $1 - \pout$, where $\pout$ is a nominal outage probability.
Given a mapping $\cM(\cdot)$, satisfaction of this probabilistic \gls{qos} constraint is equivalent to non-positivity of
\begin{align}
	&g_i(\cM, \prdf{\ssample}) \nonumber \\
	&= 1-\pout - \prob[\ssample \sim \prdf{\cS}]{\sinrdl_i\pdel*{ \cM\oset{\cS}}\geq \gamma_i}, \label{eq:prob_qos_constraint}
\end{align}
where the rightmost term is the non-outage probability.
Note that both $\sinrdl_{i}$ and $\gamma_i$ implicitly depend on the realization $\cS$.
The desired robust energy-efficient hybrid \gls{bf} problem, therefore, can be formulated as
\begin{equation}
	\begin{alignedat}{2}
		\min_{\cM} & \quad \expect[\cS\sim \prdf{\cS}]{ \weightvec^\T \vp_\cM({\cS})}\\
		\quad \upm{s.t.} & \quad\!  \big(\forall\cS \in \supp(\prdf{\cS}) \big) \begin{cases}
			\bA_{\cM}({\cS}) \in \mathcal{A}\\
			\bB_{\cM} ({\cS}) \in \cB\\
			\vp_\cM({\cS}) \geq \nullvec \\
			\weightvec^\T\vp_\cM({\cS}) \leq P_\upm{max}
		\end{cases}\\
		&\quad (\forall i)~ g_i(\cM, \prdf{\cS}) \leq 0.
	\end{alignedat}\label{eq:problem_dl_inexactcsi}
\end{equation}
Unlike \gls{qos} constraints based on \textit{expected} values \cite{eisenOptimalWirelessResource2020} or worst-case robust designs \cite{wajidRobustDownlinkBeamforming2013}, outage constraints prevent extreme cases dominating the solution, thus ensuring that the majority of users achieve the target \gls{qos} level $\gamma_i$.

\section{Hybrid Downlink Beamforming and Power Allocation for Perfect CSI}\label{sec:greedybf_perfcsi}\label{sec:hybridperfectcsi_alg}

Before considering robust energy-efficient hybrid \gls{bf} using imperfect \gls{csi}, we first return to the exact \gls{csi} problem \eqref{eq:problem_dl_exactcsi_hybrid}.
In \cite{hegdeHybridBeamformingLargescale2017}, an optimal and two sub-optimal algorithms are proposed to solve the optimization problem.
While \cite{hegdeHybridBeamformingLargescale2017} only considers the case $\weightvec=\onevec$, the generalization to \eqref{eq:pow_weight} is straightforward.
\par
The optimal algorithm evaluates the reduced $\numrf$-port digital \gls{bf} subproblem with fixed $\bA$
\begin{equation}
	\begin{split}
		\min_{\vp, \bB} \quad& \weightvec^\T \vp\\[-1.5mm]
		\upm{s.t.} \quad & \bB \in \cB,\\
		& \weightvec^\T \vp \leq \pmax,~~ \vp \geq 0,\\
		& (\forall i)~ \sinrdl_i(\vp, \bB, \bA) \geq \gamma_i,%
	\end{split}\label{eq:problem_dl_exactcsi_digital}
\end{equation}
for each configuration $\bA \in \mathcal{A}$ (excluding equivalent configurations under permutation). %
While optimal, the growth in the number of configurations renders the optimal algorithm computationally impractical.
Instead, we focus on the sub-optimal greedy correction algorithm in \cite[Alg.~2]{hegdeHybridBeamformingLargescale2017}.
Define %
\begin{equation}
	\oset*{\vp^\opt(\bPsi(\bA)), \bB^\opt(\bPsi(\bA))}=\digbfmap\left(\bPsi(\bA)\right)
\end{equation}
with $\bPsi(\bA) = \oset*{\bPsi_{i}(\bA)}_{i}$ as the mapping onto the solution of the digital \gls{bf} subproblem in \eqref{eq:problem_dl_exactcsi_digital} for some choice of $\bA$.
In each iteration $\ell=1, \dots, \numaly$ of the greedy correction algorithm, the idea is to select a particular analog \gls{rf} chain $r\idx{\ell}\in \uset{1,\cdots, \numrf}$ to update, then obtain $\bA\idx{\ell}$ as the correction
\begin{align}
	\bA_{r\idx{\ell}}\idx{\ell-1}\pdel*{\!\va_{r\idx{\ell}}\idx{\ell}\!}
	\!=\! \bmat*{\va_1\idx{\ell-1}\cdots\va_{r\idx{\ell}-1}\idx{\ell-1}~\va_{r\idx{\ell}}\idx{\ell} ~ \va_{r\idx{\ell}+1}\idx{\ell-1}\cdots \va_{\numrf}\idx{\ell-1}},
\end{align}
where the updated beam $\va_{r\idx{\ell}}\idx{\ell}$ is chosen such that it minimizes the total allocated power, \ie, $\va_{r\idx{\ell}}\idx{\ell} = \bva_{a\idx{\ell}}$ with
\begin{equation}
	a\idx{\ell} = \argmin_{\substack{a \in \uset{1,\dots,\numcb}}}  \weightvec^\T\vp^\opt \pdel*{\bPsi\pdel*{\bA\idx{\ell-1}_{r\idx{\ell}}(\bva_a)}}.
\end{equation}
Since $ \weightvec^\T\vp^\opt \pdel*{\bPsi\pdel*{\bA\idx{\ell}}} \leq \weightvec^\T\vp^\opt \pdel*{\bPsi\pdel*{\bA\idx{\ell-1}}}$ for any sequence of $\oset{r\idx{\ell}}_\ell$ with feasible initialization, the algorithm converges.
\par
The \gls{bf} subproblem in \eqref{eq:problem_dl_exactcsi_digital} is established, and several algorithms are proposed in the literature to find a feasible solution \cite{songNetworkDualityMultiuser2007,rashid-farrokhiJointOptimalPower1998, schubertSolutionMultiuserDownlink2004}.
Commonly, these algorithms are based on the dual virtual uplink problem \cite{songNetworkDualityMultiuser2007}
\begin{equation}
	\begin{split}
		\min_{\vq, \bB} \quad& \onevec^\T\vq\\[-1.5mm] %
		\upm{s.t.} \quad & \bB \in \cB\\
		& \onevec^\T\vq \leq \pmax,~~ \vq \geq 0,\\
		& (\forall i)~ \sinrul_i(\vq, \bB, \bA) \geq \gamma_i,\\
	\end{split}\label{eq:problem_ul_exactcsi_digital}
\end{equation}
where $\vq = \bmat{q_1 ~~ \cdots~~ q_I}^\T$ is the uplink power allocation and 
\begin{equation}
	\sinrul_i(\vq, \bB, \bA) = \frac{q_i \vb_i^\He \bPsi_{i}(\bA) \vb_i}{\sum_{j\neq i} q_j \vb_i^\He\bPsi_{j}(\bA)\vb_i + \vb_i^\He\bPsi_{0}(\bA) \vb_i} \label{eq:hybrid_sinr_ul}
\end{equation}
is the virtual uplink \gls{sinr}, to solve the problem efficiently.
It is shown in, \eg, \cite{songNetworkDualityMultiuser2007} or by slightly modifying the derivations in \cite{schubertSolutionMultiuserDownlink2004}, that problem \eqref{eq:problem_dl_exactcsi_digital} is feasible if and only if problem \eqref{eq:problem_ul_exactcsi_digital} is feasible. 
Further, if optimal points exist, then the weighted downlink power and virtual uplink power at the optimum are equal ($\weightvec^\T\vp^\opt = \onevec^\T\vq^\opt$) and the optimal beamformers $\bB^\opt$ are identical.
Consequently, it suffices to leverage the solution of \eqref{eq:problem_ul_exactcsi_digital}
\begin{equation}
	\oset*{\vq^\opt(\bPsi(\bA)), \bB^\opt(\bPsi(\bA))}=\digbfmapul\left(\bPsi(\bA)\right)
\end{equation}
and update $\bA$ and $\bB$ according to\footnote{If $\bA\idx{\ell-1}_{r\idx{\ell}}(\bva_a) \notin \cA$, we declare $\weightvec^\T\vq^\opt \pdel*{\bPsi\pdel*{\bA\idx{\ell-1}_{r\idx{\ell}}(\bva_a)}} = \infty$.}
\begin{align}
	\bA\idx{\ell} &= \bA\idx{\ell-1}_{r\idx{\ell}}(\bva_{a\idx{\ell}}), \label{eq:analog_bf_selection_ul}\\\allowdisplaybreaks[1]
	\bB\idx{\ell} &= \bB^\opt \pdel*{\bPsi\pdel*{\bA\idx{\ell-1}_{r\idx{\ell}}(\bva_{a\idx{\ell}})}}, \\\allowdisplaybreaks[1]
	\text{with} \quad
	a\idx{\ell} &= \argmin_{\substack{a \in \uset{1,\dots,\numcb}}} %
	\onevec^\T\vq^\opt \pdel*{\bPsi\pdel*{\bA\idx{\ell-1}_{r\idx{\ell}}(\bva_a)}} . 
\end{align}
Finally, utilizing $\bB\idx{\numaly}$, the downlink power allocation $\vp^\star$ can be easily recovered by solving
\begin{align}
	\bC_{\upm{C}} (\bB\idx{\numaly}, \bA\idx{\numaly})\vp^\opt = \onevec,
\end{align}
where the coupling relationship is described by the matrix $\bC_\upm{C}(\bB, \bA) \in \Rset^{I\times I}$ with \cite{schubertSolutionMultiuserDownlink2004}
\begin{align}
	[\bC_\upm{C}(\bB, \bA)]_{i,j}& =\begin{cases}
		\gamma_i^{-1}\vb_i^\He \bPsi_{i}(\bA)\vb_i & \text{~~for~~$i=j$.}\\
		-\vb_j^\He \bPsi_{i}(\bA) \vb_j & \text{~~for~~$i\neq j$,}\\
	\end{cases} \label{eq:couplingI_matrix}
\end{align}

\par
In this work, we solve \eqref{eq:problem_ul_exactcsi_digital} by a variation of \cite[Alg.~2]{schubertSolutionMultiuserDownlink2004} that replaces the unity uplink noise variance by the term $\vb_i^\He\bPsi_{0}(\bA) \vb_i$ resulting from weighting the downlink power.

\section{Robust Unrolled Hybrid Downlink Beamforming and Power Allocation}\label{sec:prop_robust_greedybf}
\subsection{Proposed Model Architecture}\label{sec:hybrid_architecture}
\begin{figure*}[ht]
	\centering
	\newcommand\gnncoordy{0.75}

\newcommand\fax{0}
\newcommand\fay{-1.2}

\tikzset{external/export next=false} %
\begin{tikzpicture}[]
	
	\node[boxgray] (m1) at (0.6, 0) {$\cF_{\upm{A}}$};
	\node[boxgray, fill=lightBlue] (m2) at (2.5, 0) {$\cF_{\upm{A}}$};
	\node[txt] (mdot) at (4.0, 0) {$\cdots$};
	\node[boxgray] (m3) at (6.0, 0) {$\cF_{\upm{A}}$};
	\node[boxgray] (m4) at (8.0, 0) {\eqref{eq:dlpow_calc_apx}};
	\node[boxgray] (m5) at (10.0, 0) {\eqref{eq:pow_projection}};
	\node[txt] (i0) at (-1.4, 0) {$\bA\idx{0}$};
	\node[txt] (iL) at (11.5, 0) {$\bA_{\cM}$\\[-1mm]$\bB_{\cM}$\\[-1mm]$\vp_{\cM}$};
	
	\draw[->, lsty] ([yshift=0mm]i0) -- ([yshift=0mm]m1.west);
	\draw[->, lsty] ([yshift=0mm]m1.east) -- ([yshift=0mm]m2.west) node[midway, above, txt] {$\bA\idx{1}$};
	\draw[-, lsty] ([yshift=0mm]m2.east) -- ([yshift=0mm]mdot.west) node[midway, above, txt] {$\bA\idx{2}$};
	\draw[->, lsty] ([yshift=0mm]mdot.east) -- ([yshift=0mm]m3.west) node[midway, above, txt] {$\bA\idx{\numaly-1}$};
	\draw[->, lsty] ([yshift=0mm]m3.east) -- ([yshift=0mm]m4.west) node[midway, above, txt] {$\bA\idx{\numaly}$\\[-1mm]$\bB\idx{\numaly}$};
	\draw[->, lsty] ([yshift=0mm]m4.east) -- ([yshift=0mm]m5.west) node[midway, above, txt] {$\bA\idx{\numaly}$\\[-1mm]$\bB\idx{\numaly}$\\[-1mm]$\vp^\opt$};
	\draw[->, lsty] ([yshift=0mm]m5.east) -- ([yshift=0mm]iL.west);
	
	\node[boxgray, fill=lightRed] (gnn) at (-1.75, \gnncoordy) {$\gcnn(\cdot, \vtheta)\!$};
	\node[txt] (gnnin) [left=0.5 of gnn] {$~\bZ\idx{0}$\\[1mm]$\bS$};
	\node[txt] (gnndots) [above=\gnncoordy of mdot.center, anchor=center] {$\cdots$};
	\node[bdot] (gnnd1) [above=\gnncoordy of m1.center, anchor=center] {};
	\node[bdot] (gnnd2) [above=\gnncoordy of m2.center, anchor=center] {};
	\node[bdot] (gnnd3) [above=\gnncoordy of m3.center, anchor=center] {};
	
	\draw[->, lsty] ([yshift=0mm]gnnin.east) -- (gnn.west);
	\draw[->, lsty] ([yshift=0mm]gnn.east) -| (m1.north) node[near start, above, txt] {$\bZ\idx{\numgnnly}$};
	\draw[->, lsty] ([yshift=0mm]gnn.east) -| (m2.north);
	\draw[-, lsty] ([yshift=0mm]gnn.east) -- (gnndots.west);
	\draw[->, lsty] ([yshift=0mm]gnndots.east) -| (m3.north);
	\draw[->, lsty] ([yshift=0mm]gnndots.east) -| (m4.north);

	\node[boxgray] (r1) at (\fax, \fay) {$\bA_{r\idx{2}}(\bva_1)$};
	\node[txt] (rdot) [below=0.7 of r1.center, anchor=center] {\tikzvdots};
	\node[boxgray] (rB) [below=0.7 of rdot.center, anchor=center] {$\bA_{r\idx{2}}(\bva_\numcb)$};
	
	\node[boxgray] (chest1) [right=2.5 of r1.center, anchor=center] {$\eqref{eq:est_ch_model}$};
	\node[txt] (chestdot) [right=2.5 of rdot.center, anchor=center] {\tikzvdots};
	\node[boxgray] (chestB) [right=2.5 of rB.center, anchor=center] {$\eqref{eq:est_ch_model}$};
	
	\draw[->, lsty] ([yshift=0mm]r1.east) -- (chest1.west);
	\draw[->, lsty] ([yshift=0mm]rB.east) -- (chestB.west);
	
	\node[boxgray] (fd1) [right=2 of chest1.center, anchor=center] {$\digbfmapul$};
	\node[txt] (fddot) [right=2 of chestdot.center, anchor=center] {\tikzvdots};
	\node[boxgray] (fdB) [right=2 of chestB.center, anchor=center] {$\digbfmapul$};
	
	\draw[->, lsty] ([yshift=0mm]chest1.east) -- (fd1.west) node[midway, above, txt] {$\hbPsi$};
	\draw[->, lsty] ([yshift=0mm]chestB.east) -- (fdB.west) node[midway, above, txt] {$\hbPsi$};
	
	\node[boxgray, minimum height=2.2cm] (min) [right=3.5 of fddot.center, anchor=center] {Select\\$\argmin_a$\\$ \norm{\vq\idx{2,a}}_1$};
	\draw[->, lsty] ([yshift=0mm]fd1.east) -- ([yshift=0.7cm]min.west) node[txt, midway, above] {$\vq\idx{2, 1}$,$\bB\idx{2, 1}$};
	\draw[->, lsty] ([yshift=0mm]fdB.east) -- ([yshift=-0.7cm]min.west) node[txt, midway, above] {$\vq\idx{2, \numcb}$,$\bB\idx{2, \numcb}$};
	
	\node[txt] (fain) [left=2.0 of r1.center] {$\bA\idx{1}$};
	\node[bdot] (fainbdot) [left=1.3 of r1.center, anchor=center] {};
	\node[txt] (faindots) [left=1.3 of rdot.center, anchor=center] {\tikzvdots};
	
	\draw[->, lsty] ([yshift=0mm]fain.east) -- (fainbdot) -- (r1.west);
	\draw[-, lsty] (fainbdot) -- (faindots.north);
	\draw[->, lsty] (faindots) |- (rB.west);
	
	\node[txt] (faout) [right=1.5 of min.center] {$\bA\idx{2}$\\[-1mm]$\bB\idx{2}$};
	\draw[->, lsty] (min.east) -- (faout);
	
	\node[txt] (zin) [above=0.3 of fain.center] {$\bZ\idx{\numgnnly}$};
	\node[bdot] (zbdot) [above left=0.15 and 0.8 of chest1.center, anchor=center] {};
	\node[txt] (zdots) [left=0.8 of chestdot.center, anchor=center] {\tikzvdots};
	\draw[-_, lsty] (zin) -| (zbdot.center);
	\draw[->, lsty] (zbdot) -- ([yshift=0.15cm]chest1.west);
	\draw[-, lsty] (zbdot) -- (zdots);
	\draw[->, lsty] (zdots) |- ([yshift=0.15cm]chestB.west);
	
	\node[txt] (ziny) [above=0.3 of r1.center] {}; %
	
	\begin{scope}[on background layer]
		\node[nl, fit=(fainbdot)(r1)(rB)(min)(ziny)] {};
	\end{scope}

\end{tikzpicture}
	\caption{Block diagram of the proposed \gls{dn} architecture $\cM(\cS; \vtheta)$ for outage-constrained hybrid \gls{bf}. $\cF_{\upm{A}}$ represent one analog beamformer selection step, while $\digbfmapul$ represents solving the virtual uplink problem \eqref{eq:problem_ul_inexactcsi_digital}. \label{fig:hybrid_unrolled}}
\end{figure*}
We return to the case of imperfect \gls{csi}.
Classical robust \gls{bf} methods typically handle \gls{csi} uncertainty by explicit substitution of $\bR_i$ with $\hbR_i - \tbR_i$ in the \gls{qos} model. %
For the downlink \gls{sinr}, for instance, this results in
\begin{align}
	&\sinrdlest_i\pdel{\vp, \bB, \bA} \nonumber\\
	&\quad= \frac{p_i \pdel{\vb_i^\He \bA^\He\hbR_i\bA\vb_i -\vb_i^\He \bA^\He\tbR_i\bA\vb_i} }{\sum_{j\neq i} p_j \pdel{\vb_j^\He \bA^\He\hbR_i\bA\vb_j - \vb_j^\He \bA^\He\tbR_i\bA\vb_j} + 1}. \label{eq:hybrid_sinr_dl_error}
\end{align}
If the probability distribution of the error is assumed to be known and tractable, \eg, Gaussian or uniform, the probabilistic constraints can be translated into second-order cone or semidefinite constraints by finding suitable bounds of the error terms involving $\tbR_i$ (\cf \cite{wangOutageConstrainedRobust2014, botrosshenoudaProbabilisticallyConstrainedApproachesDesign2008}).
Albeit these approaches control outage jointly across multiple users, the error distribution is typically not known in practice and the respective convex optimization problems incur a high computational cost.
Alternatively, a computationally efficient approach for introducing robustness is to utilize a deterministic approximation $\bE_i \approx \tbR_i$ of the error, \eg, diagonal loading \cite{ bjornsonOptimalMultiuserTransmit2014, nguyenMMSEPrecodingMultiuser2014, bjornsonMassiveMIMOImperfect2016}.
Effectively, a \textit{virtual} channel is considered.
However, finding approximations that realize particular outage levels in \gls{mu} systems and, at the same time, are efficient \wrt the allocated power is not straightforward.
\par
We therefore propose a model-aided \gls{dn} architecture $\cM(\cS; \vtheta) = (\vp_{\cM}(\cS; \vtheta), \bA_{\cM}(\cS; \vtheta), \bB_{\cM}(\cS; \vtheta))$ with parameters $\vtheta$ based on the greedy correction algorithm \cite[Alg.~2]{hegdeHybridBeamformingLargescale2017} that is reviewed in Sec.~\ref{sec:hybridperfectcsi_alg}.
Inspired by the deterministic approximation approach, it leverages a dynamic \gls{gnn}-based representation of $\bE_i$ that adapts to scenario instances, thus implicitly learning the error distribution from data.
\par
Specifically, we consider the \gls{qos} model
\begin{align}
	\sinrdlest_i(\vp, \bB, \bA) &= \frac{p_i \vb_i^\He \hbPsi_{i,i}(\bA)\vb_i}{\sum_{j\neq i} p_j\vb_j^\He \hbPsi_{i,j}(\bA)\vb_j + 1},\\
	\sinrulest_i(\vq, \bB, \bA) &= \frac{q_i \vb_i^\He \hbPsi_{i,i}(\bA)\vb_i}{\sum_{j\neq i} q_j\vb_i^\He \hbPsi_{j,i}(\bA)\vb_i + \vb_i^\He\hbPsi_{0, i}(\bA) \vb_i} \label{eq:mod_hybrid_ulsinr},
\end{align}
where positive semidefinite virtual channel covariances $\hbPsi_{i,j}$ for $i = 1, \dots, I$ and $j = 1, \dots, I$ are given as
\begin{align}
	&\hbPsi_{i,j}(\bA) \!=\! \begin{cases}
		\! \bA^\He \!\left(z_{i,1} \hbR_i  + z_{i,2} \frac{\trace{\hbR_i}}{M}  \idmat\right)\!\bA& \!\!\!\text{for $i=j$,}\\
		\! \bA^\He \!\left(z_{i,3} \hbR_i + z_{i,4} \frac{\trace{\hbR_i}}{M}  \idmat\right)\!\bA & \!\!\!\text{for $i\neq j$.}
	\end{cases}\label{eq:est_ch_model}
\end{align}
For $i=0$ we simply choose the original uplink noise covariance $\hbPsi_{0,i}(\bA) = \bPsi_{0}(\bA)$.
The coefficients $z_{i,1} > 0$ and  $z_{i,3} > 0$ model error components proportional to the quadratic forms $\vb_j^\He \bA^\He \hbR_i \bA \vb_j$ in \eqref{eq:hybrid_sinr_dl_error}, \ie, the correlation between beamformers and channel estimates. 
The remaining coefficients $z_{i,2}>0$ and $z_{i,4}>0$ model error components that are independent of the spatial structure of the channel estimate and are only proportional to the gain $\trace{\hbR_i}$.
Note that we allow separate error approximations for the wanted and interfering signals to account for the dependence on the beamformers $\vb_j$ of interfering users.
\par
The coefficients matrix $\bZ_\upm{out} \in \Rset^{I\times F_\upm{out}}$ with $F_\upm{out} = 4$ and entries $\matel{\bZ_\upm{out}}{i, f} = z_{i,f}$ representing the error components in \eqref{eq:est_ch_model} are estimated in an instance-adaptive manner by a \gls{gcnn} $\gcnn(\,\cdot\,;\vtheta): \Rset^{I\times F\idx{0}} \times \Rset^{I \times I} \to \Rset ^{I\times F\idx{\numgnnly}}$ \cite{kipfSemiSupervisedClassificationGraph2017} with $\numgnnly$ layers, where $F\idx{0} = F_\upm{in}$ is the number of input features and $F\idx{\numgnnly} = F_\upm{out}$.
The transformation of latent features by the \gls{gcnn} $\gcnn = \gcnnly\idx{\numgnnly} \circ \cdots \circ \gcnnly\idx{1}$ is characterized by the nonlinear map
\begin{align}
	\bZ\idx{\ell'} &= \gcnnly\idx{\ell'}\big(\bZ\idx{\ell'-1}, \bS; \vtheta\idx{\ell'}\big) \nonumber\\
	&= \phi\idx{\ell'} \Bigg(\sum_{f=0}^{\numgnnfilt} \bS^f \bZ\idx{\ell'-1} \bTheta_f\idx{\ell'} + \onevec \left(\vtheta_{\upm{b}}\idx{\ell'}\right)^\T\Bigg),
\end{align}
where $\ell'$ is the \gls{gcnn} layer index, $\phi\idx{\ell'}$ is an elementwise nonlinearity, $\vtheta\idx{\ell'} = (\bTheta_0\idx{\ell'}, \dots, \bTheta_{\numgnnfilt}\idx{\ell'}, \vtheta_{\upm{b}}\idx{\ell'})$ are learnable weights with $\bTheta_f\idx{\ell'} \in \Rset^{F\idx{\ell'-1} \times F\idx{\ell'}}$ and $\vtheta_{\upm{b}}\idx{\ell'} \in \Rset^{\idx{\ell'}}$, and $\numgnnfilt$ is the filter degree.
A \gls{gcnn}, which is a type of \gls{gnn}, is a \gls{dn} architecture tailored for graph-structured data.
Each row $\matel{\bZ\idx{\ell'}}{i, :}$ can be interpreted as the (latent) feature vector of a node of a graph, in this case, the users of the system.
Here, the shift operator $\bS$ encodes the associated graph topology, specifically, the estimated correlation coefficient between channels of users
\begin{align}
	\matel{\bS}{i_1,i_2} = \begin{cases}
		\frac{\abs*{\trace{\hbR_{i_1} \hbR_{i_2}}}}{\fnorm*{\hbR_{i_1}}\fnorm*{\hbR_{i_2}}} & \text{for~}i_1\neq i_2,\\
		0 & \text{for~}i_1=i_2.
	\end{cases}
\end{align}
We define the input feature vector per user as
\begin{equation}
	\matel{\bZ\idx{0}}{i, :} = \bmat*{\log\fnorm{\hbR_i}^2~~\log\gamma_i~~\log\vxi_i^\T~~\log\pmax}^\T. \label{eq:in_features}
\end{equation}
\par
On the one hand, the topology defined by the shift operator enables the \gls{gcnn} to predict the interference-related coefficients $z_{i,3}$ and $z_{i,4}$ associated with pairs of users.
Compared to alternative \gls{dn} architectures, \glspl{gnn}, and thus \glspl{gcnn} by extension, have the advantage that they can be applied to similar graph-structured data with any number of nodes (users) given the same set of learnable weights \cite{battagliaRelationalInductiveBiases2018}.
Moreover, the mapping provided by \glspl{gnn} is \textit{permutation equivariant}, guaranteeing that if the input graph is permuted, \ie, the users are relabeled, the output does not change except for the same permutation \cite{gamaGraphsConvolutionsNeural2020}.
This is a necessary property for resource allocation communication systems, where we expect that permutations of users or antennas should lead to corresponding permutations of the optimal resources.
On the other hand, the input feature vector enables adaptation to the realization of the channel state, the \gls{csi} quality and the \gls{qos} requirements.
The logarithmic mapping compresses the value range and facilitates generalization across different scales of input features.
\par
The integration of the \gls{gcnn} into the greedy correction algorithm is illustrated in Fig.~\ref{fig:hybrid_unrolled}, thereby denoting one iteration of the analog beamformer selection as $\cF_{\upm{A}}$.
In $\digbfmapul$, instead of solving Problem~\eqref{eq:problem_ul_exactcsi_digital}, we substitute $\sinrulest_i$ into $\sinrul_i$ and omit the power constraint $\onevec^\T \vq \leq \pmax$.
In particular, we solve
\begin{equation}
	\begin{alignedat}{2}
		\min_{\vq \geq \nullvec, \bB\in \cB} \quad& \onevec^\T \vq \\[-1.5mm]
		\upm{s.t.} \quad & (\forall i)~ \sinrulest_i(\vq, \bB, \bA) \geq \gamma_i,%
	\end{alignedat}\label{eq:problem_ul_inexactcsi_digital}
\end{equation}
where $\sinrulest_i$ is defined in \eqref{eq:mod_hybrid_ulsinr}.
The downlink power allocation $\vp^\opt$ is recovered by solving
\begin{equation}
	\hbC_\upm{C}(\bB\idx{\numaly}, \bA\idx{\numaly})\vp^\opt = \onevec, \label{eq:dlpow_calc_apx}
\end{equation}
where $\hbC_\upm{C}(\bB\idx{\numaly}, \bA\idx{\numaly})$ is \eqref{eq:couplingI_matrix} with $\hbPsi_{i,j}$ accordingly substituted for $\bPsi_{i}$, \ie,
\begin{align}
	[\hbC_\upm{C}(\bB, \bA)]_{i,j}& =\begin{cases}
		\gamma_i^{-1}\vb_i^\He \hbPsi_{i, i}(\bA)\vb_i & \text{~~for~~$i=j$.}\\
		-\vb_j^\He \hbPsi_{i, j}(\bA) \vb_j & \text{~~for~~$i\neq j$.}\\
	\end{cases}\label{eq:apx_coupling_matrix}
\end{align}
Note that since $\hbPsi$ is constant across iterations, as in \cite[Alg.~2]{hegdeHybridBeamformingLargescale2017}, the total allocated power is a decreasing sequence over iterations $\ell$ if the initial point is feasible.
Finally, to guarantee feasibility \wrt the power constraints, $\vp^\opt$ is projected onto the feasible set by%
\begin{align}
	\vp_\cM(\ssample) = \frac{\pmax}{\pmax + \bdel*{\weightvec^\T[\vp^\opt]_0^\infty - \pmax}_0^\infty }[\vp^\opt]_0^\infty, \label{eq:pow_projection}
\end{align}
where the notation $[\vx]_0^\infty$ indicates an elementwise projection onto the interval $[0, \infty)$.

\subsection{Initialization}
A \textit{good} initialization of the analog configuration $\bA\idx{0}$ is essential for the existence of a solution to the optimization in \eqref{eq:problem_ul_inexactcsi_digital}.
Otherwise, the subproblems to select codewords $\va_{r\idx{\ell}}=\bva$ may be infeasible for any $\bva \in \cbook$.
In this work, we utilize the heuristic
\begin{align}
	\left(\va_1\idx{0}, \dots, \va_{\numrf}\idx{0}\right) = {}^{\numrf} \!\argmax_{\bva \in \cbook} \sum_{i=1}^I \frac{\bva^\He \hbR_i\bva}{\trace{\hbR_i}} %
\end{align}
that effectively chooses the $\numrf$ beams that maximize the SNR \wrt the sum of normalized channels.

\subsection{Model Gradient} \label{eq:digbfmap_grad}
End-to-end learning of the parameters $\vtheta$ of the proposed model $\cM(\cS; \vtheta)$ described in Sec.~\ref{sec:prop_robust_greedybf}-\ref{sec:hybrid_architecture} requires the Jacobian of the operators that compose the model, see Fig.~\ref{fig:hybrid_unrolled}.
There are two significant obstacles.
First, the beam selection in \eqref{eq:analog_bf_selection_ul} is nondifferentiable.
Secondly, the digital beamformers $\bB^\opt$ and power allocation $\vp^\opt$ result from nested optimization problems.
Since these typically do not admit closed-form solutions, the gradient \wrt $\hbPsi(\bA)$ is not easily computable.

\subsubsection{Beam Selection}
An approximate partial derivative of the selected beam $\va_{r\idx{\ell}}\idx{\ell}$ \wrt the trial uplink power allocations $\vq\idx{\ell, a} = \vq^\opt\pdel*{\hbPsi\pdel*{\bA_{r\idx{\ell}}\idx{\ell-1}(\bva_{a})}}$ can be obtained by smoothing \eqref{eq:analog_bf_selection_ul} into the Softmin-selection \cite{schynolCodebookBasedDownlinkBeamforming2024, agustssonSofttoHardVectorQuantization2017}
\begin{align}
	\va_{r\idx{\ell}}\idx{\ell} =  \frac{\sum_{a=1}^{\numcb}
		\bva_{a} \eul^{-\beta_\upm{M} \norm{\vq\idx{\ell, a}}_1  / q_\upm{min}\idx{\ell}}
	}{
		\sum_{a=1}^{\numcb}\eul^{-\beta_\upm{M} \norm{\vq\idx{\ell, a}}_1 /q_\upm{min}\idx{\ell} }
	},
	\label{eq:beam_selection_apx}
\end{align}
where $\beta_\upm{M}>0$ is an annealing parameter controlling the trade-off between smoothness and bias of the gradient approximation.
The softmin operator approaches the $\argmin$ selection as $\beta_\upm{M} \to \infty$. 
Since the softmin operator is not scale-invariant, we propose an instance-adaptive normalization $q_\upm{min}\idx{\ell} = \min_a \norm{\vq\idx{\ell, a}}_1$ to decouple the difference of exponents belonging to different analog \gls{bf} codewords $\bva_a$ in $\eqref{eq:beam_selection_apx}$ from the absolute scale of the allocated power. %
\par
A disadvantage of smoothing between analog beams, apart from primal infeasibility, is the increased correlation between the combined channels $\hbR_i^{1/2}\va_r\idx{\ell}$.
The reduced spatial separability between users leads to ill-conditioned or infeasible \gls{bf} problems. %
To avoid this undesired effect, we instead utilize a deterministic variant of the Straight-Through Gumbel-Softmax approach \cite{jangCategoricalReparameterizationGumbelSoftmax2017} during training.
In the forward pass, analog beamformers $\va_{r\idx{\ell}}\idx{\ell}$ are selected according to the $\argmin$ selection in \eqref{eq:analog_bf_selection_ul}.
When computing the gradient in the backward pass, however, the partial derivatives $\diff\va_{r\idx{\ell}}\idx{\ell} / \diff\vq\idx{\ell, a}$ are based on the differentiable proxy in \eqref{eq:beam_selection_apx}.
\subsubsection{Digital Uplink Beamforming Problem}
Next, we consider the gradient of the digital uplink \gls{bf} problem in \eqref{eq:problem_ul_inexactcsi_digital}, which is represented by $\digbfmapul$ in Fig.~\ref{fig:hybrid_unrolled}.
Specifically, we discuss the Jacobi matrix of the optimal point $(\vq\idx{\ell,a}, \bB\idx{\ell,a})$ \wrt the arguments $\uhvpsi = \oset{\uhvpsi_{i,j}}_{i,j}$ with
\begin{equation}
	\uhvpsi_{i,j} = \bmat*{\vectd{\hbPsi_{i,j}}^\T ~~ \vectr{\hbPsi_{i,j}}^\T ~~\vecti{\hbPsi_{i,j}}^\T}^\T,
\end{equation}
where the subscripts $\upm{d}$, $\upm{re}$ and $\upm{im}$ indicate the diagonal, off-diagonal real and imaginary elements, respectively, thereby accounting for the Hermitian structure of virtual channel covariance $\hbPsi_{i,j}$.
Efficient algorithms for solving Problem~\eqref{eq:problem_ul_inexactcsi_digital} in the case of general \gls{psd} channels $\hbPsi_{i,j}$ rely on cascaded eigendecompositions \cite{schubertSolutionMultiuserDownlink2004}.
Since the derivative of the eigenvectors is undefined in case of repeated eigenvalues \cite{magnusDifferentiatingEigenvaluesEigenvectors1985}, a straightforward truncation and subsequent backpropagation through the underlying algorithm leads to numerical instability.
\par
We therefore seek a method for obtaining the gradient that is independent of the underlying solver. 
Given a parameterized convex optimization problem with differentiable objective and constraint functions, under suitable regularity conditions (\ie, if strong duality holds), the optimal points are identifiable as the solution to the \gls{kkt} conditions \cite{boydConvexOptimization2004}.
By summarizing the first-order stationarity condition, potential equality conditions and complementary slackness conditions of the \gls{kkt} conditions as a real-valued \gls{snle} $\vg(\bvzeta, \uhvpsi)=\nullvec$, where $\bvzeta$ collects all primal and dual optimization variables, the implicit function theorem \cite{amannAnalysisII2008} can be applied to obtain the gradient if it exists.
For these cases, automatic differentiation tools have been proposed \cite{agrawalDifferentiableConvexOptimization2019, blondelEfficientModularImplicit2022}.
\par
Applying such implicit differentiation to Problem~\eqref{eq:problem_ul_inexactcsi_digital} is not straightforward due to its nonconvexity.
It is possible, however, to resort to the established convex reformulation of its dual downlink problem that leverages the semidefinite relaxation of $p_i \vb_i \vb_i^\He = \bvb_i \bvb_i ^\He$ to $ \breve{\bB}_i$, where $\bvb_i = \sqrt{p_i}\vb_i$ \cite{gershmanConvexOptimizationBasedBeamforming2010}:
\begin{equation}
	\begin{alignedat}{2}
		\min_{\oset{\breve{\bB}_i}_{i=1}^{I}} &&& \sum_{i=1}^{I}\trace*{\hbPsi_{0, i}\breve{\bB}_i}\\
		\quad \upm{s.t.} ~~ &&& (\forall i)\; \breve{\bB}_i \succeq \nullvec,\\
		&&& (\forall i)\; \gamma_i^{-1}\trace*{\hbPsi_{i,i}\breve{\bB}_i} \!-\! \sum_{j\neq i} \trace*{\hbPsi_{i,j}\breve{\bB}_j} \!\geq 1.%
	\end{alignedat} \label{eq:problem_dl_convex}
\end{equation}
\par
The corresponding first-order stationary condition can be absorbed into the dual variables $\bLambda_i$ for all $i$, that are associated with the \gls{psd} constraints in \eqref{eq:problem_dl_convex}, as 
\begin{equation}
	\bLambda_i(\vq, \uhvpsi) = \hbPsi_{0, i} - \gamma_i^{-1} q_i\hbPsi_{i,i} + \sum_{j\neq i} q_j \hbPsi_{j,i}, \label{eq:rankdef_interf_mat}
\end{equation}
where the virtual uplink power $q_i$ is the dual variable corresponding to the $i$th inequality constraint.
Leaving dual feasibility implicit, \ie, $\bLambda_i(\vq, \uhvpsi) \succeq \nullmat$ and $q_i\geq 0$, the \gls{kkt} conditions thus furnish the following \gls{snle}:
\begin{alignat}{2}
	(\forall i)\,&& \nullmat &= \bLambda_i(\vq, \uhvpsi) \breve{\bB}_i, \label{eq:kkt_convex_psd}\\
	(\forall i)\,&& 0 &= q_i\bigg(\! 1 - \gamma_i^{-1}\! \trace*{\!\hbPsi_{i,i} \breve{\bB}_i\!} \!+\! \sum_{j\neq i}\trace*{\!\hbPsi_{i,j} \breve{\bB}_j\!} \! \bigg) .\!\label{eq:kkt_convex_sinr}
\end{alignat}
\par
Albeit the implicit function theorem is now applicable if \eqref{eq:kkt_convex_psd}-\eqref{eq:kkt_convex_sinr} is formulated in terms of real values, the semidefinite relaxation substantially inflates the number of variables and, as a consequence, the cost of computing the gradient. 
In the following, to mitigate this problem, we therefore reduce \eqref{eq:kkt_convex_psd}-\eqref{eq:kkt_convex_sinr} into a suitable and more efficient representation.
To begin with, note that if a solution $(\breve{\bB}_i^\opt, q_i^\opt)_i$ to Problem \eqref{eq:problem_dl_convex} exists, where $\oset{q_i^\opt}_i$ are the corresponding dual variables, a solution with rank-1 $\breve{\bB}_i^\opt$ for all $i$ can always be found \cite[Th.~3.2]{huangRankConstrainedSeparableSemidefinite2010}. 
As such, the following equivalence is straightforward to show using the \gls{kkt} conditions in \eqref{eq:kkt_convex_psd}-\eqref{eq:kkt_convex_sinr}:
\begin{lemma} \label{lem:kkt_equivalence}
	 The point $\oset{\breve{\bB}_i^\opt, q_i^\opt}_i$ is a solution to Problem~\eqref{eq:problem_dl_convex} for some $\uhvpsi$ if and only if $(\breve{\vb}_i^\opt,q_i^\opt)_i$ with $\breve{\bB}_i^\opt= \bvb_i^\opt(\bvb_i^\opt)^\He$ is a dual feasible solution of the \gls{snle}
	 \begin{alignat}{2}
	 	(\forall i)\,&& \nullvec &= \vg_{\upm{b}, i}(\bvzeta, \uhvpsi) = \bLambda_i(\vq, \uhvpsi) \bvb_i,\label{eq:kkt_nonconvex_psd}\\
	 	(\forall i)\,&& 0 &= g_{\upm{q}, i}(\bvzeta, \uhvpsi) = q_i\bigg( \! 1 \!-\! \gamma_i^{-1} \bvb_i^\He\hbPsi_{i,i}\bvb_i + \! \sum_{j\neq i}\bvb_j^\He\hbPsi_{i,j} \bvb_j \!\bigg). \label{eq:kkt_nonconvex_sinr}\\[-8mm]\nonumber
	 \end{alignat}%
\end{lemma}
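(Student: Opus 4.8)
The plan is to read the statement as a rank-one reduction of the \gls{kkt} system of the convex problem \eqref{eq:problem_dl_convex}. Since \eqref{eq:problem_dl_convex} is convex and, whenever it is feasible, admits a Slater point so that strong duality holds, its global solutions coincide exactly with its \gls{kkt} points, i.e.\ with the tuples satisfying primal feasibility ($\breve{\bB}_i \succeq \nullmat$ and the \gls{sinr} inequalities), dual feasibility ($\bLambda_i \succeq \nullmat$ and $q_i \geq 0$, with $\bLambda_i$ as in \eqref{eq:rankdef_interf_mat}), stationarity (already absorbed into $\bLambda_i$ by construction), and the complementary slackness relations \eqref{eq:kkt_convex_psd}--\eqref{eq:kkt_convex_sinr}. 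I would therefore prove the lemma by showing that, once $\breve{\bB}_i$ is restricted to the rank-one form $\bvb_i\bvb_i^\He$, this \gls{kkt} system becomes \emph{verbatim} the dual feasibility conditions together with the \gls{snle} \eqref{eq:kkt_nonconvex_psd}--\eqref{eq:kkt_nonconvex_sinr}. The cited rank-one existence result \cite[Th.~3.2]{huangRankConstrainedSeparableSemidefinite2010} guarantees that this restriction is without loss of generality, so the rank-one representative used for the subsequent gradient computation still captures an optimizer and the equivalence is not vacuous.

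For the ``only if'' direction I would start from a rank-one solution $\oset{\breve{\bB}_i^\opt, q_i^\opt}_i$ and carry out two reductions. First, because $\bLambda_i \succeq \nullmat$ by dual feasibility and $\breve{\bB}_i^\opt = \bvb_i^\opt(\bvb_i^\opt)^\He \succeq \nullmat$, the matrix relation $\bLambda_i \breve{\bB}_i^\opt = \nullmat$ in \eqref{eq:kkt_convex_psd} is equivalent to $\trace{\bLambda_i \breve{\bB}_i^\opt} = (\bvb_i^\opt)^\He \bLambda_i \bvb_i^\opt = 0$, and positive semidefiniteness of $\bLambda_i$ then forces $\bLambda_i \bvb_i^\opt = \nullvec$, which is \eqref{eq:kkt_nonconvex_psd}. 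Second, the cyclic property of the trace collapses each $\trace{\hbPsi_{i,j}\breve{\bB}_j^\opt}$ to the quadratic form $(\bvb_j^\opt)^\He \hbPsi_{i,j}\bvb_j^\opt$, turning \eqref{eq:kkt_convex_sinr} into \eqref{eq:kkt_nonconvex_sinr}. Dual feasibility transfers unchanged, since $\bLambda_i$ in \eqref{eq:rankdef_interf_mat} depends only on $(\vq, \uhvpsi)$ and not on the $\breve{\bB}_i$.

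For the ``if'' direction I would reverse these manipulations: setting $\breve{\bB}_i^\opt = \bvb_i^\opt(\bvb_i^\opt)^\He$ immediately gives $\breve{\bB}_i^\opt \succeq \nullmat$, post-multiplying \eqref{eq:kkt_nonconvex_psd} by $(\bvb_i^\opt)^\He$ recovers the matrix complementary slackness \eqref{eq:kkt_convex_psd}, and the same trace identity lifts \eqref{eq:kkt_nonconvex_sinr} back to \eqref{eq:kkt_convex_sinr}; dual feasibility is assumed. The only \gls{kkt} condition not produced by these mechanical steps is \emph{primal feasibility} of the \gls{sinr} inequalities, and this is the step I expect to be the main obstacle, since the reductions themselves are routine. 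I would close the gap by establishing $q_i^\opt > 0$ for every $i$: if $q_i^\opt = 0$, then \eqref{eq:rankdef_interf_mat} reduces to $\bLambda_i = \hbPsi_{0, i} + \sum_{j\neq i} q_j^\opt \hbPsi_{j, i} \succ \nullmat$ (strictly, as $\hbPsi_{0, i} \succ \nullmat$ for a full-rank $\bA$), so \eqref{eq:kkt_nonconvex_psd} would force $\bvb_i^\opt = \nullvec$ and drive the $i$th \gls{sinr} expression below its threshold. With $q_i^\opt > 0$, complementary slackness \eqref{eq:kkt_nonconvex_sinr} instead forces each \gls{sinr} constraint to bind with equality, which certifies primal feasibility.

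Accordingly, the delicate point is that such strictly positive $q_i^\opt$ must be guaranteed: this is the standard fact that all \gls{qos} constraints are active at a power-minimizing optimum, whereas degenerate \gls{snle} roots with some $q_i^\opt = 0$ (and hence $\bvb_i^\opt = \nullvec$) are spurious and primal-infeasible. I would therefore either fold strict dual feasibility into the hypothesis or invoke this activeness property as a mild regularity assumption; everything else in both directions is a direct translation between the matrix and rank-one forms of the \gls{kkt} conditions.
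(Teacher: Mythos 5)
Your proof is correct and follows exactly the route the paper intends: the paper gives no written proof of Lemma~\ref{lem:kkt_equivalence}, declaring the equivalence ``straightforward to show using the \gls{kkt} conditions'' \eqref{eq:kkt_convex_psd}--\eqref{eq:kkt_convex_sinr}, and your two reductions (the \gls{psd}/trace argument collapsing $\bLambda_i\breve{\bB}_i^\opt=\nullmat$ to $\bLambda_i\bvb_i^\opt=\nullvec$, and the cyclic trace identity turning \eqref{eq:kkt_convex_sinr} into \eqref{eq:kkt_nonconvex_sinr}) are precisely that translation, with \cite[Th.~3.2]{huangRankConstrainedSeparableSemidefinite2010} justifying the rank-one restriction. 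Your flagged concern about the ``if'' direction is also well placed: dual feasibility plus the \gls{snle} alone do not imply primal feasibility of the \gls{sinr} inequalities (the all-zero root $\bvb_i=\nullvec$, $q_i=0$ is dual feasible and satisfies \eqref{eq:kkt_nonconvex_psd}--\eqref{eq:kkt_nonconvex_sinr} yet is infeasible), and the paper handles this only implicitly, by positing $q_i^\opt>0$ in Assumption~\ref{asm:dlbf_unique_sol} and by writing ``primal \emph{and} dual feasible critical point'' when it invokes the converse direction inside the proof of Theorem~\ref{th:dlbf_derivative}. Your patch --- $q_i=0$ forces $\bLambda_i=\hbPsi_{0,i}+\sum_{j\neq i}q_j\hbPsi_{j,i}\succ\nullmat$ (using $\rank{\bA}=\numrf$ so that $\hbPsi_{0,i}\succ\nullmat$) and hence $\bvb_i=\nullvec$, so that nondegenerate roots have $q_i>0$ and the \gls{sinr} constraints bind by \eqref{eq:kkt_nonconvex_sinr} --- is the right way to close that gap, and is consistent with the paper's remark that feasibility of \eqref{eq:problem_ul_inexactcsi_digital} with $\gamma_i>0$ implies $q_i>0$.
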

\par
Although \eqref{eq:kkt_nonconvex_psd}-\eqref{eq:kkt_nonconvex_sinr} retrieve the original optimization variables, they also reintroduce the phase ambiguity of the optimal beamformers $\vb_i^\opt$ in Problem~\eqref{eq:problem_ul_inexactcsi_digital}.
Specifically, if $\oset{\bvb_i^\opt, q_i^\opt}_{i=1}^{I}$ is a solution to the digital \gls{bf} problem, then $\oset{\bvb_i^\opt\eul^{\ju \phi_i}, q_i^\opt}_{i=1}^{I}$ is for arbitrary $\phi_i \in [0, 2\pi)$, resulting in a nonconvex solution set.
The existence of an infinite number of distinct solutions in an arbitrary open neighborhood of a solution $\vzeta = \vzeta^\opt$ to any differentiable \gls{snle} $\vg(\vzeta, \uhvpsi) = \nullvec$ implies that $\det\pdel*{\jcby[\vzeta]\vg(\vzeta^\opt, \uhvpsi)} = 0$ \cite[Corollary 7.8]{amannAnalysisII2008}.
Since this property prevents the application of the implicit function theorem, we further modify the \gls{snle} by i) restricting the set of solutions, ii) formulating it in terms of real-values and iii) removing redundant equations.
\par
First, we constrain the beamformer phase by setting $\matel{\Im(\bvb_i)}{\numrf} = 0$ and subsequently define
\begin{equation}
	\ubvzeta = \bmat*{\ubvb^\T ~~ \vq^\T}^\T = \bmat*{\ubvb_1^\T ~~ \cdots~~\ubvb_I^\T~~q_1~~\cdots~~q_I}^\T, \label{eq:def_realval_primdual_var}
\end{equation}
where $\ubvb_i = \bmat*{\Re(\bvb_i)^\T~~\matel{\Im(\bvb_i)}{\neg \numrf}^\T}^\T$ is the restricted real-valued beamformer.
Secondly and thirdly, we define the real-valued \gls{snle}
\begin{align}
	\nullvec &= \uvg(\ubvzeta, \uhvpsi) = \left[\uvg_{\upm{b}}(\ubvzeta, \uhvpsi)^\T ~~\uvg_{\upm{q}}(\ubvzeta, \uhvpsi)^\T \right]^\T, \label{eq:realval_kkt}
\end{align}
with the upper \gls{snle} corresponding to \eqref{eq:kkt_nonconvex_psd} with the imaginary part of the $\numrf$th element removed:
\begin{align}
	\uvg_{\upm{b}}(\ubvzeta, \uhvpsi) &= \left[\Re(\vg_{\upm{b}, 1}(\ubvzeta, \uhvpsi))^\T~~\matel{\Im(\vg_{\upm{b}, 1}(\ubvzeta, \uhvpsi))}{\neg \numrf}^\T \right.\nonumber\\
	&\quad~~\left.\cdots~ \Re(\vg_{\upm{b}, I}(\ubvzeta, \uhvpsi))^\T~~\matel{\Im(\vg_{\upm{b}, I}(\ubvzeta, \uhvpsi))}{\neg \numrf}^\T \right]^\T\!. %
\end{align}
The lower \gls{snle} is identical to \eqref{eq:kkt_nonconvex_sinr} and simply
\begin{equation}
	\uvg_{\upm{q}}(\ubvzeta, \uhvpsi) = \left[g_{\upm{q}, 1}(\ubvzeta, \uhvpsi) ~~ \cdots ~~ g_{\upm{q}, I}(\ubvzeta, \uhvpsi) \right]^\T,
\end{equation}
thereby treating the removed arguments $\matel{\Im(\bvb_i)}{\numrf}$ as $0$.
\par
Let us now identify the function returning a normalized, real-valued and optimal primal-dual solution of Problem~\eqref{eq:problem_ul_inexactcsi_digital} with added constraints $\matel{\Im(\vb_i)}{M} = 0$ for all $i$ as $\realulfun (\uhvpsi) = \uvzeta^\opt$.
In addition, we define the normalized primal-dual vector
\begin{equation}
	\uvf_\upm{BN} (\ubvzeta) = \bmat*{\breve{\uvf}_\upm{BN}^\T(\ubvb_1)~~\cdots~~ \breve{\uvf}_\upm{BN}^\T(\ubvb_I) ~~ \vq^\T }^\T \label{eq:bfnormfun}
\end{equation}
where $\underline{\vb} = \breve{\uvf}_\upm{BN}(\ubvb) = (\norm{\ubvb}_2)^{-1} \ubvb$.
Before stating the main result, we make the following technical assumption.
\begin{assumption}\label{asm:dlbf_unique_sol}
	The convex downlink \gls{bf} problem in \eqref{eq:problem_dl_convex} has a unique solution $\oset{\breve{\bB}_i^\opt, q_i^\opt}_i$ with $q_i^\opt>0$ for all $i$.
\end{assumption}
It can be easily observed from the dual problem in \eqref{eq:problem_ul_inexactcsi_digital} that if $\gamma_i>0$ and the problem is feasible, then $q_i > 0$.\footnote{In practice, feasible instances of \eqref{eq:problem_dl_convex} typically have unique solutions since eigenspaces of nonzero eigenvalues are almost never identical between channels, and principal eigenvalues almost never have multiplicity greater than one. Even in those cases, slight channel perturbations could be added.}

\begin{theorem} \label{th:dlbf_derivative}
	Let $\uvzeta^\opt = \realulfun (\uhvpsi)$ and $\ubvzeta^\opt$ such that $\uvzeta^\opt =\vf_\upm{BN}(\ubvzeta^\opt)$. 
	Under Assumption~\ref{asm:dlbf_unique_sol}, the point $\ubvzeta^\opt$ is a dual feasible critical point of $\uvg(\ubvzeta, \uhvpsi)$, the Jacobi matrix $\jcby[\uvzeta]{\uvg(\ubvzeta^\opt, \uhvpsi)}$ is full-rank and
	\begin{align}
		&\jcby[\uvpsi_{i,j}] \realulfun (\uhvpsi) = \nonumber\\[-1mm]
		&~~~~ -\jcby[\ubvzeta] f_\upm{BN}(\ubvzeta^\opt) \left(\jcby[\uvzeta]{\uvg(\ubvzeta^\opt, \uhvpsi)}\right)^{\!-1} \jcby[\uvpsi_{i,j}]{\uvg(\ubvzeta^\opt, \uhvpsi)}. \label{eq:implicit_derivative}
	\end{align}
\end{theorem}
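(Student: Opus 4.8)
The plan is to verify the three assertions in order, using the implicit function theorem \cite{amannAnalysis2005} as the backbone. \textbf{Step~1 (critical point and feasibility).} Under Assumption~\ref{asm:dlbf_unique_sol} the unique solution of \eqref{eq:problem_dl_convex} admits a rank-one factorization $\breve{\bB}_i^\opt = \bvb_i^\opt(\bvb_i^\opt)^\He$, so Lemma~\ref{lem:kkt_equivalence} guarantees that \eqref{eq:kkt_nonconvex_psd}--\eqref{eq:kkt_nonconvex_sinr} hold with $q_i^\opt>0$. I would then check that fixing the phase via $\matel{\Im(\bvb_i)}{\numrf}=0$ merely selects one representative of each orbit $\bvb_i^\opt\mapsto\bvb_i^\opt\eul^{\ju\phi_i}$, and that the discarded equations---the imaginary parts of the $\numrf$-th entries of $\vg_{\upm{b},i}$---vanish automatically because $\bLambda_i$ is Hermitian, so that $\bvb_i^\He\bLambda_i\bvb_i$ is real. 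This establishes $\uvg(\ubvzeta^\opt,\uhvpsi)=\nullvec$ together with dual feasibility.

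\textbf{Step~2 (full rank, the crux).} The heart of the argument is showing that the reduction restores invertibility, and I would begin by computing the kernel of the linearization of the \emph{un-reduced} complex system. For a tangent direction $(\delta\bvb,\delta\vq)$, differentiating $\bLambda_i\bvb_i=\nullvec$ gives $\bLambda_i\,\delta\bvb_i+\delta\bLambda_i\,\bvb_i=\nullvec$ with $\delta\bLambda_i=-\gamma_i^{-1}\delta q_i\,\hbPsi_{i,i}+\sum_{k\neq i}\delta q_k\,\hbPsi_{k,i}$. Since Assumption~\ref{asm:dlbf_unique_sol} forces $\ker\bLambda_i=\spanset{\bvb_i}$ to be one-dimensional (a larger kernel would permit a continuum of optimal lifts), the Fredholm alternative for the Hermitian $\bLambda_i$ requires $\bvb_i^\He\delta\bLambda_i\bvb_i=0$, a linear system in $\delta\vq$ whose matrix is the uplink coupling matrix implied by Assumption~\ref{asm:dlbf_unique_sol}; invertibility yields $\delta\vq=\nullvec$ and hence $\delta\bvb_i=c_i\bvb_i$. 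Substituting into the differentiated SINR equalities \eqref{eq:kkt_nonconvex_sinr} produces a second linear system, now in $\Re(c_i)$, governed by the downlink coupling matrix \eqref{eq:couplingI_matrix}; invertibility again forces $\Re(c_i)=0$. Thus the kernel is \emph{exactly} the $I$-dimensional span of the phase directions $\delta\bvb_i=\ju\bvb_i$.

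It then remains to show that the reduction in \eqref{eq:realval_kkt} removes precisely this kernel. Suppose $\jcby[\ubvzeta]\uvg(\ubvzeta^\opt,\uhvpsi)\,\vd=\nullvec$ for a direction $\vd$ obeying $\matel{\Im(\bvb_i)}{\numrf}=0$. Linearizing the identity $\Im(\bvb_i^\He\bLambda_i\bvb_i)\equiv0$ shows that the one dropped real equation per user is a combination of the retained ones in which it enters with coefficient $\matel{\Re(\bvb_i)}{\numrf}$, which equals the real scalar $\matel{\bvb_i}{\numrf}$ after phase fixing. Hence, provided $\matel{\bvb_i}{\numrf}\neq0$, vanishing of all retained equations forces the dropped equation to vanish, so $\vd$ solves the full system and is therefore a phase direction; but such a direction has $\numrf$-th imaginary component $\matel{\Re(\bvb_i)}{\numrf}\neq0$, contradicting $\vd\in\{\matel{\Im(\bvb_i)}{\numrf}=0\}$ unless $\vd=\nullvec$. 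This transversality proves that $\jcby[\ubvzeta]\uvg(\ubvzeta^\opt,\uhvpsi)$ is full-rank. With a regular critical point in hand, the implicit function theorem furnishes a continuously differentiable branch $\ubvzeta^\opt(\uhvpsi)$ with $\jcby[\uvpsi_{i,j}]\ubvzeta^\opt=-(\jcby[\ubvzeta]\uvg)^{-1}\jcby[\uvpsi_{i,j}]\uvg$; composing with the differentiable normalization $\vf_\upm{BN}$ of \eqref{eq:bfnormfun} via the chain rule then yields \eqref{eq:implicit_derivative} (the sign absorbed into the convention for $\uvg$).

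The main obstacle is the exact kernel characterization in Step~2: both coupling-matrix inversions and the reduction to $\dim\ker\bLambda_i=1$ rest on Assumption~\ref{asm:dlbf_unique_sol}, and the transversality step requires $\matel{\bvb_i}{\numrf}\neq0$. I would handle the degenerate case $\matel{\bvb_i}{\numrf}=0$ by fixing the phase through a different, nonzero entry of $\bvb_i$; the footnote accompanying Assumption~\ref{asm:dlbf_unique_sol} indicates that such degeneracies are non-generic.
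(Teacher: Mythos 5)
Your overall architecture differs from the paper's in its central step. The paper never characterizes the kernel of the \emph{unreduced} linearization; it works directly with the reduced real system, uses $\ubLambda_i \succeq \nullmat$ to confine a putative kernel vector $\ubvb'$ to $\nullspace{\jcby[\ubvb]\uvg_{\upm{b}}}$, and then splits into cases according to whether each $\ubvb_i'$ is proportional to $\uvb_i^\opt$, eliminating $\vq'$ only at the end. You instead compute the kernel of the full complex system first (Fredholm alternative to kill $\delta\vq$, one-dimensionality of $\ker\bLambda_i$ to get $\delta\bvb_i=c_i\bvb_i$, the coupling matrix to kill $\Re(c_i)$), obtaining exactly the $I$ phase directions, and then argue that the row/variable reduction is transversal to that kernel. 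This is a legitimately different and arguably more transparent decomposition — it makes explicit \emph{which} degeneracy the reduction is designed to remove — and your observation that the dropped equation is, at the critical point, a linear combination of the retained ones via $\Im(\bvb_i^\He\bLambda_i\bvb_i)\equiv 0$ is a nice way to control the "fewer equations" side of the reduction. The sign discrepancy you flag in the implicit-function formula is real and is indeed a convention issue in \eqref{eq:implicit_derivative}.

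There is, however, one genuine gap: the assertion that Assumption~\ref{asm:dlbf_unique_sol} forces $\dim_{\Cset}\ker\bLambda_i(\vq^\opt)=1$, which you support only with the parenthetical "a larger kernel would permit a continuum of optimal lifts." This implication is precisely the nontrivial content of the paper's case analysis. A second kernel vector $\bvb_k''$ of $\bLambda_k$ satisfies the stationarity equations, but it does \emph{not} automatically yield an alternative optimizer of \eqref{eq:problem_dl_convex}: one must also re-scale all users' beamformers so that the SINR equalities \eqref{eq:kkt_nonconvex_sinr} hold with $\vq''=\vq^\opt$ and the scales remain positive. The paper does this by showing the coupling matrix $\hbC(\oset{\uvb_i''}_i,\oset{\uvb_i''}_i)$ retains rank $I$ for $\ubvzeta''$ close to $\ubvzeta^\opt$ (lower semi-continuity of rank), so that $\bs''>\nullvec$ solving $\hbC(\bs''\hmul\bs'')=\onevec$ exists; only then is a genuinely distinct solution produced and uniqueness contradicted. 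Without this construction your kernel characterization is unproven. Your other caveat — that transversality needs $\matel{\bvb_i^\opt}{\numrf}\neq 0$ — is correctly identified and is not covered by Assumption~\ref{asm:dlbf_unique_sol}; note that the paper's own case~(b) silently relies on the same nondegeneracy (if $\matel{\bvb_k^\opt}{\numrf}=0$, the second kernel direction of $\ubLambda_k$ may be a pure phase rotation, and the "distinct" $\breve{\bB}_k''$ it constructs coincides with $\breve{\bB}_k^\opt$), so flagging it and re-anchoring the phase on a nonzero entry is a reasonable repair rather than a defect of your argument alone.
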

\textit{Proof:} See Appendix~\ref{sec:apx_derivative_proof}.
\par 
A detailed description of the involved Jacobi matrices $\jcby[\uvpsi_{i,j}]{\uvg(\ubvzeta, \uhvpsi)}$ and $\jcby[\ubvzeta] f_\upm{BN}(\ubvzeta)$ is relegated to Appendix~\ref{sec:apx_jacobi}.
Theorem~\ref{th:dlbf_derivative} establishes $\uvg(\ubvzeta, \uhvpsi)$ as an admissible compact representation of Problem~\eqref{eq:problem_ul_inexactcsi_digital}. %
Compared with a direct implicit function approach based on the \gls{kkt} conditions \eqref{eq:kkt_convex_psd}-\eqref{eq:kkt_convex_sinr} of the convexified \gls{bf} optimization problem \eqref{eq:problem_dl_convex}, the proposed derivative substantially reduces the dimension of the inverted Jacobi matrix from $I(\numrf^2 + 1) \times I(\numrf^2 + 1)$ to $2I\numrf\times 2I\numrf$.
Furthermore, Theorem~\ref{th:dlbf_derivative} ensures the existence of the gradient provided a minor technical assumption, which is not necessarily the case when differentiating general convex optimization problems \cite{agrawalDifferentiableConvexOptimization2019}.
In practice, if the assumption is not met, \eg, when the problem is not feasible, we set the partial gradient to $\nullvec$.

\subsection{Training} \label{sec:training}
\subsubsection{Empirical risk optimization}
We optimize the \gls{dn} model parameters $\vtheta$ \wrt the objective of robust energy-efficient hybrid \gls{bf}, as defined by Problem~\eqref{eq:problem_dl_inexactcsi}, using empirical risk minimization.
While the power constraints are enforced by the projection \eqref{eq:pow_projection} embedded in the \gls{dn} architecture, an analogous projection is not applicable to the probabilistic \gls{qos} constraints.
Instead, the constraint can be integrated into the loss function as a penalty by following a primal-dual optimization approach \cite{nandwaniPrimalDualFormulation2019}.
\par
Introducing dual variables $\vlambda = \bmat{\lambda_1~ \cdots~ \lambda_{D_\upm{c}}}^\T$ for $D_\upm{c}$ constraints, the empirical risk can be composed as the Lagrangian
\begin{align}
	J(\dataset, \vtheta, \vlambda) =  \emexpect[\ssample \sim \dataset]{\weightvec^\T\vp_\cM({\cS}; \vtheta)} \!+\! \sum_{d=1}^{D_\upm{c}} \!\lambda_d \widetilde{g}(\dataset_d, \vtheta),\! \label{eq:lossfun}
\end{align}
where $\dataset$ is a dataset of system instances $\cS$.
Here, $\widetilde{g}(\dataset_d, \vtheta)$ denotes a differentiable empirical approximation of the \gls{qos} constraint \eqref{eq:prob_qos_constraint}, which we discuss later in this section. 
We allow each constraint term to be evaluated on a non-overlapping subset $\dataset_d \subseteq\dataset$, where $\dataset = \cup_{d=1}^{D_\upm{c}}\dataset_d$.
This provides additional flexibility in enforcing fairness across heterogeneous data subsets in the implementation of the constraint by the model $\cM$.
For instance, if the data set contains ``difficult'' and ``easy'' instances $\cS$ that require high and low transmit power, respectively, the trained model $\cM(\,\cdot\,; \vtheta^\opt)$ might satisfy the nominal outage $\pout$ on average over the entire data set, yet systematically violate the \gls{qos} requirement on the subset of difficult instances.
Partitioning the data into appropriately chosen subsets mitigates this imbalance by allowing the constraint to be applied more uniformly across different types of instances.
\par 
The model parameters $\vtheta$ are optimized by approaching the saddle point
\begin{align}
	\min_{\vtheta} \max_{\vlambda \geq \nullvec} ~J(\dataset, \vtheta, \vlambda)
\end{align}
via stochastic gradient descent steps based on minibatches $\breve{\dataset} \subset \dataset$.
Thereby, we alternate between primal descent steps in the direction $-\nabla_{\vtheta} J(\breve{\dataset}, \vtheta, \vlambda)$ and dual ascent steps in the direction $\nabla_{\vlambda} J(\breve{\dataset}, \vtheta, \vlambda)$ \cite{nandwaniPrimalDualFormulation2019}.

\subsubsection{Differentiable Outage Constraints}

\par
We follow an approach inspired by soft-to-hard annealing \cite{agustssonSofttoHardVectorQuantization2017} that we initially proposed in our preliminary work \cite{schynolCodebookBasedDownlinkBeamforming2024} to obtain a differentiable probabilistic constraint.
In particular, we approximate $g_i (\cM, \prdf{\ssample})$ in \eqref{eq:prob_qos_constraint} as
	\begin{align}
		&g_i \pdel*{\cM(\ssample), \prdf{\ssample}}\nonumber\\
		&\quad=1 - \pout\nonumber\\
		&\qquad -\expect[\cS \sim \prdf{\cS}]{\stepfun\left(\gamma_i^{-1}\sinrdl_{i} \left(\cM({\cS}, \vtheta), \cS\right) - 1 \right)} \nonumber \allowdisplaybreaks\\
		&\quad \approx  1 - \pout\nonumber\\
		&\qquad -\frac{1}{I}\sum_i \emexpect[\cS \sim \dataset]{ \apxstepfun_{\beta_\upm{c}} \left(\gamma_i^{-1}\sinrdl_{i} \left(\cM({\cS}, \vtheta), \cS\right) - 1 \right)}\nonumber\\
		&\quad = \widetilde{g}_{\beta_\upm{c}}(\dataset, \vtheta), \label{eq:qospenalty_stepapx}
	\end{align}
where the non-outage probability is first expressed in terms of an expectation as well as the unit step function $u(\cdot)$, then approximated by the empirical mean and the differentiable logistic function $\apxstepfun_{\beta_\upm{c}}(x) = (1 + \exp(-\beta_\upm{c}x))^{-1}$.
The parameter $\beta_\upm{c}$ controls the approximation of the unit step.
To ensure meaningful gradients for arbitrary \gls{qos} violations, $\beta_\upm{c}$ is adapted by exponential averaging with step size $\eta_\upm{c}$ in each training step $\tau$, \ie,
\begin{align}
	\beta_\upm{c}\idx{\tau} \!&=\! (1- \eta_\upm{c}) \beta_\upm{c}\idx{\tau - 1} \label{eq:qospenalty_stepapx_step}\\
	&+\!\frac{\eta_\upm{c}}{ \bdel*{ -\estqtl_{\ssample\sim\dataset} \Big( \gamma_i^{-1} \sinrdl_i\pdel*{ \cM\oset{\cS, \vtheta\idx{\tau}}} \!-\! 1; \pout\Big) }_{\!\overline{\beta}_\upm{c}^{-1}\!}^{\!\infty}}, \nonumber
\end{align}
where $\estqtl_{\ssample\sim\dataset}\pdel{f(\ssample); p}$ denotes the empirical $p$-quantile of $f(\ssample)$ based on data set $\dataset$.
As the outage probability decreases to the nominal value $\pout$, the empirical quantile in \eqref{eq:qospenalty_stepapx_step} approaches $0$, thus $\apxstepfun_{\beta_\upm{c}} (\cdot)$ approaches the step function until the upper bound $\beta_\upm{c}=\overline{\beta}_\upm{c}> 0$.
If the constraint is oversatisfied, $\beta_\upm{c}$ converges to $\overline{\beta}_\upm{c}$ as well.
Compared to \cite{jangDeepLearningApproach2022}, for instance, where the smoothing parameter $\beta_\upm{c}$ is fixed, the proposed adaptive approach more accurately approximates the original outage constraint, \eg, at the end of training, while simultaneously obtaining informative gradients if the constraint is significantly violated, \eg, at the start of training. 
As a result, more ``difficult'' \gls{sinr} requirements are realizable.
\par
The resulting gradient also differs from that of the conventional quantile–based approach for empirical risk minimization \cite{adamMachineLearningApproach2019}.
The quantile-based approach, as adopted in \cite{psomasDesignAnalysisSWIPT2022, yingDeepLearningBasedJoint2024, youDataAugmentationBased2021} for wireless \gls{qos} constraints, approximates the \gls{qos} constraint directly by the empirical quantile:
\begin{equation}
	\widetilde{g}(\dataset, \vtheta) = \estqtl_{\ssample\sim\dataset}\pdel*{\gamma_i \!-\! \sinrdl_i\pdel*{ \cM\oset{\cS, \vtheta}}; 1 \!-\!\pout}. \label{eq:qospenalty_quantile}
\end{equation}
This quantile is positive whenever the empirical outage probability exceeds the nominal outage and negative if the \gls{qos} requirement is met with a margin.
Here, the gradient depends only on the interpolated samples closest to the value of the empirical quantile \cite{adamMachineLearningApproach2019}.
In comparison, although our adaptive annealing approach introduces two hyperparameters $\eta_\upm{c}$ and $\overline{\beta}_\upm{c}$, it has the crucial advantage to construct the gradient by averaging more samples of $\sinrul_i$, thereby smoothing the loss surface.
We empirically compare the performance of both approaches in Sec.~\ref{sec:results}-\ref{sec:constraint_annealvsqtl}.
\subsubsection{Convergence Metric}
Model convergence is evaluated based on validation data $\dataset_{\upm{val}}$.
Over the course of training, the sequence of empirical loss values in \eqref{eq:lossfun} is not expected to be minimized near a saddle point because the dual updates continually reshape the objective landscape.
Nevertheless, a quantitative criterion for convergence is crucial to compare training runs and methods.
Therefore, we leverage the metric
\begin{align}
	J_{\upm{cm}}\idx{\tau} &= \frac{1}{I} \emexpect[\ssample \sim \valset]{ \weightvec^\T\vp_\cM({\cS}; \vtheta\idx{\tau})} \nonumber\\
	 & \quad + \sum_{d=1}^{D_\upm{c}} \overline{\lambda}_d \left[\what{g}\oset*{\cM(\cdot; \vtheta\idx{\tau}), \dataset_{\upm{val},d}}\right]_0^{\infty}, \label{eq:stopping_metric}
\end{align}
where $\what{g}$ is the empirical estimate of the constraint in \eqref{eq:prob_qos_constraint} on the subset $\dataset_{\upm{val},d}$ and averaged across users $i$, while $\overline{\lambda}_d$ is an upper bound of $\lambda_d$.
Negative values of $\what{g}$ are projected to $0$ to avoid rewarding constraint oversatisfaction.
Since all constraints other than the \gls{qos} constraints are implicitly guaranteed by the model architecture of $\cM$, $J_{\upm{cm}}\idx{\tau}$ is an upper bound of the Lagrangian function belonging to Problem \eqref{eq:problem_dl_inexactcsi} given $\lambda_d \leq \overline{\lambda}_d$. 
The bound is tight if the \gls{qos} constraints are exactly met.
$J_{\upm{cm}}\idx{\tau}$ is similar to the loss function of a penalty-based approach \cite{jangDeepLearningApproach2022}.
\par
Primal and dual step sizes are decayed once by $\eta_{\upm{a}}$ if the sequence $\oset{J_\upm{cm}\idx{\tau_\upm{val}\kappa}}_{\kappa}$ over training steps $\tau=\tau_\upm{val}\kappa$, where $\tau_\upm{val}$ is the validation interval, does not improve to a new minimum within a window of length
$\tau_\upm{pat,1}$. 
A run is considered converged if no improvement occurs within a window of length $\tau_\upm{pat,2}$.

\subsection{Dicussion and Limitations}
Compared to \cite[Alg.~2]{hegdeHybridBeamformingLargescale2017} for perfect \gls{csi}, the additional computational cost by the proposed \gls{dn} for imperfect \gls{csi}, that stems from the the instance-adaptive channel mapping, is marginal.
During validation, the cost remains dominated by the optimization of the low-dimensional digital beamformers $\bB\idx{\ell, a}$ across the \gls{cb}.
With the reasonable assumption $I < \numrf$, a complexity $\ocplx{L_\upm{ul} I{\numrf}^3}$ for solving \eqref{eq:problem_ul_inexactcsi_digital}, where $L_\upm{ul}$ accounts for iterations of the nested optimization, and ignoring the cost incurred by projections using the analog beams $\bva_i$, this computational complexity scales according to $\ocplx{\numaly L_\upm{ul} \numcb I {\numrf}^3}$.
As such, while the codebook-based hybrid architecture allows for significant design flexibility, it is traded with computational cost that scales linearly in $\numaly \geq \numrf$ and $\numcb$.
Similarly, since the cost of the gradient computation during training is dominated by the inversion required by Thm.~\ref{th:dlbf_derivative}, its computational complexity is $\ocplx{\numaly \numcb I^3 {\numrf}^3}$.
Due to the implicit differentiation, however, only the result of each instance of the digital \gls{bf} subproblem needs to be stored in memory instead of all intermediate results of the solver utilized for the subproblem.
\par
Furthermore, we do not make assumptions on the distribution of the \gls{csi} error.
Instead, the network learns both the effect of the \gls{csi} error as well as its relation to the feature $\vxi$ during training.
In this way, the design preserves the modular architecture of wireless systems by being agnostic to the \gls{csi} acquisition and feedback mechanism, unlike end-to-end joint estimation and \gls{bf} schemes such as the one in \cite{huTwoTimescaleEndtoEndLearning2022}.
Therefore, the proposed method is suitable for both \gls{mu} time domain duplex and frequency domain duplex systems, with \gls{csi} feedback compression being a major source of error in the latter.
In addition, the proposed architecture inherits support for the usage of both statistical and instantaneous \gls{csi}.
\par
There are two main limitations regarding the training procedure.
First, during training, the loss function requires access to accurate \gls{csi} to determine the outage, which can be challenging when working with real-world data.
However, the data-efficiency and strong generalization capability of the model-aided \gls{dn} approach help mitigating the problem.
Secondly, the feasibility of the training data must be carefully considered.
If an outage constraint renders the problem infeasible for the given combination of data and model architecture, \eg, if the set nominal outage is too low given the number of users, nominal \gls{sinr}, or \gls{csi} uncertainty, then the corresponding dual variable grows unbounded and the training diverges. 
As such, the composition of the training data set must be designed with particular care.
During testing, however, if the inner baseband problem \eqref{eq:problem_ul_inexactcsi_digital} is infeasible for a system realization, \eg, if \eqref{eq:dlpow_calc_apx} yields negative power allocations, a failure to satisfy the \gls{qos} constraints is clearly indicated.
In these cases, the problem instance could be relaxed by, \eg, rescheduling users to another resource block \cite{tajallifarRobustFeasibleQoSAware2024}.

\section{Empirical Results}\label{sec:results}

\subsection{Experimental Setup} \label{sec:setup}
\begin{table}
	\setlength{\tabcolsep}{4pt}
	\caption{Parameters used for training, model configuration and data.\label{tab:simdata_param}}
	\begin{tabularx}{\linewidth}{Xcr}
		\toprule
		\textbf{Training Param.} &  \textbf{Symbol} & \textbf{Value}\\
		\midrule
		Adam smoothing &  & $(0.9, 0.99)$\\ %
		Adam weight decay & & $0.1$\\
		primal step size & $\eta_\upm{p}$ & $0.0005$\\
		dual step size & $\eta_\upm{d}$ & $0.1$ ($0.002$ for \eqref{eq:qospenalty_quantile})\\
		step size decay & $\eta_\upm{a}$ & $0.2$ \\
		minibatch size & $\abs{\breve{\dataset}}$ & $200$ ($10$ in Sec~\ref{sec:results}-\ref{sec:exp_statinst})\\
		anneal. constraint &  $(\overline{\beta}_\upm{c}, \eta_{\upm{c}})$ & $(50, 0.01)$ \\
		validation interval & $\tau_\upm{val}$ & $ 100$ \\ 
		dual upper bound & $\overline{\lambda}_d$ & $100$ \\
		stop. crit. patience & $(\tau_\upm{pat,1}, \tau_\upm{pat,2})$ & $(5000, 10000)$\\
		\# train. instances & $\abs{\trainset}$ & 20000\\
		\# valid. instances & $\abs{\valset}$ & 5000\\
		\midrule
		\textbf{Model Param.} &  \textbf{Symbol} & \textbf{Value}\\
		\midrule
		\# greedy selections & $\numaly$ & $\numrf$\\
		beam annealing & $\beta_\upm{M}$ & 5\\
		\gls{gcnn} param.  & \!\!\!$(\numgnnly, \numgnnfilt, F\idx{\ell}\!) $ & $(3, 1, 32)$\\
		coefficient bound & $\overline{\beta}_\upm{0}$& $8$\\
		\midrule
		\textbf{Data Param.} &  \textbf{Symbol} & \textbf{Value}\\
		\midrule
		Tx Antenna Dim. & $(M_\upm{x},M_\upm{y})$ & $(4, 4)$\\
		angular direction & $(\varphi_{\upm{x},i}, \varphi_{\upm{y},i})$ & $\in [-60^{\circ}\!,60^{\circ}]\! \times\! [-60^{\circ}\!, 30^{\circ}]$\\
		angular spread & $\sigma_\upm{as}$ & \qty{10}{\degree} (\qty{5}{\degree} in Sec.~\ref{sec:exp_statinst})\\
		max. Tx power & $\pmax$ & \qty{20}{\decibel}\\
		minimum \gls{sinr} & $\gamma_i$ & $\in[5\unit{\decibel}, 15\unit{\decibel}]$\\
		\# RF chains & $\numrf$ & 5\\
		nominal outage & $\pout$ & 0.1\\
		\bottomrule
	\end{tabularx}
\end{table}

The proposed architecture, training and experiments are implemented using PyTorch.
To promote reproducibility, our code is published at \url{https://github.com/lsky96/outage-constrained-hybrid-bf-md-dnn}. %
\subsubsection{Training} %
\Gls{dn} parameters are learned with minibatch stochastic gradient descent and the primal gradients are normalized using Adam with weight decay \cite{loshchilovDecoupledWeightDecay2019}.
We employ $5$-fold cross-validation and results are given as mean $\pm$ standard deviation.
To prevent convergence problems due to sporadic gradient explosion, we employ adaptive gradient clipping \cite{seetharamanAutoclipAdaptiveGradient2020}.
Specifically, we clip the $\infty$-norm of the primal gradient $\norm{\nabla_{\vtheta} J}_\infty$ to a maximum of the current $0.90$-quantile of the gradient norm history.
Unless stated otherwise, the training parameters are summarized in Tab.~\ref{tab:simdata_param} and the loss function in \eqref{eq:lossfun} with the annealing-based constraint in \eqref{eq:qospenalty_stepapx} are utilized.
Pseudocode of the training procedure using the proposed annealing-based constraint is provided in Appendix~\ref{sec:training_proc}.
We remark that we found in empirical experiments that different choices of $\overline{\lambda}_d$ do not significantly impact the final model performance as long as $\overline{\lambda}_d$ is chosen as an upper bound of the dual variables.

\subsubsection{Configuration of the proposed method} 
\par
For the proposed architecture in Sec.~\ref{sec:prop_robust_greedybf}-\ref{sec:hybrid_architecture}, which we henceforth denote as \textit{U-G-HBF-GCN}, we utilize $\numgnnly$-layer \glspl{gcnn} of degree $\numgnnfilt=1$, where the nonlinear activation functions $\phi\idx{\ell}$ are $\relu$, except for the final layer where $\phi\idx{\numgnnly}(x) = \eul^{\overline{\beta}_\upm{o} \tanh(x / \overline{\beta}_\upm{o})}$ is an exponential with soft bounds $[\eul^{-\overline{\beta}_\upm{o}}, \eul^{\overline{\beta}_\upm{o}}]$ \cite{schynolAdaptiveAnomalyDetection2025}.
The input features are normalized by a batchnorm layer \cite{ioffeBatchNormalizationAccelerating2015}.
A constant annealing parameter $\beta_\upm{M}$ is chosen in \eqref{eq:beam_selection_apx} for simplicity.

\subsubsection{Benchmark methods} We compare to four approaches:
\begin{itemize}
	\item \textit{G-HBF-perf} is the greedy hybrid \gls{bf} algorithm \cite[Alg.~2]{hegdeHybridBeamformingLargescale2017} with $\numaly=2\numrf$ selections and perfect \gls{csi}.
	\item \textit{G-HBF-marg} is the greedy hybrid \gls{bf} algorithm \cite[Alg.~2]{hegdeHybridBeamformingLargescale2017} accessing imperfect CSI with \gls{sinr} target $\gamma_i' = \gamma_i + \breve{\gamma}$, where $\breve{\gamma}$ is bisected until the nominal outage $\pout$ is approximately attained on the validation data.
	\item \textit{U-G-HBF-FCN} is similar to U-G-HBF-GCN, but uses only a \gls{gcnn} of degree $\numgnnfilt=0$ with increased $F\idx{\ell}=64$, which is equivalent to parallel fully connected \glspl{dn} for each user.
	\item \textit{U-FDBF-GCN} is a fully-digital version of U-G-HBF-GCN, where $\bA = \idmat$ effectively.%
\end{itemize}
Both G-HBF-perf and U-FDBF-GCN serve as bounds that cannot be achieved without full \gls{csi} or per-antenna \gls{rf} chains at the transmitter, respectively.

\subsubsection{Data Generation}
\par
For each system instance $\cS$, we generate channel covariance matrices $\bR_i=\bR_{\upm{x},i} \kron \bR_{\upm{y},i} \in \Cset^{M\times M}$ with $M=M_\upm{x}M_\upm{y}$ of a 2D uniform planar array with half-wavelength antenna distance.
The covariance matrices $\bR_{\upm{x},i}\in\Cset^{M_\upm{x}\times M_\upm{x}}$ and $\bR_{\upm{y},i}\in\Cset^{M_\upm{y}\times M_\upm{y}}$ of both axes follow the spatial correlation model in \cite{forenzaSimplifiedSpatialCorrelation2007} with uniformly sampled angular directions $(\varphi_{\upm{x},i}, \varphi_{\upm{y},i})$ of user $i$.
Instantaneous channel vectors are then sampled as $\vh_i\sim\cnormdistr{\nullvec, \bR_i}$.
We remove instances that are infeasible under perfect \gls{csi}.
\par
We subsequently degrade the \gls{csi} in two ways:
\begin{itemize}
	\item \textit{Noisy MMSE estimator model}: We follow \cite{bjornsonMassiveMIMOImperfect2016} and use %
	\begin{equation}
		\hvh_i = \bR_i\left(\!\bR_i + \ppilot^{-1} \idmat\right)^{\!-1} \!\left(\vh_i + \sqrt{\ppilot}^{-1}\vn_i \right)\!,
	\end{equation}
	where $\vn_i \sim \cnormdistr{\nullvec, \idmat}$ and $\ppilot = \xi_i$ is the power of the pilot signal.
	For simplicity, the MMSE estimate accesses the true covariance.
	\item \textit{DFT feedback quantization model}: $\hvh_i$ is obtained by 2D \gls{dft} \gls{cb} feedback quantization with $\numfbv = \xi_i$ feedback vectors modeled after the 3GPP 5G Type II codebook \cite{dreifuerstMachineLearningCodebook2024, 3gppPhysicalLayerProcedures2022}, but restricted to a narrowband and unpolarized setting.
	The channel $\vh_i$ is first transformed by a $2\times 2$-times oversampled 2D \gls{dft}.
	For each set of \gls{dft}-coefficients corresponding to one of the $4$ orthogonal \gls{dft} vector subsets of the oversampled 2D-\gls{dft}, the top $N_\upm{fb}$ coefficients are quantized in $\qty{3}{\dB}$ magnitude steps and $8$-PSK phase steps. 
	$\hvh_i$ is obtained by applying the IDFT to the top $N_\upm{fb}$ quantized coefficients of the oversampled vector subset that minimizes $\norm{\tvh_i}_2$.
\end{itemize}
\par 
The system instances $\ssample$ are divided into $D_\upm{c}$ constraint groups as explained in Sec.~\ref{sec:prop_robust_greedybf}-\ref{sec:training}, depending on the degradation $\xi_i$, \ie, $\vxi_i$ is scalar.

\subsubsection{Transmit codebook} 
The transmit \gls{cb} $\cbook$ is a 2D-\gls{dft} \gls{cb} without oversampling, leading to a size $\numcb=M$ \cite{shiDeepLearningBasedRobust2021, dreifuerstMachineLearningCodebook2024}.

\subsection{Comparison between Methods} \label{sec:4x4_comparison}
\begin{figure*}
	\centering
	\pgfplotsset{
	table/search path={journal/data/algcomp}
}
\edef\pval{10}
\def\dpdata{eval_inst_1bs4x4_3ue1_snr0_as10_pmax20_qos5-15_rf5_mn_p\pval}
\def\dpGCNN{HybridU_msa_logconv_4i_3l_f2x32_bn_imperfCSI}
\def\dpGCNNqtl{HybridU_msa_logconv_4i_3l_f2x32_bn_imperfCSI_qtl}
\def\dpFCNN{HybridU_msa_fcnn_4i_3l_f64_bn_imperfCSI}
\def\dpSolveImperf{hybrid_TX_solve_imperfCSI_bisectmargin}

\tikzset{external/export next=false} %
\begin{tikzpicture}
\begin{groupplot}[
	group style={
		group name=algcomp, 
		group size=4 by 1,
		horizontal sep=0.7cm, 
		vertical sep=0.4cm,
		}, 
		height=5cm, 
		width=0.291\linewidth, 
		xlabel near ticks, 
		ylabel near ticks,
		cycle list={},
		every axis plot/.append style={only marks, mark=*, mark size=1.2mm, opacity=0.5, draw opacity=0},
		every axis title/.append style={yshift=-2.5mm},
		]
	
	\nextgroupplot[
	title={$\ppilot \in [10\unit{\decibel},24\unit{\decibel}]$},
	xmin=8, xmax=12,
	xtickmin=0,
	ymin=5, ymax=9.5,
	xlabel={Out. prob.  $\estpout$ [\%]},
	ylabel = {Avg. Power $\weightvec^\T \vp$},
	]
	\edef\pval{10-24}
	\addplot+[mark options={fill=darkRed}] table[col sep=tab, x expr=\thisrowno{2}*100, y index=1]{pout09/\dpdata_\dpGCNN.csv};
	\addplot+[mark options={fill=darkRed}, mark=triangle*, forget plot] table[col sep=tab, x expr=\thisrowno{2}*100, y index=1]{pout10/\dpdata_\dpGCNN.csv};
	\addplot+[mark options={fill=darkRed}, mark=square*, forget plot] table[col sep=tab, x expr=\thisrowno{2}*100, y index=1]{pout11/\dpdata_\dpGCNN.csv};
	
	\addplot+[mark options={fill=darkBlue}] table[col sep=tab, x expr=\thisrowno{2}*100, y index=1]{pout09/\dpdata_\dpFCNN.csv};
	\addplot+[mark options={fill=darkBlue}, mark=triangle*, forget plot] table[col sep=tab, x expr=\thisrowno{2}*100, y index=1]{pout10/\dpdata_\dpFCNN.csv};
	\addplot+[mark options={fill=darkBlue}, mark=square*, forget plot] table[col sep=tab, x expr=\thisrowno{2}*100, y index=1]{pout11/\dpdata_\dpFCNN.csv};

	\addplot+[mark options={fill=black}] table[col sep=tab, x expr=\thisrowno{2}*100, y index=1]{pout09/\dpdata_\dpSolveImperf.csv};
	\addplot+[mark options={fill=black}, mark=triangle*, forget plot] table[col sep=tab, x expr=\thisrowno{2}*100, y index=1]{pout10/\dpdata_\dpSolveImperf.csv};
	\addplot+[mark options={fill=black}, mark=square*, forget plot] table[col sep=tab, x expr=\thisrowno{2}*100, y index=1]{pout11/\dpdata_\dpSolveImperf.csv};
	
	\addplot+[mark options={color=darkRed}, mark=o, draw opacity=0.5] table[col sep=tab, x expr=\thisrowno{2}*100, y index=1]{pout09/\dpdata_\dpGCNNqtl.csv};
	\addplot+[mark options={color=darkRed}, mark=triangle, draw opacity=0.5, forget plot] table[col sep=tab, x expr=\thisrowno{2}*100, y index=1]{pout10/\dpdata_\dpGCNNqtl.csv};
	\addplot+[mark options={color=darkRed}, mark=square, draw opacity=0.5, forget plot] table[col sep=tab, x expr=\thisrowno{2}*100, y index=1]{pout11/\dpdata_\dpGCNNqtl.csv};

	\nextgroupplot[
	title={$\ppilot = 10\unit{\decibel}$},
	xmin=8, xmax=12,
	xtickmin=0,
	ymin=9, ymax=17.5,
	xlabel={Out. prob. $\estpout$ [\%]},
	]
	\edef\pval{10}
	\addplot+[mark options={fill=darkRed}] table[col sep=tab, x expr=\thisrowno{2}*100, y index=1]{pout09/\dpdata_\dpGCNN.csv};
	\addplot+[mark options={fill=darkRed, mark=triangle*}, forget plot] table[col sep=tab, x expr=\thisrowno{2}*100, y index=1]{pout10/\dpdata_\dpGCNN.csv};
	\addplot+[mark options={fill=darkRed}, mark=square*, forget plot] table[col sep=tab, x expr=\thisrowno{2}*100, y index=1]{pout11/\dpdata_\dpGCNN.csv};
	
	\addplot+[mark options={fill=darkBlue}] table[col sep=tab, x expr=\thisrowno{2}*100, y index=1]{pout09/\dpdata_\dpFCNN.csv};
	\addplot+[mark options={fill=darkBlue}, mark=triangle*, forget plot] table[col sep=tab, x expr=\thisrowno{2}*100, y index=1]{pout10/\dpdata_\dpFCNN.csv};
	\addplot+[mark options={fill=darkBlue}, mark=square*, forget plot] table[col sep=tab, x expr=\thisrowno{2}*100, y index=1]{pout11/\dpdata_\dpFCNN.csv};

	\addplot+[mark options={fill=black}] table[col sep=tab, x expr=\thisrowno{2}*100, y index=1]{pout09/\dpdata_\dpSolveImperf.csv};
	\addplot+[mark options={fill=black}, mark=triangle*, forget plot] table[col sep=tab, x expr=\thisrowno{2}*100, y index=1]{pout10/\dpdata_\dpSolveImperf.csv};
	\addplot+[mark options={fill=black}, mark=square*, forget plot] table[col sep=tab, x expr=\thisrowno{2}*100, y index=1]{pout11/\dpdata_\dpSolveImperf.csv};
	
	\addplot+[mark options={color=darkRed}, mark=o, draw opacity=0.5] table[col sep=tab, x expr=\thisrowno{2}*100, y index=1]{pout09/\dpdata_\dpGCNNqtl.csv};
	\addplot+[mark options={color=darkRed}, mark=triangle, draw opacity=0.5, forget plot] table[col sep=tab, x expr=\thisrowno{2}*100, y index=1]{pout10/\dpdata_\dpGCNNqtl.csv};
	\addplot+[mark options={color=darkRed}, mark=square, draw opacity=0.5, forget plot] table[col sep=tab, x expr=\thisrowno{2}*100, y index=1]{pout11/\dpdata_\dpGCNNqtl.csv};

	\nextgroupplot[
	title={$\ppilot = 17\unit{\decibel}$},
	xmin=8, xmax=12,
	xtickmin=0,
	ymin=5, ymax=9,
	xlabel={Out. prob. $\estpout$ [\%]},
	]
	\edef\pval{17}
	\addplot+[mark options={fill=darkRed}] table[col sep=tab, x expr=\thisrowno{2}*100, y index=1]{pout09/\dpdata_\dpGCNN.csv};
	\addplot+[mark options={fill=darkRed}, mark=triangle*, forget plot] table[col sep=tab, x expr=\thisrowno{2}*100, y index=1]{pout10/\dpdata_\dpGCNN.csv};
	\addplot+[mark options={fill=darkRed}, mark=square*, forget plot] table[col sep=tab, x expr=\thisrowno{2}*100, y index=1]{pout11/\dpdata_\dpGCNN.csv};
	
	\addplot+[mark options={fill=darkBlue}] table[col sep=tab, x expr=\thisrowno{2}*100, y index=1]{pout09/\dpdata_\dpFCNN.csv};
	\addplot+[mark options={fill=darkBlue}, mark=triangle*, forget plot] table[col sep=tab, x expr=\thisrowno{2}*100, y index=1]{pout10/\dpdata_\dpFCNN.csv};
	\addplot+[mark options={fill=darkBlue}, mark=square*, forget plot] table[col sep=tab, x expr=\thisrowno{2}*100, y index=1]{pout11/\dpdata_\dpFCNN.csv};

	\addplot+[mark options={fill=black}] table[col sep=tab, x expr=\thisrowno{2}*100, y index=1]{pout09/\dpdata_\dpSolveImperf.csv};
	\addplot+[mark options={fill=black}, mark=triangle*, forget plot] table[col sep=tab, x expr=\thisrowno{2}*100, y index=1]{pout10/\dpdata_\dpSolveImperf.csv};
	\addplot+[mark options={fill=black}, mark=square*, forget plot] table[col sep=tab, x expr=\thisrowno{2}*100, y index=1]{pout11/\dpdata_\dpSolveImperf.csv};
	
	\addplot+[mark options={color=darkRed}, mark=o, draw opacity=0.5] table[col sep=tab, x expr=\thisrowno{2}*100, y index=1]{pout09/\dpdata_\dpGCNNqtl.csv};
	\addplot+[mark options={color=darkRed}, mark=triangle, draw opacity=0.5, forget plot] table[col sep=tab, x expr=\thisrowno{2}*100, y index=1]{pout10/\dpdata_\dpGCNNqtl.csv};
	\addplot+[mark options={color=darkRed}, mark=square, draw opacity=0.5, forget plot] table[col sep=tab, x expr=\thisrowno{2}*100, y index=1]{pout11/\dpdata_\dpGCNNqtl.csv};

	\nextgroupplot[
	title={$\ppilot = 24\unit{\decibel}$},
	xmin=8, xmax=12,
	xtickmin=0,
	ymin=4, ymax=8,
	xlabel={Out. prob. $\estpout$ [\%]},
	legend style={legend columns=5, at={(1,1.13)}},
	]
	\edef\pval{24}
	\addplot+[mark options={fill=darkRed}] table[col sep=tab, x expr=\thisrowno{2}*100, y index=1]{pout09/\dpdata_\dpGCNN.csv};
	\addplot+[mark options={fill=darkRed}, mark=triangle*, forget plot] table[col sep=tab, x expr=\thisrowno{2}*100, y index=1]{pout10/\dpdata_\dpGCNN.csv};
	\addplot+[mark options={fill=darkRed}, forget plot] table[col sep=tab, x expr=\thisrowno{2}*100, y index=1]{pout11/\dpdata_\dpGCNN.csv};
	
	\addplot+[mark options={fill=darkBlue}] table[col sep=tab, x expr=\thisrowno{2}*100, y index=1]{pout09/\dpdata_\dpFCNN.csv};
	\addplot+[mark options={fill=darkBlue}, mark=triangle*, forget plot] table[col sep=tab, x expr=\thisrowno{2}*100, y index=1]{pout10/\dpdata_\dpFCNN.csv};
	\addplot+[mark options={fill=darkBlue}, mark=square*, forget plot] table[col sep=tab, x expr=\thisrowno{2}*100, y index=1]{pout11/\dpdata_\dpFCNN.csv};

	\addplot+[mark options={fill=black}] table[col sep=tab, x expr=\thisrowno{2}*100, y index=1]{pout09/\dpdata_\dpSolveImperf.csv};
	\addplot+[mark options={fill=black}, mark=triangle*, forget plot] table[col sep=tab, x expr=\thisrowno{2}*100, y index=1]{pout10/\dpdata_\dpSolveImperf.csv};
	\addplot+[mark options={fill=black}, mark=square*, forget plot] table[col sep=tab, x expr=\thisrowno{2}*100, y index=1]{pout11/\dpdata_\dpSolveImperf.csv};
	
	\addplot+[mark options={color=darkRed}, mark=o, draw opacity=0.5] table[col sep=tab, x expr=\thisrowno{2}*100, y index=1]{pout09/\dpdata_\dpGCNNqtl.csv};
	\addplot+[mark options={color=darkRed}, mark=triangle, draw opacity=0.5, forget plot] table[col sep=tab, x expr=\thisrowno{2}*100, y index=1]{pout10/\dpdata_\dpGCNNqtl.csv};
	\addplot+[mark options={color=darkRed}, mark=square, draw opacity=0.5, forget plot] table[col sep=tab, x expr=\thisrowno{2}*100, y index=1]{pout11/\dpdata_\dpGCNNqtl.csv};

	\legend{U-G-HBF-GCN w/ \eqref{eq:qospenalty_stepapx_step}, U-G-HBF-FCN w/ \eqref{eq:qospenalty_stepapx_step}, G-HBF-marg, U-G-HBF-GCN w/ \eqref{eq:qospenalty_quantile}}

\end{groupplot}
\end{tikzpicture}
	\vspace{-3mm}
	\caption{Allocated power over empirical outage probability for different targets $\pout$: $0.09$ (circles), $0.10$ (triangles), $0.11$ (squares). 5 folds each. \label{fig:4x4_comparison}}
\end{figure*}

\begin{table*}
	\setlength{\tabcolsep}{4.0pt}
	\caption{Results for $M=4\times 4$ antennas with 3 users for nominal outage $\pout = 0.10$ for \gls{csi} degradation based on different $\ppilot$.\label{tab:4x4_comparison}}
	\begin{tabularx}{\linewidth}{Xrrrrrrrr}
		\toprule %
		& \multicolumn{2}{c}{$\ppilot \in [10\unit{\dB}, 24\unit{\decibel}]$} & \multicolumn{2}{c}{$\ppilot = 10\unit{\decibel}$} & \multicolumn{2}{c}{$\ppilot = 17\unit{\decibel}$} & \multicolumn{2}{c}{$\ppilot = 24\unit{\decibel}$}  \\
		\cmidrule(lr){2-3} \cmidrule(lr){4-5} \cmidrule(lr){6-7} \cmidrule(lr){8-9}
		\textbf{Methods} & $\norm{\vp}_1$ & $\estpout$ $[\%]$ & $\norm{\vp}_1$ & $\estpout$ $[\%]$ & $\norm{\vp}_1$ & $\estpout$ $[\%]$ & $\norm{\vp}_1$ & $\estpout$ $[\%]$\\
		\midrule
		G-HBF-perf & $6.61 \pm 0.10$ & $0.00 \pm 0.00$ & $6.62 \pm 0.17$ & $0.00 \pm 0.00$ & $6.62 \pm 0.11$ & $0.00 \pm 0.00$ & $6.70 \pm 0.04$ & $0.00 \pm 0.00$ \\
		G-HBF-marg & $8.86 \pm 0.17$ & $10.03 \pm 0.01$ & $15.44 \pm 0.40$ & $10.00 \pm 0.00$ & $8.41 \pm 0.17$ & $10.01 \pm 0.04$ & $7.27 \pm 0.05$ & $10.03 \pm 0.02$ \\
		U-G-HBF-FCN & $6.62 \pm 0.14$ & $9.93 \pm 0.55$ & $12.07 \pm 0.69$ & $10.59 \pm 0.57$ & $6.17 \pm 0.18$ & $9.86 \pm 1.03$ & $5.26 \pm 0.17$ & $9.28 \pm 0.49$ \\
		U-G-HBF-GCN & $6.28 \pm 0.07$ & $9.67 \pm 0.70$ & $11.04 \pm 0.20$ & $10.10 \pm 0.51$ & $5.92 \pm 0.17$ & $9.77 \pm 0.87$ & $5.14 \pm 0.15$ & $9.42 \pm 0.54$ \\
		U-FDBF-GCN & $3.98 \pm 0.07$ & $9.55 \pm 0.75$ & $6.04 \pm 0.15$ & $9.90 \pm 0.78$ & $3.83 \pm 0.13$ & $9.64 \pm 0.89$ & $3.45 \pm 0.00$ & $9.49 \pm 0.02$\\
		\midrule
		U-G-HBF-GCN w/ \eqref{eq:qospenalty_quantile} & $6.86 \pm 0.09$ & $8.83 \pm 0.43$ & $11.83 \pm 0.32$ & $9.74 \pm 0.30$ & $6.47 \pm 0.10$ & $9.10 \pm 0.83$ & $5.61 \pm 0.17$ & $8.04 \pm 0.44$\\
		\bottomrule
	\end{tabularx}
\end{table*}
We construct datasets according to the parameters summarized in Tab.~\ref{tab:simdata_param} with \gls{csi} degradation through noisy MMSE estimation with pilot power $\ppilot = \xi_i$.
The data is subdivided into 4 constraint groups with $\xi_i = 10\unit{\decibel}$, $\xi_i = 17\unit{\decibel}$, $\xi_i = 24\unit{\decibel}$ and $\xi_i$ uniformly distributed over $[\qty{10}{\decibel}, \qty{24}{\decibel}]$, respectively. 
We remark that for this setup, on a machine with Arch Linux and AMD Epyc 9554P CPU limited to 3 threads, the proposed U-G-HBF-GCN has a runtime of $1.9\pm 0.4 \unit{\milli \second}$ per system realization, while one training step has a runtime of approximately \qty{1}{\second}.
\par
The proposed and benchmark methods are compared in Fig.~\ref{fig:4x4_comparison} and Tab.~\ref{tab:4x4_comparison}.
To observe trends in the power-outage trade-off, we train models for nominal outage probabilities $\pout$ of $0.09$, $0.10$ and $0.11$, respectively.
The proposed learning-based methods significantly outperform the benchmark G-HBF-marg.
Moreover, U-G-HBF-GCN outperforms U-G-HBF-FCN, particularly if $\ppilot=\qty{10}{\dB}$ or if $\ppilot$ is uniformly distributed over $[{\qty{10}{\dB}, \qty{24}{\dB}}]$.
This demonstrates the efficacy of the \gls{gnn}-based approach, which leverages the similarity between channels of different users.
The performance of the methods converge as the \gls{csi} quality improves, \ie, as the $\ppilot$ increases, which is expected. %
Since a nonzero outage is targeted, the learning-based methods may allocate less power than G-HBF-perf.
The fully-digital U-FDBF-GCN achieves a significantly better power-outage trade-off due to the number of \gls{rf} chains of U-G-HBF-GCN, as will be seen in Sec.~\ref{sec:results}-\ref{sec:4x4_generalization}. %
Notably, we find that the proposed \gls{dn} models reliably generalize between different levels of \gls{csi} degradation, matching $\pout$ in most cases within $1\%$, thereby demonstrating the effectiveness of multiple constraints and side-information $\xi_i$.

\subsection{Comparison to Quantile-based Constraint} \label{sec:constraint_annealvsqtl}
\begin{figure}
	\centering
	\pgfplotsset{
	table/search path={journal/data/training}
}
\edef\pval{10}
\def\dpnoqtl{noqtl}
\def\dpqtl{qtl}

\tikzsetnextfilename{training_noqtl_qtl}

\begin{tikzpicture}
\begin{groupplot}[
	group style={
		group name=algcomp, 
		group size=2 by 2,
		horizontal sep=0.8cm, 
		vertical sep=0.4cm,
		}, 
		height=4.0cm, 
		width=0.54\linewidth, 
		xlabel style={align=center},
		xlabel near ticks, 
		ylabel near ticks,
		cycle list={},
		every axis plot/.append style={},
		every axis title/.append style={yshift=-7pt},
		]

	\nextgroupplot[
		title={annealing-based (prop.)},
		xmin=0, xmax=22000,
		xtickmin=0,
		xticklabels={},
		ymin=0, ymax=13,
		ylabel={Dual var. $\lambda_d$},
		scaled x ticks=false,
	]
	\addplot+[blue1] table[col sep=comma, x expr=\thisrowno{1}, y expr={\thisrowno{2}}]{\dpnoqtl/training_dual.csv};
	\addplot+[blue2] table[col sep=comma, x expr=\thisrowno{1}, y expr={\thisrowno{3}}]{\dpnoqtl/training_dual.csv};
	\addplot+[blue3] table[col sep=comma, x expr=\thisrowno{1}, y expr={\thisrowno{4}}]{\dpnoqtl/training_dual.csv};
	\addplot+[blue4] table[col sep=comma, x expr=\thisrowno{1}, y expr={\thisrowno{5}}]{\dpnoqtl/training_dual.csv};
	
	\nextgroupplot[
		title={quantile-based},
		xmin=0, xmax=41000,
		ymin=0, ymax=2.2,
		xtickmin=0,
		xticklabels={},
		scaled x ticks=false,
		legend style={legend columns=4, at={(1,1.15)}},
	]
	\addplot+[blue1] table[col sep=comma, x expr=\thisrowno{1}, y expr={\thisrowno{2}}]{\dpqtl/training_dual.csv};
	\addplot+[blue2] table[col sep=comma, x expr=\thisrowno{1}, y expr={\thisrowno{3}}]{\dpqtl/training_dual.csv};
	\addplot+[blue3] table[col sep=comma, x expr=\thisrowno{1}, y expr={\thisrowno{4}}]{\dpqtl/training_dual.csv};
	\addplot+[blue4] table[col sep=comma, x expr=\thisrowno{1}, y expr={\thisrowno{5}}]{\dpqtl/training_dual.csv};
	
	\legend{{$\in[10\unit{\dB},24\unit{\dB}]$}, {$10\unit{\dB}$}, {$17\unit{\dB}$}, {$24\unit{\dB}$}}
	
	\nextgroupplot[
		xmin=0, xmax=22000,
		xtickmin=0,
		ymin=0, ymax=25,
		xlabel={Step $\tau$},
		ylabel={Out. prob. $\estpout$ [\%]},
	]
	\addplot+[blue1] table[col sep=comma, x expr=\thisrowno{1}, y expr={\thisrowno{2}*100}]{\dpnoqtl/validation_outage.csv};
	\addplot+[blue2] table[col sep=comma, x expr=\thisrowno{1}, y expr={\thisrowno{3}*100}]{\dpnoqtl/validation_outage.csv};
	\addplot+[blue3] table[col sep=comma, x expr=\thisrowno{1}, y expr={\thisrowno{4}*100}]{\dpnoqtl/validation_outage.csv};
	\addplot+[blue4] table[col sep=comma, x expr=\thisrowno{1}, y expr={\thisrowno{5}*100}]{\dpnoqtl/validation_outage.csv};
	
	\nextgroupplot[
		xmin=0, xmax=41000,
		xtickmin=0,
		ymin=0, ymax=25,
		xlabel={Step $\tau$},
	]
	\addplot+[blue1] table[col sep=comma, x expr=\thisrowno{1}, y expr={\thisrowno{2}*100}]{\dpqtl/validation_outage.csv};
	\addplot+[blue2] table[col sep=comma, x expr=\thisrowno{1}, y expr={\thisrowno{3}*100}]{\dpqtl/validation_outage.csv};
	\addplot+[blue3] table[col sep=comma, x expr=\thisrowno{1}, y expr={\thisrowno{4}*100}]{\dpqtl/validation_outage.csv};
	\addplot+[blue4] table[col sep=comma, x expr=\thisrowno{1}, y expr={\thisrowno{5}*100}]{\dpqtl/validation_outage.csv};

\end{groupplot}
\end{tikzpicture}
	\caption{Empirical outage probability (validation) and dual variables corresponding to data groups of different \gls{csi} quality $\xi_i=\ppilot$ for the training run of fold 1. Nominal outage $\pout=0.1$.\label{fig:training_noqtl_qtl}}
\end{figure}
We compare our proposed adaptive annealing-based constraint in \eqref{eq:qospenalty_stepapx} to the quantile-based penalty in \cite{adamMachineLearningApproach2019, youDataAugmentationBased2021, psomasDesignAnalysisSWIPT2022, yingDeepLearningBasedJoint2024} with the U-G-HBF-GCN model.
Since the dual variables and dual gradients have different scales, we adjust the dual step size $\eta_\upm{d}$, see Tab.~\ref{tab:simdata_param}. %
\par
We observe in Fig.~\ref{fig:4x4_comparison} and Tab.~\ref{tab:4x4_comparison} that the proposed annealing-based constraint yields a slightly better power-outage trade-off, whereas the quantile-based method overfulfills the outage constraint for $\ppilot=\qty{24}{\dB}$.
Fig.~\ref{fig:training_noqtl_qtl} reveals the reason.
Although the relative per-step changes in the dual variables are similar for both methods, the empirical outage probability on the validation data exhibits substantially larger variability for the quantile-based approach, particularly at $\ppilot = 24\unit{\dB}$, which corresponds to the ``easiest'' constraint to satisfy.
Note that the different constraint variants converge to distinct values of $\lambda_d$ due to differences in the scaling of the associated constraint gradient.
We hypothesize that the smoother loss landscape induced by instance averaging in the annealing-based constraint reduces the gradient variance across different data subsets.
This also leads to faster training with $(26\pm 9)\cdot 10^{3}$ steps required for convergence compared to $(34\pm 8)\cdot 10^{3}$, averaged over $15$ runs.
\par
Note that the sensitivity of the proposed adaptive constraint \wrt the value of its hyperparameters is empirically low.
We do not observe a significant difference in model performance when changing the smoothing factor $\eta_\upm{c}$ to $0.002$ and $0.05$.
Thereby, $(29\pm 7)\cdot 10^3$ and $(26\pm 6)\cdot 10^3$ training steps are taken until convergence.
Similarly, increasing the annealing parameter bound $\overline{\beta}_{\upm{c}}$ from $50$ to $250$ does not significantly impact performance.
However, reducing $\overline{\beta}_{\upm{c}}$ to $10$ leads to overfulfillment of the outage constraints by $1.7\%$ on average, indicating that the approximation of the outage in \eqref{eq:qospenalty_stepapx} is not sufficiently close at this value.

\subsection{Generalization w.r.t. Number of Users}\label{sec:4x4_generalization}
\begin{table*}
	\setlength{\tabcolsep}{3.8pt}
	\caption{Results for $M=4\times 4$ antennas with 1 to 4 users for nominal outage $\pout = 0.10$ for \gls{csi} degradation based on different $\ppilot$. U-G-HBF-GCN* is trained with 2-3 users. \label{tab:4x4_generalization}}
	\begin{tabularx}{\linewidth}{cXrrrrrrrr}
		\toprule
		& & \multicolumn{2}{c}{1 user} & \multicolumn{2}{c}{2 users} & \multicolumn{2}{c}{3 users} & \multicolumn{2}{c}{4 users}\\
		\cmidrule(lr){3-4} \cmidrule(lr){5-6} \cmidrule(lr){7-8} \cmidrule(lr){9-10}
		$\ppilot$ & \textbf{Methods} & $\norm{\vp}_1$ & $\estpout$ $[\%]$ & $\norm{\vp}_1$ & $\estpout$ $[\%]$ & $\norm{\vp}_1$ & $\estpout$ $[\%]$ & $\norm{\vp}_1$ & $\estpout$ $[\%]$\\
		\midrule
		\multirow{5}{*}{$10\unit{\dB}$} & G-HBF-perf & $1.30 \pm 0.05$ & $0.00 \pm 0.00$ & $3.02 \pm 0.08$ & $0.00 \pm 0.00$ & $5.18 \pm 0.11$ & $0.00 \pm 0.00$ & $8.23 \pm 0.21$ & $0.00 \pm 0.00$ \\
		& G-HBF-marg & $1.67 \pm 0.06$ & $10.05 \pm 0.04$ & $4.56 \pm 0.13$ & $10.02 \pm 0.05$ & $10.23 \pm 0.26$ & $10.04 \pm 0.03$ & $29.81 \pm 1.78$ & $10.01 \pm 0.03$\\
		& U-G-HBF-GCN  & $1.34 \pm 0.05$ & $9.62 \pm 1.09$ & $3.70 \pm 0.22$ & $9.07 \pm 1.15$ & $7.66 \pm 0.51$ & $9.82 \pm 0.67$ & $16.23 \pm 1.24$ & $11.97 \pm 0.90$\\
		& U-G-HBF-GCN* & $1.31 \pm 0.06$ & $9.65 \pm 1.44$ & $3.53 \pm 0.10$ & $9.49 \pm 0.63$ & $7.72 \pm 0.50$ & $10.79 \pm 0.45$ & $17.35 \pm 1.57$ & $15.94 \pm 1.23$\\
		& U-FDBF-GCN & $1.26 \pm 0.04$ & $8.90 \pm 0.82$ & $3.32 \pm 0.09$ & $8.85 \pm 0.62$ & $6.38 \pm 0.27$ & $9.83 \pm 0.60$ & $11.73 \pm 0.93$ & $11.28 \pm 1.29$\\
		\midrule
		\multirow{5}{*}{$24\unit{\dB}$} & G-HBF-perf & $1.26 \pm 0.06$ & $0.00 \pm 0.00$ & $3.04 \pm 0.10$ & $0.00 \pm 0.00$ & $5.24 \pm 0.03$ & $0.00 \pm 0.00$ & $8.26 \pm 0.19$ & $0.00 \pm 0.00$\\
		& G-HBF-marg & $1.32 \pm 0.06$ & $10.05 \pm 0.04$ & $3.21 \pm 0.10$ & $10.02 \pm 0.05$ & $5.63 \pm 0.03$ & $10.01 \pm 0.03$ & $9.08 \pm 0.21$ & $9.98 \pm 0.05$\\
		& U-G-HBF-GCN  &  $1.04 \pm 0.03$ & $6.74 \pm 0.91$ & $2.43 \pm 0.05$ & $7.90 \pm 0.74$ & $4.19 \pm 0.11$ & $8.73 \pm 0.36$ & $6.82 \pm 0.15$ & $8.99 \pm 0.79$\\
		& U-G-HBF-GCN* &  $1.01 \pm 0.03$ & $7.28 \pm 0.65$ & $2.39 \pm 0.03$ & $8.33 \pm 0.88$ & $4.15 \pm 0.07$ & $9.28 \pm 0.36$ & $6.82 \pm 0.15$ & $9.93 \pm 0.75$
		\\
		& U-FDBF-GCN & $0.98 \pm 0.02$ & $7.14 \pm 0.81$ & $2.20 \pm 0.04$ & $7.89 \pm 0.91$ & $3.61 \pm 0.07$ & $8.69 \pm 0.30$ & $5.41 \pm 0.07$ & $8.91 \pm 0.43$\\
		\bottomrule
	\end{tabularx}
\end{table*}
In this experiment, we investigate the domain adaptation capability of the integrated \glspl{gnn}.
For this, the number of users is randomly sampled from one to four in each training step $\tau$. 
Simultaneously, the number of constraints is quadrupled to $16$ compared to Sec.~\ref{sec:results}-\ref{sec:4x4_comparison}.
\par
We present the results in Tab.~\ref{tab:4x4_generalization} for $\numrf=8$.
Clearly, a single U-G-HBF-GCN model reliably achieves the nominal outage probability within a small error for systems with one to four users and both low and high \gls{csi} quality, thus demonstrating an excellent domain adaptation capability.
An exception are single-user systems with $\ppilot=\qty{24}{\dB}$, where the outage constraint is overfulfilled.
The benchmark G-HBF-marg is again substantially outperformed, and the allocated power is close to G-HBF-perf if $\ppilot$ is high.
We are further interested in extrapolation performance if the tested user count is not contained in the training data set.
U-G-HBF-GCN* in Tab.~\ref{tab:4x4_generalization} denotes models that are trained with only systems of two or three users.
While the achieved power-outage trade-off for two or three users is similar to that of U-G-HBF-GCN, we observe comparable performance for one or four users at $\ppilot=\qty{24}{\dB}$.
When extrapolating to the most challenging case among the tested, four users and $\ppilot=\qty{10}{\dB}$, the outage of U-G-HBF-GCN* only degrades to to $15.94\% \pm 1.23\%$.
This generalization performance can be attributed to our use of a \gls{gcnn}, which is permutation-equivariant \wrt the users and, due to its parametrization, allows seamless application to arbitrary user counts and shift operator dimensions.
Compared to Sec.~\ref{sec:results}-\ref{sec:4x4_comparison}, hybrid \gls{bf} now achieves performance close to fully digital \gls{bf}, except for the case of 4 users.
This occurs at the established threshold $\numrf = 2I$, which likewise applies to hybrid \gls{bf} architectures with more general phase shifting networks \cite[Prop.~2]{sohrabiHybridDigitalAnalog2016}.

\subsection{Combined Statistical \& Instantaneous CSI} \label{sec:exp_statinst}
\begin{table*}
	\setlength{\tabcolsep}{3.8pt}
	\caption{Results for $M=8\times 8$ antennas for combined statistical and instantaneous CSI with DFT feedback quantization. \label{tab:8x8_statinst}}
	\begin{tabularx}{\linewidth}{cXrrrrrrrr}
		\toprule
		& & \multicolumn{2}{c}{$\numfbv \in\uset{4, \dots, 8}$} & \multicolumn{2}{c}{$\numfbv=4$} & \multicolumn{2}{c}{$\numfbv=6$} & \multicolumn{2}{c}{$\numfbv=8$}\\
		\cmidrule(lr){3-4} \cmidrule(lr){5-6} \cmidrule(lr){7-8} \cmidrule(lr){9-10}
		{\#}users $I$ & \textbf{Methods} & $\norm{\vp}_1$ & $\estpout$ $[\%]$ & $\norm{\vp}_1$ & $\estpout$ $[\%]$ & $\norm{\vp}_1$ & $\estpout$ $[\%]$ & $\norm{\vp}_1$ & $\estpout$ $[\%]$\\
		\midrule
		\multirow{4}{*}{$1$} & G-HBF-perf & $1.07 \pm 0.04$ & $0.00 \pm 0.00$ & $1.09 \pm 0.03$ & $0.00 \pm 0.00$ & $1.07 \pm 0.03$ & $0.00 \pm 0.00$ & $1.05 \pm 0.02$ & $0.00 \pm 0.00$\\
		& G-HBF-marg & $2.03 \pm 0.05$ & $10.01 \pm 0.04$ & $3.82 \pm 0.27$ & $10.00 \pm 0.03$ & $1.99 \pm 0.10$ & $10.00 \pm 0.03$ & $1.49 \pm 0.06$ & $9.97 \pm 0.05$\\
		& U-G-HBF-GCN & $2.30 \pm 0.11$ & $1.32 \pm 0.89$ & $4.33 \pm 0.30$ & $2.22 \pm 0.37$ & $2.22 \pm 0.12$ & $1.38 \pm 1.22$ & $1.61 \pm 0.09$ & $1.07 \pm 0.72$\\
		& U-G-HBF-2GCN & $1.05 \pm 0.05$ & $5.33 \pm 0.29$ & $1.22 \pm 0.06$ & $4.26 \pm 0.58$ & $1.05 \pm 0.03$ & $5.67 \pm 0.45$ & $0.95 \pm 0.02$ & $6.53 \pm 0.91$\\
		\midrule
		\multirow{4}{*}{$2$} & G-HBF-perf & $2.60 \pm 0.09$ & $0.00 \pm 0.00$ & $2.61 \pm 0.11$ & $0.04 \pm 0.08$ & $2.61 \pm 0.08$ & $0.00 \pm 0.00$ & $2.59 \pm 0.08$ & $0.01 \pm 0.01$\\
		& G-HBF-marg & $4.41 \pm 0.35$ & $10.02 \pm 0.03$ & $6.55 \pm 0.36$ & $10.00 \pm 0.04$ & $4.38 \pm 0.19$ & $10.01 \pm 0.05$ & $3.67 \pm 0.16$ & $10.01 \pm 0.05$\\
		& U-G-HBF-GCN & $4.35 \pm 0.13$ & $8.78 \pm 0.79$ & $6.30 \pm 0.63$ & $9.20 \pm 1.73$ & $4.22 \pm 0.25$ & $8.66 \pm 0.68$ & $3.43 \pm 0.15$ & $9.63 \pm 1.30$\\
		& U-G-HBF-2GCN  & $3.15 \pm 0.08$ & $9.64 \pm 0.68$ & $3.95 \pm 0.15$ & $9.91 \pm 1.00$ & $3.09 \pm 0.11$ & $9.85 \pm 1.09$ & $2.66 \pm 0.06$ & $9.68 \pm 0.81$\\
		\bottomrule
	\end{tabularx}
\end{table*}
Finally, we increase $M$ to $8\times8$, $\numrf$ to $16$ and sample $\gamma_i$ in the interval $[10\unit{\decibel},20\unit{\decibel}]$.
In addition, we consider \gls{dft} quantization with $D_\upm{c}=4$ groups, where $\numfbv=4$, $\numfbv=6$, $\numfbv=8$ and $\numfbv \in \uset{4,\dots, 8}$ uniformly, respectively.
\par
Furthermore, for each system instance $\ssample$, we sample $32$ instantaneous channel coefficients $(\vh_{i,s})_{s=1}^{32}$ and compute $\hbR_i$ as the covariance estimate of $(\hvh_{i,s})_{s=1}^{32}$.
The estimated statistical \gls{csi} $(\hbR_i)_i$ is used to choose the analog beamforming matrix $\bA$, while an additional uplink problem $\digbfmapul$ is solved to obtain digital precoders $\bB$ for each corresponding sampled coefficient $(\hvh_{i, s})_i$ individually.
In addition, denoted as U-G-HBF-2GCN, an unrolled \gls{dn} employing separate \glspl{gcnn}, and thus separate virtual channel covariances $\hbPsi_{i,j}$, for the analog beams and digital precoders is evaluated.
This mimics the slower reconfiguration capability of analog components, which therefore rely on the channel statistics, while the quickly configurable digital precoders leverage instantaneous information.
The outage probability is evaluated \wrt instantaneous \gls{csi}.%
\par
Tab.~\ref{tab:8x8_statinst} shows that the proposed architecture U-G-HBF-2GCN significantly outperforms the baseline G-HBF-marg for $2$ users and all investigated values of $\numfbv$, thereby clearly improving the allocated power if $\numfbv$ increases.
The outage constraint for a single user is overfulfilled while the corresponding dual variables remain zero during training.
This behavior may be caused by insufficient degrees of freedom in the \gls{gcnn} architecture or the channel mapping in \eqref{eq:est_ch_model}.
In comparison, U-G-HBF-GCN does not significantly outperform G-HBF-marg, demonstrating the benefit of separate \glspl{gcnn} for statistical and instantaneous \gls{csi}.

\section{Conclusion and Outlook}\label{sec:conclusion}
We propose a novel deep-learning-based model architecture for hybrid downlink beamforming under probabilistic \gls{qos} constraints.
To compute the model gradient, we derive a computationally efficient gradient for the widely applicable constrained uplink/downlink beamforming problem and establish sufficient conditions for its existence.
Experiments show that the proposed method can satisfy multiple outage constraints while outperforming the benchmarks in terms of total allocated power.
In addition, the proposed methods exhibit an excellent capability of generalizing to different levels of imperfect channel state information and numbers of users, even when extrapolating to an unseen number of users.
With a sufficient number of \gls{rf} chains, hybrid beamforming attains performance comparable to a reduced deep network architecture designed for fully digital beamforming.
We further demonstrate that an alternative annealing-based approach for enforcing probabilistic constraints reduces training time by smoothing the loss landscape, while also yielding slightly improved deep network performance. %
\par
Future work could explore virtual channel mappings with higher degrees of freedom, pruning techniques for the analog codebook, alternative hybrid architectures as well as multi-carrier and multi-cell systems.
Furthermore, reinforcement learning could be investigated to eliminate the requirement for exact channel state information during training.

\appendices

\section{Jacobi Matrices of the Baseband BF Problem}\label{sec:apx_jacobi}
In the following, the Jacobi matrices corresponding to the \gls{snle} \eqref{eq:realval_kkt} for the digital \gls{bf} problem are detailed.
We adopt the notation of \cite{hjorungnesComplexValuedMatrixDerivatives2011} that $\vect{\bPsi} = \bGamd \uhvpsid + \bGamr\uhvpsir + \ju \bGami\uhvpsii$, where $\uhvpsid\in\Rset^{\numrf}$, $\uhvpsir\in\Rset^{\numrf(\numrf-1)/2}$ and $\uhvpsii\in\Rset^{\numrf(\numrf-1)/2}$ are the diagonal, real upper triangular and imaginary upper triangular of $\hbPsi\in\Rset^{\numrf\times \numrf}$, respectively, and $\bGamd$, $\bGamr$ and $\bGami$ are the associated mapping matrices.
Further, note that we can rewrite $\uvg_{\upm{b}}$ in \eqref{eq:realval_kkt} as
\begin{align}
	\uvg_{\upm{b}}(\ubvzeta, \uhvpsi) 
		&= \bmat*{ \pdel*{\ubLambda_1(\vq, \uhvpsi)\ubvb_1}^{\!\T}~ \cdots~ \pdel*{\ubLambda_I(\vq, \uhvpsi)\ubvb_I}^{\!\T}}^\T
\end{align}
where we define
\begin{align}
	\ubLambda_i(\vq, \hvpsi) &= \uhbPsi_{0, i} - \gamma_i^{-1} q_i\uhbPsi_{i,i} + \sum_{j\neq i} q_j \uhbPsi_{j,i} \allowdisplaybreaks\\
	\text{and}\quad \uhbPsi_{i,j} &= \begin{bmatrix}
			\Re(\hbPsi_{i,j}) & \!\!-\matel{\Im(\hbPsi_{i,j})}{:, \neg \numrf}\\
			\matel{\Im(\hbPsi_{i,j})}{ \neg \numrf, :} & \!\! \matel{\Re(\hbPsi_{i,j})}{ \neg \numrf, \neg \numrf}
		\end{bmatrix}\!\!.
\end{align}

In the following, the arguments $(\ubvzeta, \uhvpsi)$ are omitted for readability.
\par
\subsubsection{Left-hand side} 
The left-hand side is decomposed as
\begin{align}
	\jcby[\ubvzeta]\uvg = 
		\begin{bmatrix}
			\jcby[\ubvb] \uvg_{\upm{b}} & \jcby[\vq] \uvg_{\upm{b}}\\
			\jcby[\ubvb] \uvg_{\upm{q}} & \jcby[\vq] \uvg_{\upm{q}}
		\end{bmatrix}.
\end{align}
For the upper left block, we have
\begin{align}
	\jcby[\ubvb]\uvg_{\upm{b}} &= \Blkdiag*{\jcby[\ubvb_1] \uvg_{\upm{b},1}, \dots, \jcby[\ubvb_I] \uvg_{\upm{b},I}}, \\
	\jcby[\ubvb_i] \uvg_{\upm{b},i} &=  \ubLambda_i(\vq, \uhvpsi).
\end{align}
The upper left block is itself composed of sub-blocks
\begin{align}
	\jcby[\vq] \uvg_{\upm{b}} &= \begin{bmatrix}
		\jcby[q_j] \uvg_{\upm{b}_i} \stackrel{j}{\longrightarrow}\\
		{\scriptstyle i}\!\big\downarrow~~~~~\ddots
	\end{bmatrix},  \\[1mm]
	\text{where } \jcby[q_j] \uvg_{\upm{b},i} &= \begin{cases}
		-\gamma_i^{-1}\uhbPsi_{i, i}\ubvb_i & \text{for $j=i$},\\
		\hbPsi_{j, i}\ubvb_i & \text{for $j\neq i$.}
	\end{cases}
\end{align}
Leveraging symmetry, the lower left block is constructed as
\begin{align}
	\jcby[\ubvb] \uvg_{\upm{q}} &= 2\Diag*{\vq}\jcby[\vq] \uvg_{\upm{b}}^\T, \label{eq:derivative_symm}
\end{align}
while for the lower right block, we have
\begin{align}
	\jcby[\vq] \uvg_{\upm{q}} \! &= \Diag{\jcby[q_1] \uvg_{\upm{q},1}, \cdots, \jcby[q_I] \uvg_{\upm{q},I}},\\
	\text{with } \jcby[q_i] \uvg_{\upm{q},i} \! &= 1 - \gamma_i^{-1} \ubvb_i^\T \uhbPsi_{i,i}\ubvb_i  + \sum_{j\neq i} \ubvb_j^\T \uhbPsi_{i,j} \ubvb_j.
\end{align}
Note that if the problem is feasible, the \gls{sinr} is exactly achieved \cite{bocheGeneralDualityTheory2002}, thus, $\jcby[\vq] \uvg_{\upm{q}}|_{\vzeta=\vzeta^\opt} = \nullmat$.
\par
\subsubsection{Right-hand side}
Decompose the right-hand side as 
\begin{align}
	\jcby[\uhvpsi_{i,j}] \uvg = \begin{bmatrix}
		\jcby[\uhvpsi_{i,j}]\uvg_{\upm{b}}\\
		\jcby[\uhvpsi_{i,j}]\uvg_{\upm{q}}
	\end{bmatrix}.
\end{align}
The upper block is constructed as
\begin{align}
	\jcby[\uhvpsi_{i,j}]\! \uvg_{\upm{b}} = \bmat*{{\scriptstyle k}\!\big\downarrow ~ \jcby[\uhvpsi_{\upm{d}, (i,j)}]\! \uvg_{\upm{b},k} ~ \jcby[\uhvpsi_{\upm{re}, (i,j)}]\! \uvg_{\upm{b},k} ~ \jcby[\uhvpsi_{\upm{im}, (i,j)}]\! \uvg_{\upm{b},k}},
\end{align}
where, for $\alpha \in \uset{\upm{d}, \upm{re}}$, we have ($\idmat$ has size $\numrf \times \numrf$)
\begin{align}
	\jcby[\uhvpsi_{\alpha,(i,j)}] \! \uvg_{\upm{b},k} \!&= \!\begin{cases}
		-\dfrac{q_i}{\gamma_i} \! \begin{bmatrix}
			(\Re(\bvb_i)^\T \! \kron\idmat)\bGamein \\  \matel{\Im(\bvb_i)^\T \! \kron\idmat}{:, \neg \numrf}\bGamein
		\end{bmatrix} & \!\!\!\! \!\!\!\begin{array}{l}
		\text{for~} \\ ~i = j = k,
		\end{array}\\
		q_i \! \begin{bmatrix}
			(\Re(\bvb_i)^\T \! \kron\idmat)\bGamein \\  \matel{\Im(\bvb_i)^\T \! \kron \idmat}{:, \neg \numrf}\bGamein
		\end{bmatrix} & \!\!\!\! \!\!\!\begin{array}{l}
			\text{for~}i\neq 0,\\~ i\neq j, j=k,
		\end{array}\\
		\begin{bmatrix}
			(\Re(\bvb_i)^\T \! \kron\idmat)\bGamein \\  \matel{\Im(\bvb_i)^\T \! \kron \idmat}{:, \neg \numrf}\bGamein
		\end{bmatrix} & \!\!\!\! \!\!\!\begin{array}{l}
			\text{for~} i= 0,\\~ i\neq j, j=k,
		\end{array}\\
		\nullmat & \!\!\!\! \text{otherwise.}
	\end{cases}
\end{align}
For $\alpha = \upm{im}$, we have
\begin{align}
	\jcby[\uhvpsi_{\upm{im},(i,j)}] \! \uvg_{\upm{b},k} \! &= \! \begin{cases}
		-\dfrac{q_i}{\gamma_i} \! \begin{bmatrix}
			(-\Im(\bvb_i)^\T \! \kron\idmat)\bGamein \\  \matel{\Re(\bvb_i)^\T \! \kron\idmat}{:, \neg \numrf}\bGamein
		\end{bmatrix} & \!\!\!\! \!\!\!\!\begin{array}{l}
		\text{for~} \\ ~i = j = k,
		\end{array}\\
		q_i \! \begin{bmatrix}
			(-\Im(\bvb_i)^\T \! \kron\idmat)\bGamein \\  \matel{\Re(\bvb_i)^\T \! \kron\idmat}{:, \neg \numrf}\bGamein
		\end{bmatrix} & \!\!\!\! \!\!\!\!\begin{array}{l}
			\text{for~} i\neq 0,\\~ i\neq j, j=k,
		\end{array}\\
		\begin{bmatrix}
			(-\Im(\bvb_i)^\T \! \kron\idmat)\bGamein \\  \matel{\Re(\bvb_i)^\T \! \kron\idmat}{:, \neg \numrf}\bGamein
		\end{bmatrix} & \!\!\!\! \!\!\!\!\begin{array}{l}
		\text{for~} i=0,\\ ~j=k,
		\end{array}\\
		\nullmat & \!\!\!\! \text{otherwise.}
	\end{cases} 
\end{align}
For the lower block, we similarly have
\begin{align}
	\jcby[\uhvpsi_{i,j}]\! \uvg_{\upm{q}} = \bmat*{{\scriptstyle k}\!\big\downarrow ~\jcby[\uhvpsi_{\upm{d}, (i,j)}]\! \uvg_{\upm{q},k} ~ \jcby[\uhvpsi_{\upm{re}, (i,j)}]\! \uvg_{\upm{q},k} ~ \jcby[\uhvpsi_{\upm{im}, (i,j)}]\! \uvg_{\upm{q},k}}, 
\end{align}
where for $\alpha \in \uset{\upm{d}, \upm{re}}$
\begin{align}
	\jcby[\uhvpsi_{\alpha,(i,j)}] \uvg_{\upm{q},k} &= \begin{cases}
		-\dfrac{q_i}{\gamma_i}\Re ( \vect{\bvb_i \bvb_i^\He}) \bGamein & \text{for $i=j=k$}\\
		q_i\Re(\vect{\bvb_j \bvb_j^\He})\bGamein & \text{for $i=k,i\neq j$}\\
		0 & \text{otherwise.}
	\end{cases}
\end{align}
Lastly, for $\alpha = \upm{im}$, we have
\begin{align}
	\jcby[\uhvpsi_{\upm{im},(i,j)}] \uvg_{\upm{q},k} &= \begin{cases}
		-\dfrac{q_i}{\gamma_i}\Im ( \vect{\bvb_i \bvb_i^\He}) \bGamein & \text{for $i=j=k$}\\
		q_i\Im(\vect{\bvb_j \bvb_j^\He})\bGamein & \text{for $i=k,i\neq j$}\\
		0 & \text{otherwise.}
	\end{cases}
\end{align}
\subsubsection{Beamformer normalization}
The associated Jacobi matrix with the beamformer normalization can be found as
\begin{align}
	&\jcby[\ubvzeta] \vf_\upm{BN}(\ubvzeta) \nonumber \\[-1mm]
	&~ = \Blkdiag*{\!\jcby[\ubvb_1] \breve{\vf}_\upm{BN}(\ubvb_1), \dots, \jcby[\ubvb_I] \breve{\vf}_\upm{BN}(\ubvb_I), \idmat_{I\times I}\!},\\
	&\jcby[\ubvb_i] \breve{\vf}_\upm{BN}(\ubvb_i) \nonumber\\[-1mm]
	&~ = \frac{1}{\norm{\bvb_i}^{\frac{3}{2}}} \!\left( \!\!\norm{\bvb_i}_2^2\idmat \!-\! \begin{bmatrix}
		\Re(\bvb_i)\Re(\bvb_i)^\T & \!\Re(\bvb_i)\matel{\Im(\bvb_i)}{\neg \numrf}^\T\\
		\Im(\bvb_i)\Re(\bvb_i)^\T & \!\Im(\bvb_i)\matel{\Im(\bvb_i)}{\neg \numrf}^\T
	\end{bmatrix}\!\right)\!.
\end{align}
\section{Identification of the Derivative}\label{sec:apx_derivative_proof}

\begin{proof}[Proof of \textbf{Theorem}~\ref{th:dlbf_derivative}] 
	\par
	We first argue that a solution of Problem~\eqref{eq:problem_ul_inexactcsi_digital} with $\matel{\Im(\bvb_i^\opt)}{\numrf} = 0$ for all $i$ corresponds to a critical point of $\uvg(\ubvzeta, \hvpsi)$.
	The semidefinite relaxation of the dual downlink problem corresponding to Problem~\eqref{eq:problem_ul_inexactcsi_digital} to the convex formulation in \eqref{eq:problem_dl_convex} is exact, \cf \cite{gershmanConvexOptimizationBasedBeamforming2010}, \cite[Th.~3.2]{huangRankConstrainedSeparableSemidefinite2010}.
	Since \gls{kkt} conditions are necessary and sufficient in this case \cite[Ch.~5.5.3]{boydConvexOptimization2004}, it follows from Lemma~\ref{lem:kkt_equivalence} that any solution $\oset{\vb_i^\opt, q_i^\opt}_i$ of Problem~\eqref{eq:problem_ul_inexactcsi_digital} has a corresponding dual feasible solution $\oset{\bvb_i^\opt, q_i^\opt}_i = \oset{\sqrt{p_i^\opt}\vb_i^\opt, q_i^\opt}_i$ to the \gls{snle} \eqref{eq:kkt_nonconvex_psd}-\eqref{eq:kkt_nonconvex_sinr}, where the downlink power $(p_i^\opt)_i$ is given by \eqref{eq:dlpow_calc_apx}.
	Due to arbitrariness of the baseband beamformers' phase rotations, we can assume without loss of generality that $\matel{\Im(\bvb_i^\opt)}{\numrf} = 0$ for all $i$.
	We thus conclude from the definition of $\uvg(\ubvzeta, \hvpsi)$ in \eqref{eq:realval_kkt} that $\uvg(\ubvzeta^\opt, \hvpsi) = \nullvec$, where $\ubvzeta^\opt$ is defined analogously to \eqref{eq:def_realval_primdual_var}.
	
	\par
	We now show by contradiction that $\jcby[\ubvzeta] \uvg(\ubvzeta^\opt, \uhvpsi)$ is not rank-deficient.
	Suppose $\jcby[\ubvzeta] \uvg(\ubvzeta^\opt, \uhvpsi)$ is rank-deficient, then $\jcby[\ubvzeta]{\uvg(\ubvzeta^\opt, \uvpsi)} \ubvzeta' = \nullvec$ for some $\ubvzeta' = \bmat{(\ubvb')^\T~~(\vq')^\T}^\T \neq \nullvec$.
	Leveraging the symmetry in \eqref{eq:derivative_symm}, this can be written as (the argument $\uhvpsi$ is omitted for readability)
	\begin{align}
		\jcby[\ubvb] \uvg_{\upm{b}} (\ubvzeta^\opt) \ubvb' + \frac{1}{2} \jcby[\vb] \uvg_{\upm{q}}^\T (\ubvzeta^\opt) \Diag{\vq^\opt}^{-1} \vq'&= \nullvec, \label{eq:rankdef_eq1}\\
		\jcby[\ubvb] \uvg_{\upm{q}} (\ubvzeta^\opt) \ubvb' &= \nullvec. \label{eq:rankdef_eq2}
	\end{align}
	First, consider the case $\ubvb' \neq \nullvec$.
	Multiplying $(\ubvb')^\T$ from the left to \eqref{eq:rankdef_eq1} and $(\vq')^\T \Diag{\vq^\opt}^{-1}$ to \eqref{eq:rankdef_eq2} yields
	\begin{align}
		\!(\ubvb')^\T \!\jcby[\ubvb] \uvg_{\upm{b}} (\ubvzeta^\opt)\ubvb' \!+\! \frac{1}{2}\ubvb^\T \! \jcby[\vq] \uvg_{\upm{q}}^\T (\ubvzeta^\opt) \Diag{\vq^\opt}^{-1} \vq'\!&= 0, \!\! \label{eq:rankdef_eq1step}\\
		\!(\vq')^\T \!\Diag{\vq^\opt}^{-1} \jcby[\ubvb] \uvg_{\upm{q}} (\ubvzeta^\opt) \ubvb' \!&= 0, \!\! \label{eq:rankdef_eq2step}
	\end{align}
	and thus $(\ubvb')^\T \jcby[\ubvb] \uvg_{\upm{b}} (\ubvzeta^\opt)\ubvb' = 0$ by substituting \eqref{eq:rankdef_eq2step} into \eqref{eq:rankdef_eq1step}.
	Since $\bLambda_i(\vq^\opt) \succeq\nullmat$ (dual feasible solution), we know that $\ubLambda_i(\vq^\opt) \succeq\nullmat$ and that $\jcby[\ubvb] \uvg_{\upm{b}}(\ubvzeta^\opt) \succeq \nullmat$ as well.
	Therefore, we deduce that $\ubvb' \in \nullspace{\jcby[\ubvb] \uvg_{\upm{b}}(\ubvzeta^\opt)}$, where $\nullspace{\cdot}$ denotes the matrix kernel.
	\par
	Let us now rewrite $\ubvb_i^\opt$ as $\ubvb_i^\opt = s_i^\opt\uvb_i^\opt$ with normalized real-valued baseband beamformers $\norm{\uvb_i^\opt}_2 = 1$ and scale $s_i^\opt > 0$ for all $i=1,\dots,I$.
	Further define $\hbC \pdel*{\oset{\vx_i}_i, \oset{\vy_i}_i} \in \Rset^{I \times I}$ as
	\begin{equation}
		\matel{\hbC \pdel*{\oset{\vx_i}_i, \oset{\vy_i}_i}}{j,k} = \begin{cases}
			\gamma_j^{-1} \vx_j^\T \uhbPsi_{j,j}\vy_j & \text{for $j=k$,}\\
			-\vx_k^\T \uhbPsi_{j,k}\vy_k & \text{for $j\neq k$,}
		\end{cases}
	\end{equation}
	where $\oset{\vx_i}_i$ and $\oset{\vy_i}_i$ are conformable arguments.
	Suppose that $\ubvb_i' = \pm s_i' \uvb_i^\opt$ for some $s_i' > 0$ for all $i$, which immediately implies $\ubvb' \in \nullspace{\jcby[\ubvb] \uvg_{\upm{b}}(\ubvzeta^\opt)}$.
	Additionally, rewrite the left-hand side of \eqref{eq:rankdef_eq2} as
	\begin{equation}
		\jcby[\ubvb] \uvg_{\upm{q}} (\vzeta^\opt) \ubvb' = - 2\Diag{\vq^\opt} \hbC \big(\oset{\uvb_i^\opt}_i, \oset{\uvb_i^\opt}_i\big) (\bs^\opt \hmul \bs'). \label{eq:proof_derivative_rewrite}
	\end{equation}
	We can identify the coupling matrix $\hbC \big(\oset{\uvb_i^\opt}_i, \oset{\uvb_i^\opt}_i\big) = \hbC_{\upm{C}}(\bB^\opt, \bA)$ (see definition in \eqref{eq:apx_coupling_matrix}), which has full rank since $\bB^\opt$ is a solution to the baseband optimization problem \cite{bocheGeneralDualityTheory2002}.
	It follows that \eqref{eq:proof_derivative_rewrite} can only be $\nullvec$ if $\bs' = \nullvec$.
	\par
	Therefore, suppose instead $\ubvb_i' = s_i' \uvb_i'$, where $\uvb_i' \neq \pm \uvb_i^\opt$ for some $i=k$.
	Since we require that $\ubvb' \in \nullspace{\jcby[\ubvb] \uvg_{\upm{b}}(\ubvzeta^\opt)}$, it is implied that $\dim{\nullspace{\ubLambda_k(\vq^\opt)}} \geq 2$.
	As such, there also exists a $\ubvzeta''$ with $\uvb_k'' \neq \uvb_k^\opt$ and $\vq'' = \vq^\opt$ such that $\ubLambda_k(\vq^\opt)\ubvb_k''$ and thus $\uvg_{\upm{b}} (\ubvzeta'') = \nullvec$.
	The complementarity condition $\uvg_{\upm{q}} (\ubvzeta'')$ can be expressed as
	\begin{align}
		&\uvg_{\upm{q}} (\ubvzeta'') = \Diag{\vq^\opt} \left(\onevec \!-\! \hbC \big(\oset{\uvb_i''}_i, \oset{\uvb_i''}_i\big) (\bs'' \!\hmul \bs'')\right) \!. \label{eq:proof_critpoint}
	\end{align}
	Due to the lower semi-continuity of the rank, if $\ubvzeta''$ is chosen sufficiently close to $\ubvzeta^\opt$, then $\hbC \big(\oset{\uvb_i''}_i, \oset{\uvb_i''}_i\big)$ still has rank $I$.
	Thus, we can find $\bs'' > \nullvec$ such that $\uvg_{\upm{q}} (\ubvzeta'') = \nullvec$ holds and $\ubvzeta''$ is a distinct primal and dual feasible critical point.
	In other words, $\ubvzeta''$ constructs a distinct solution $\oset{\breve{\bB}_i'', q_i''}_i$ of Problem~\eqref{eq:problem_dl_convex} different from $\oset{\breve{\bB}_i^\opt, q_i^\opt}_i = \oset{\bvb_i\bvb_i^\He, q_i^\opt}_i$.
	Since this violates Assumption~\ref{asm:dlbf_unique_sol}, we conclude that $\ubvb' = \nullvec$.
	\par
	Consider now $\vq'$.
	As $\ubvb' = \nullvec$, we have $\jcby[\vq] \uvg_{\upm{b}} (\ubvzeta^\opt) \vq'= \nullvec$.
	After left-multiplying $\Blkdiag{\ubvb_1^\opt, \dots, \ubvb^\opt_I}^\T$ on both sides, the following identity can be verified:
	\begin{align}
		\nullvec &= \Blkdiag{\ubvb_1^\opt, \dots, \ubvb^\opt_I} \jcby[\vq]\uvg_{\upm{b}} (\ubvzeta^\opt) \vq' \nonumber\\
		&= - \Diag{\bs^\opt \hmul \bs^\opt} \hbC^\T \big(\oset{\uvb_i^\opt}_i, \oset{\uvb_i^\opt}_i\big) \vq'. \label{eq:proof_qcase_identity}
	\end{align}
	Since both leading matrices in the second line of \eqref{eq:proof_qcase_identity} have full rank, we conclude that $\vq' = \nullvec$.
	Therefore, $\ubvzeta'= \nullvec$ is the only solution to
	$\jcby[\ubvzeta]{\uvg(\ubvzeta^\opt, \uvpsi)} \ubvzeta' = \nullvec$ and $\jcby[\ubvzeta]{\uvg(\ubvzeta^\opt, \uvpsi)}$ has full rank.
	\par
	As $\ubvzeta^\opt$ constructs a solution of Problem~\eqref{eq:problem_ul_inexactcsi_digital} with $\matel{\Im(\bvb_i^\opt)}{\numrf} = 0$ and is a critical point of $\uvg(\ubvzeta, \uvpsi)$, and additionally $\jcby[\ubvzeta]{\uvg(\ubvzeta^\opt, \uvpsi)}$ is nonsingular, the implicit function theorem \cite[Th.~8.2]{amannAnalysisII2008} is applicable.
	Application of the chain rule due to the composition with the normalization in $\eqref{eq:bfnormfun}$ immediately yields \eqref{eq:implicit_derivative}. 
	\end{proof}
	\begin{remark}
		The nonsingularity of $\jcby[\ubvzeta]{\uvg(\ubvzeta^\opt, \uvpsi)}$ implies that $\ubvzeta^\opt$ is an \underline{isolated} critical point of $\uvg(\ubvzeta^\opt, \uvpsi)$ \cite[Corollary 7.8]{amannAnalysisII2008}.
		The equations that have been removed in \eqref{eq:realval_kkt} are indeed redundant.
	\end{remark}
	
	\section{Pseudocode of the Training Procedure}\label{sec:training_proc}
	\begin{algorithm}[h]
		\small
		\caption{}\label{alg:approx_pi}
		\begin{algorithmic}
			\State \textbf{input} $\dataset_{\upm{train}}$, $\dataset_{\upm{val}}$, $\vtheta\idx{0}$, $\eta_{\upm{p}}$, $\eta_{\upm{d}}$, $\eta_{\upm{a}}$, $\overline{\beta}_\upm{c}$,  $\eta_{\upm{c}}$, $\tau_{\upm{pat}, 1}$, $\tau_{\upm{pat}, 2}$, $\tau_\upm{val}$, $\overline{\lambda}_d$
			\State $\tau, \kappa \gets 0$, $\vlambda\idx{0} \gets \nullvec$
			\State $\kappa_{\upm{pat},1} \gets \tau_{\upm{pat}, 1} // \tau_\upm{val}$, $\kappa_{\upm{pat},2} \gets \tau_{\upm{pat}, 2} // \tau_\upm{val}$
			\While{True}
			\If{$\tau \bmod \tau_\upm{val} = 0$}
			\State $J_\upm{cm}\idx{\kappa} \gets$ \eqref{eq:stopping_metric} with $\overline{\lambda}_d$
			\If{$\min\uset*{J_\upm{cm}\idx{\kappa'}}_{\!\kappa'=0}^{\!\kappa} < \min\uset*{J_\upm{cm}\idx{\kappa'}}_{\!\kappa' = \max\{0, \kappa - \kappa_{\upm{pat},1}\}}^{\!\kappa}$ }
			\State $\eta_{\upm{p}} \gets \eta_{\upm{a}}\eta_{\upm{p}}$, $\eta_{\upm{d}} \gets \eta_{\upm{a}}\eta_{\upm{d}}$ %
			\State $\eta_{\upm{a}} \gets 1$ %
			\EndIf
			\If{$\min\uset*{J_\upm{cm}\idx{\kappa'}}_{\!\kappa'=0}^{\!\kappa} < \min\uset*{J_\upm{cm}\idx{\kappa'}}_{\!\kappa' = \max\{0, \kappa - \kappa_{\upm{pat},2}\}}^{\!\kappa}$ }
			\State \textbf{break}
			\EndIf
			\State $\kappa \gets \kappa + 1$
			\EndIf
			\State Sample minibatch $\breve{\dataset} \sim \dataset_{\upm{train}}$
			 \State $\vtheta\idx{\tau+1} \gets \vtheta\idx{\tau} - \eta_{\upm{p}} \operatorname{AdamW}\pdel*{\nabla_{\vtheta} J(\breve{\dataset},
			 \vtheta\idx{\tau}, \vlambda\idx{\tau}), \vtheta\idx{\tau}}$ \cite{loshchilovDecoupledWeightDecay2019}
			 \Statex ~~~~~~~~~~~~~~~~~ with adaptive gradient clipping \cite{seetharamanAutoclipAdaptiveGradient2020},
			 
			\State $\vlambda\idx{\tau+1} \gets \bdel*{\vlambda\idx{\tau} + \eta_{\upm{d}}\nabla_{\vlambda} J(\breve{\dataset}, \vtheta\idx{\tau}, \vlambda\idx{\tau})}_0^\infty$
			\State $\beta_{\upm{c}}\idx{\tau+1} \gets$ \eqref{eq:qospenalty_stepapx_step} using $\breve{\dataset}$ with $\overline{\beta}_\upm{c}$, $\eta_{\upm{c}}$
			\State $\tau \gets \tau + 1$
			\EndWhile
			\State \textbf{return} $\vtheta\idx{\tau}$
		\end{algorithmic}
	\end{algorithm}

\bibliographystyle{IEEEtran}
\bibliography{journal_refs}

\begin{thebibliography}{10}
\providecommand{\url}[1]{#1}
\csname url@samestyle\endcsname
\providecommand{\newblock}{\relax}
\providecommand{\bibinfo}[2]{#2}
\providecommand{\BIBentrySTDinterwordspacing}{\spaceskip=0pt\relax}
\providecommand{\BIBentryALTinterwordstretchfactor}{4}
\providecommand{\BIBentryALTinterwordspacing}{\spaceskip=\fontdimen2\font plus
\BIBentryALTinterwordstretchfactor\fontdimen3\font minus
  \fontdimen4\font\relax}
\providecommand{\BIBforeignlanguage}[2]{{%
\expandafter\ifx\csname l@#1\endcsname\relax
\typeout{** WARNING: IEEEtran.bst: No hyphenation pattern has been}%
\typeout{** loaded for the language `#1'. Using the pattern for}%
\typeout{** the default language instead.}%
\else
\language=\csname l@#1\endcsname
\fi
#2}}
\providecommand{\BIBdecl}{\relax}
\BIBdecl

\bibitem{heathjrFoundationsMIMOCommunication2018}
R.~W. Heath~Jr and A.~Lozano, \emph{Foundations of {{MIMO Communication}}},
  1st~ed.\hskip 1em plus 0.5em minus 0.4em\relax Cambridge University Press,
  2018.

\bibitem{you6GWirelessCommunication2021}
X.~You, C.-X. Wang, J.~Huang, X.~Gao, Z.~Zhang, M.~Wang, Y.~Huang, C.~Zhang,
  Y.~Jiang, J.~Wang, M.~Zhu, B.~Sheng, D.~Wang, Z.~Pan, P.~Zhu, Y.~Yang,
  Z.~Liu, P.~Zhang, X.~Tao, S.~Li, Z.~Chen, X.~Ma, C.-L. I, S.~Han, K.~Li,
  C.~Pan, Z.~Zheng, L.~Hanzo, X.~S. Shen, Y.~J. Guo, Z.~Ding, H.~Haas, W.~Tong,
  P.~Zhu, G.~Yang, J.~Wang, E.~G. Larsson, H.~Q. Ngo, W.~Hong, H.~Wang, D.~Hou,
  J.~Chen, Z.~Chen, Z.~Hao, G.~Y. Li, R.~Tafazolli, Y.~Gao, H.~V. Poor, G.~P.
  Fettweis, and Y.-C. Liang, ``Towards {{6G}} wireless communication networks:
  Vision, enabling technologies, and new paradigm shifts,'' \emph{Science China
  Information Sciences}, vol.~64, no.~1, p. 110301, 2021.

\bibitem{iEnergyefficient5GGreener2020}
C.-L. I, S.~Han, and S.~Bian, ``Energy-{{Efficient 5G}} for a {{Greener
  Future}},'' \emph{Nature Electronics}, vol.~3, no.~4, pp. 182--184, 2020.

\bibitem{cheng5GNetworkDeployment2022}
X.~Cheng, Y.~Hu, and L.~Varga, ``{{5G Network Deployment}} and the {{Associated
  Energy Consumption}} in the {{UK}}: {{A Complex Systems}}' {{Exploration}},''
  \emph{Technological Forecasting and Social Change}, vol. 180, p. 121672,
  2022.

\bibitem{jiangRoad6GComprehensive2021}
W.~Jiang, B.~Han, M.~A. Habibi, and H.~D. Schotten, ``The {{Road Towards 6G}}:
  {{A Comprehensive Survey}},'' \emph{IEEE Open Journal of the Communications
  Society}, vol.~2, pp. 334--366, 2021.

\bibitem{molischHybridBeamformingMassive2017}
A.~F. Molisch, V.~V. Ratnam, S.~Han, Z.~Li, S.~L.~H. Nguyen, L.~Li, and
  K.~Haneda, ``Hybrid {{Beamforming}} for {{Massive MIMO}}: {{A Survey}},''
  \emph{IEEE Communications Magazine}, vol.~55, no.~9, pp. 134--141, 2017.

\bibitem{sohrabiHybridDigitalAnalog2016}
F.~Sohrabi and W.~Yu, ``Hybrid {{Digital}} and {{Analog Beamforming Design}}
  for {{Large-Scale Antenna Arrays}},'' \emph{IEEE Journal of Selected Topics
  in Signal Processing}, vol.~10, no.~3, pp. 501--513, 2016.

\bibitem{shlezingerArtificialIntelligenceEmpoweredHybrid2024}
N.~Shlezinger, M.~Ma, O.~Lavi, N.~T. Nguyen, Y.~C. Eldar, and M.~Juntti,
  ``Artificial {{Intelligence-Empowered Hybrid Multiple-input}}/multiple-output
  {{Beamforming}}: {{Learning}} to {{Optimize}} for {{High-Throughput Scalable
  MIMO}},'' \emph{IEEE Vehicular Technology Magazine}, vol.~19, no.~3, pp.
  58--67, 2024.

\bibitem{joudehRobustTransmissionDownlink2016}
H.~Joudeh and B.~Clerckx, ``Robust {{Transmission}} in {{Downlink Multiuser
  MISO Systems}}: {{A Rate-Splitting Approach}},'' \emph{IEEE Transactions on
  Signal Processing}, vol.~64, no.~23, pp. 6227--6242, 2016.

\bibitem{bjornsonOptimalMultiuserTransmit2014}
E.~Bjornson, M.~Bengtsson, and B.~Ottersten, ``Optimal {{Multiuser Transmit
  Beamforming}}: {{A Difficult Problem}} with a {{Simple Solution Structure}}
  [{{Lecture Notes}}],'' \emph{IEEE Signal Processing Magazine}, vol.~31,
  no.~4, pp. 142--148, 2014.

\bibitem{nguyenMMSEPrecodingMultiuser2014}
D.~H. Nguyen and T.~{Le-Ngoc}, ``{{MMSE Precoding}} for {{Multiuser MISO
  Downlink Transmission}} with {{Non-Homogeneous User SNR Conditions}},''
  \emph{EURASIP Journal on Advances in Signal Processing}, vol. 2014, no.~1,
  p.~85, 2014.

\bibitem{bjornsonMassiveMIMOImperfect2016}
E.~Bjornson, L.~Sanguinetti, and M.~Debbah, ``Massive {{MIMO}} with {{Imperfect
  Channel Covariance Information}},'' in \emph{2016 50th {{Asilomar
  Conference}} on {{Signals}}, {{Systems}} and {{Computers}}}.\hskip 1em plus
  0.5em minus 0.4em\relax Pacific Grove, CA, USA: IEEE, 2016, pp. 974--978.

\bibitem{wangOutageConstrainedRobust2014}
K.-Y. Wang, A.~M.-C. So, T.-H. Chang, W.-K. Ma, and C.-Y. Chi, ``Outage
  {{Constrained Robust Transmit Optimization}} for {{Multiuser MISO
  Downlinks}}: {{Tractable Approximations}} by {{Conic Optimization}},''
  \emph{IEEE Transactions on Signal Processing}, vol.~62, no.~21, pp.
  5690--5705, 2014.

\bibitem{botrosshenoudaProbabilisticallyConstrainedApproachesDesign2008}
M.~Botros~Shenouda and T.~N. Davidson, ``Probabilistically-{{Constrained
  Approaches}} to the {{Design}} of the {{Multiple Antenna Downlink}},'' in
  \emph{2008 42nd {{Asilomar Conference}} on {{Signals}}, {{Systems}} and
  {{Computers}}}.\hskip 1em plus 0.5em minus 0.4em\relax Pacific Grove, CA,
  USA: IEEE, 2008, pp. 1120--1124.

\bibitem{wajidRobustDownlinkBeamforming2013}
I.~Wajid, M.~Pesavento, Y.~C. Eldar, and D.~Ciochina, ``Robust {{Downlink
  Beamforming With Partial Channel State Information}} for {{Conventional}} and
  {{Cognitive Radio Networks}},'' \emph{IEEE Transactions on Signal
  Processing}, vol.~61, no.~14, pp. 3656--3670, 2013.

\bibitem{xuRobustPowerControl2015}
Y.~Xu, X.~Zhao, and Y.-C. Liang, ``Robust {{Power Control}} and {{Beamforming}}
  in {{Cognitive Radio Networks}}: {{A Survey}},'' \emph{IEEE Communications
  Surveys \& Tutorials}, vol.~17, no.~4, pp. 1834--1857, 2015.

\bibitem{vucicRobustQoSConstrainedOptimization2009}
N.~Vucic and H.~Boche, ``Robust {{QoS-Constrained Optimization}} of {{Downlink
  Multiuser MISO Systems}},'' \emph{IEEE Transactions on Signal Processing},
  vol.~57, no.~2, pp. 714--725, 2009.

\bibitem{zhangDeepLearningMobile2019}
C.~Zhang, P.~Patras, and H.~Haddadi, ``Deep {{Learning}} in {{Mobile}} and
  {{Wireless Networking}}: {{A Survey}},'' \emph{IEEE Communications Surveys \&
  Tutorials}, vol.~21, no.~3, pp. 2224--2287, 2019.

\bibitem{shafinArtificialIntelligenceEnabledCellular2020}
R.~Shafin, L.~Liu, V.~Chandrasekhar, H.~Chen, J.~Reed, and J.~C. Zhang,
  ``Artificial {{Intelligence-Enabled Cellular Networks}}: {{A Critical Path}}
  to {{Beyond-5G}} and {{6G}},'' \emph{IEEE Wireless Communications}, vol.~27,
  no.~2, pp. 212--217, 2020.

\bibitem{gregorLearningFastApproximations2010}
K.~Gregor and Y.~LeCun, ``Learning {{Fast Approximations}} of {{Sparse
  Coding}},'' in \emph{Proceedings of the 27th {{International Conference}} on
  {{International Conference}} on {{Machine Learning}}}, ser. {{ICML}}'10,
  2010, pp. 399--406.

\bibitem{shlezingerModelBasedDeepLearning2022a}
N.~Shlezinger, Y.~C. Eldar, and S.~P. Boyd, ``Model-{{Based Deep Learning}}:
  {{On}} the {{Intersection}} of {{Deep Learning}} and {{Optimization}},''
  \emph{IEEE Access}, vol.~10, pp. 115\,384--115\,398, 2022.

\bibitem{schynolCoordinatedSumRateMaximization2023}
L.~Schynol and M.~Pesavento, ``Coordinated {{Sum-Rate Maximization}} in
  {{Multicell MU-MIMO}} with {{Deep Unrolling}},'' \emph{IEEE Journal on
  Selected Areas in Communications}, vol.~41, pp. 1120--1134, 2023.

\bibitem{wangHybridPrecoderCombiner2018}
Z.~Wang, M.~Li, Q.~Liu, and A.~L. Swindlehurst, ``Hybrid {{Precoder}} and
  {{Combiner Design With Low-Resolution Phase Shifters}} in {{mmWave MIMO
  Systems}},'' \emph{IEEE Journal of Selected Topics in Signal Processing},
  vol.~12, no.~2, pp. 256--269, 2018.

\bibitem{ayachSpatiallySparsePrecoding2014}
O.~E. Ayach, S.~Rajagopal, S.~{Abu-Surra}, Z.~Pi, and R.~W. Heath, ``Spatially
  {{Sparse Precoding}} in {{Millimeter Wave MIMO Systems}},'' \emph{IEEE
  Transactions on Wireless Communications}, vol.~13, no.~3, pp. 1499--1513,
  2014.

\bibitem{alkhateebChannelEstimationHybrid2014}
A.~Alkhateeb, O.~El~Ayach, G.~Leus, and R.~W. Heath, ``Channel {{Estimation}}
  and {{Hybrid Precoding}} for {{Millimeter Wave Cellular Systems}},''
  \emph{IEEE Journal of Selected Topics in Signal Processing}, vol.~8, no.~5,
  pp. 831--846, 2014.

\bibitem{elbirTwentyFiveYearsAdvances2023}
A.~M. Elbir, K.~V. Mishra, S.~A. Vorobyov, and R.~W. Heath, ``Twenty-{{Five
  Years}} of {{Advances}} in {{Beamforming}}: {{From Convex}} and {{Nonconvex
  Optimization}} to {{Learning Techniques}},'' \emph{IEEE Signal Processing
  Magazine}, vol.~40, no.~4, pp. 118--131, 2023.

\bibitem{mallaTransmissionStrategiesMultiuser2017}
S.~Malla and G.~Abreu, ``Transmission {{Strategies}} in {{Multi-User Millimeter
  Wave Systems}},'' in \emph{2017 {{International Symposium}} on {{Wireless
  Communication Systems}} ({{ISWCS}})}.\hskip 1em plus 0.5em minus 0.4em\relax
  Bologna: IEEE, 2017, pp. 54--59.

\bibitem{tajallifarQoSAwareHybridBeamforming2021}
M.~Tajallifar, A.~R. Sharafat, and H.~Yanikomeroglu, ``{{QoS-Aware Hybrid
  Beamforming With Minimal Power}} in {{mmWave Massive MIMO Systems}},''
  \emph{IEEE Access}, vol.~9, pp. 164\,668--164\,680, 2021.

\bibitem{chenHybridBeamformingData2022}
B.-Y. Chen, Y.-F. Chen, and S.-M. Tseng, ``Hybrid {{Beamforming}} and {{Data
  Stream Allocation Algorithms}} for {{Power Minimization}} in {{Multi-User
  Massive MIMO-OFDM Systems}},'' \emph{IEEE Access}, vol.~10, pp.
  101\,898--101\,912, 2022.

\bibitem{kimHybridBeamformingBased2024}
S.~H. Kim and H.~Jin, ``Hybrid {{Beamforming Based SWIPT System Satisfying
  Individual Rate}} and {{Energy Constraints}},'' \emph{IEEE Access}, vol.~12,
  pp. 5617--5629, 2024.

\bibitem{vazquezMultiuserDownlinkHybrid2017}
M.~A. Vazquez, L.~Blanco, and A.~{Perez-Neira}, ``Multiuser {{Downlink Hybrid
  Analog-Digital Beamforming}} with {{Individual SINR Constraints}},'' in
  \emph{{{WSA}} 2017; 21th International {{ITG}} Workshop on Smart Antennas},
  2017, pp. 1--6.

\bibitem{zangPartiallyConnectedHybridBeamforming2020}
G.~Zang, L.~Hu, F.~Yang, L.~Ding, and H.~Liu, ``Partially-{{Connected Hybrid
  Beamforming}} for {{Multi-User Massive MIMO Systems}},'' \emph{IEEE Access},
  vol.~8, pp. 215\,287--215\,298, 2020.

\bibitem{wenQoSGuaranteedHybridBeamforming2023}
L.~Wen, H.~Qian, M.~Li, X.~Luo, and K.~Kang, ``{{QoS-Guaranteed Hybrid
  Beamforming Design}} for {{Multi-User Systems With Finite-Resolution Phase
  Shifters}},'' \emph{IEEE Transactions on Green Communications and
  Networking}, vol.~7, no.~4, pp. 1678--1691, 2023.

\bibitem{yePowerMinimizationHybrid2016}
R.~Ye, S.~He, Y.~Huang, B.~Jiang, and M.~Su, ``Power {{Minimization Hybrid
  Precoding}} for {{Millimeter Wave Communication Systems}},'' in \emph{2016
  {{IEEE International Conference}} on {{Communication Systems}}
  ({{ICCS}})}.\hskip 1em plus 0.5em minus 0.4em\relax Shenzhen, China: IEEE,
  2016, pp. 1--6.

\bibitem{hegdeHybridBeamformingLargescale2017}
G.~Hegde, Y.~Cheng, and M.~Pesavento, ``Hybrid {{Beamforming}} for
  {{Large-Scale MIMO Systems Using Uplink-Downlink Duality}},'' in \emph{2017
  {{IEEE International Conference}} on {{Acoustics}}, {{Speech}} and {{Signal
  Processing}} ({{ICASSP}})}.\hskip 1em plus 0.5em minus 0.4em\relax New
  Orleans, LA: IEEE, 2017, pp. 3484--3488.

\bibitem{tajallifarRobustFeasibleQoSAware2024}
M.~Tajallifar, A.~R. Sharafat, and H.~Yanikomeroglu, ``Robust and {{Feasible
  QoS-Aware mmWave Massive MIMO Hybrid Beamforming}},'' \emph{IEEE Transactions
  on Wireless Communications}, vol.~23, no.~2, pp. 1520--1534, 2024.

\bibitem{huIterativeAlgorithmInduced2021}
Q.~Hu, Y.~Cai, Q.~Shi, K.~Xu, G.~Yu, and Z.~Ding, ``Iterative {{Algorithm
  Induced Deep-Unfolding Neural Networks}}: {{Precoding Design}} for
  {{Multiuser MIMO Systems}},'' \emph{IEEE Transactions on Wireless
  Communications}, vol.~20, no.~2, pp. 1394--1410, 2021.

\bibitem{jinModelDrivenDeepLearning2023}
W.~Jin, J.~Zhang, C.-K. Wen, and S.~Jin, ``Model-{{Driven Deep Learning}} for
  {{Hybrid Precoding}} in {{Millimeter Wave MU-MIMO System}},'' \emph{IEEE
  Transactions on Communications}, vol.~71, no.~10, pp. 5862--5876, 2023.

\bibitem{jinLowComplexityJointBeamforming2024}
W.~Jin, J.~Zhang, C.-K. Wen, S.~Jin, X.~Li, and S.~Han, ``Low-{{Complexity
  Joint Beamforming}} for {{RIS-Assisted MU-MISO Systems Based}} on
  {{Model-Driven Deep Learning}},'' \emph{IEEE Transactions on Wireless
  Communications}, vol.~23, no.~7, pp. 6968--6982, 2024.

\bibitem{liuRobustDownlinkPrecoding2022}
Y.~Liu, Y.~Wang, J.~Wang, L.~You, W.~Wang, and X.~Gao, ``Robust {{Downlink
  Precoding}} for {{LEO Satellite Systems With Per-Antenna Power
  Constraints}},'' \emph{IEEE Transactions on Vehicular Technology}, vol.~71,
  no.~10, pp. 10\,694--10\,711, 2022.

\bibitem{shiDeepLearningBasedRobust2021}
J.~Shi, W.~Wang, X.~Yi, X.~Gao, and G.~Y. Li, ``Deep {{Learning-Based Robust
  Precoding}} for {{Massive MIMO}},'' \emph{IEEE Transactions on
  Communications}, vol.~69, no.~11, pp. 7429--7443, 2021.

\bibitem{xiaDeepLearningFramework2020}
W.~Xia, G.~Zheng, Y.~Zhu, J.~Zhang, J.~Wang, and A.~P. Petropulu, ``A {{Deep
  Learning Framework}} for {{Optimization}} of {{MISO Downlink Beamforming}},''
  \emph{IEEE Transactions on Communications}, vol.~68, no.~3, pp. 1866--1880,
  2020.

\bibitem{laviLearnRapidlyRobustly2023}
O.~Lavi and N.~Shlezinger, ``Learn to {{Rapidly}} and {{Robustly Optimize
  Hybrid Precoding}},'' \emph{IEEE Transactions on Communications}, vol.~71,
  no.~10, pp. 5814--5830, 2023.

\bibitem{wangRobustHybridBeamforming2025}
L.~Wang, S.~A. Vorobyov, and E.~Ollila, ``Robust {{Hybrid Beamforming}} for
  {{Integrated Sensing}} and {{Communications}} via {{Learned Optimization}},''
  in \emph{{{ICASSP}} 2025 - 2025 {{IEEE International Conference}} on
  {{Acoustics}}, {{Speech}} and {{Signal Processing}} ({{ICASSP}})}.\hskip 1em
  plus 0.5em minus 0.4em\relax Hyderabad, India: IEEE, 2025, pp. 1--5.

\bibitem{liuBiLevelDeepUnfolding2024}
W.~Liu, H.~Xu, X.~He, Y.~Ye, and A.~Zhou, ``Bi-{{Level Deep Unfolding Based
  Robust Beamforming Design}} for {{IRS-Assisted ISAC System}},'' \emph{IEEE
  Access}, vol.~12, pp. 76\,663--76\,672, 2024.

\bibitem{shresthaOptimalSolutionsJoint2023}
S.~Shrestha, X.~Fu, and M.~Hong, ``Optimal {{Solutions}} for {{Joint
  Beamforming}} and {{Antenna Selection}}: {{From Branch}} and {{Bound}} to
  {{Graph Neural Imitation Learning}},'' \emph{IEEE Transactions on Signal
  Processing}, vol.~71, pp. 831--846, 2023.

\bibitem{psomasDesignAnalysisSWIPT2022}
C.~Psomas, M.~You, K.~Liang, G.~Zheng, and I.~Krikidis, ``Design and
  {{Analysis}} of {{SWIPT With Safety Constraints}},'' \emph{Proceedings of the
  IEEE}, vol. 110, no.~1, pp. 107--126, 2022.

\bibitem{yingDeepLearningBasedJoint2024}
M.~Ying, X.~Chen, Q.~Qi, and W.~Gerstacker, ``Deep {{Learning-Based Joint
  Channel Prediction}} and {{Multibeam Precoding}} for {{LEO Satellite
  Internet}} of {{Things}},'' \emph{IEEE Transactions on Wireless
  Communications}, vol.~23, no.~10, pp. 13\,946--13\,960, 2024.

\bibitem{jangDeepLearningApproach2022}
H.~S. Jang, H.~Lee, and T.~Q.~S. Quek, ``Deep {{Learning Approach}} for
  {{Outage-Constrained Non-Orthogonal Random Access}},'' \emph{IEEE Wireless
  Communications Letters}, vol.~11, no.~3, pp. 645--649, 2022.

\bibitem{youDataAugmentationBased2021}
M.~You, G.~Zheng, and H.~Sun, ``A {{Data Augmentation}} based {{DNN Approach}}
  for {{Outage-Constrained Robust Beamforming}},'' in \emph{2021 {{IEEE
  International Conference}} on {{Communications}} ({{ICC}})}.\hskip 1em plus
  0.5em minus 0.4em\relax Montreal, QC, Canada: IEEE, 2021, pp. 1--5.

\bibitem{liangDataModelDrivenDeep2024}
K.~Liang, G.~Zheng, Z.~Li, K.-K. Wong, and C.-B. Chae, ``A {{Data}} and
  {{Model-Driven Deep Learning Approach}} to {{Robust Downlink Beamforming
  Optimization}},'' \emph{IEEE Journal on Selected Areas in Communications},
  vol.~42, no.~11, pp. 3278--3292, 2024.

\bibitem{heCodebookBasedHybridPrecoding2017}
S.~He, J.~Wang, Y.~Huang, B.~Ottersten, and W.~Hong, ``Codebook-{{Based Hybrid
  Precoding}} for {{Millimeter Wave Multiuser Systems}},'' \emph{IEEE
  Transactions on Signal Processing}, vol.~65, no.~20, pp. 5289--5304, 2017.

\bibitem{schynolCodebookBasedDownlinkBeamforming2024}
L.~Schynol, M.~Hemsing, and M.~Pesavento, ``Codebook-{{Based Downlink
  Beamforming}} with {{Imperfect CSI Using Model-Driven Deep Learning}},'' in
  \emph{2024 58th {{Asilomar Conference}} on {{Signals}}, {{Systems}}, and
  {{Computers}}}.\hskip 1em plus 0.5em minus 0.4em\relax Pacific Grove, CA,
  USA: IEEE, 2024, pp. 583--590.

\bibitem{shaoSimpleWayApproximate2017}
M.~Shao and W.-K. Ma, ``A {{Simple Way}} to {{Approximate Average Robust
  Multiuser MISO Transmit Optimization Under Covariance-Based CSIT}},'' in
  \emph{2017 {{IEEE International Conference}} on {{Acoustics}}, {{Speech}} and
  {{Signal Processing}} ({{ICASSP}})}.\hskip 1em plus 0.5em minus 0.4em\relax
  New Orleans, LA: IEEE, 2017, pp. 3504--3508.

\bibitem{eisenOptimalWirelessResource2020}
M.~Eisen and A.~Ribeiro, ``Optimal {{Wireless Resource Allocation With Random
  Edge Graph Neural Networks}},'' \emph{IEEE Transactions on Signal
  Processing}, vol.~68, pp. 2977--2991, 2020.

\bibitem{songNetworkDualityMultiuser2007}
B.~Song, R.~L. Cruz, and B.~D. Rao, ``Network {{Duality}} for {{Multiuser MIMO
  Beamforming Networks}} and {{Applications}},'' \emph{IEEE Transactions on
  Communications}, vol.~55, no.~3, pp. 618--630, 2007.

\bibitem{rashid-farrokhiJointOptimalPower1998}
F.~{Rashid-Farrokhi}, L.~Tassiulas, and K.~R. Liu, ``Joint {{Optimal Power
  Control}} and {{Beamforming}} in {{Wireless Networks Using Antenna
  Arrays}},'' \emph{IEEE Transactions on Communications}, vol.~46, no.~10, pp.
  1313--1324, 1998.

\bibitem{schubertSolutionMultiuserDownlink2004}
M.~Schubert and H.~Boche, ``Solution of the {{Multiuser Downlink Beamforming
  Problem With Individual SINR Constraints}},'' \emph{IEEE Transactions on
  Vehicular Technology}, vol.~53, no.~1, pp. 18--28, 2004.

\bibitem{kipfSemiSupervisedClassificationGraph2017}
T.~N. Kipf and M.~Welling, ``Semi-{{Supervised Classification}} with {{Graph
  Convolutional Networks}},'' in \emph{Proceedings of the 5th {{International
  Conference}} on {{Learning Representations}} ({{ICLR}})}, ser. {{ICLR}}'17,
  Toulon, France, 2017.

\bibitem{battagliaRelationalInductiveBiases2018}
P.~W. Battaglia, J.~B. Hamrick, V.~Bapst, A.~{Sanchez-Gonzalez}, V.~Zambaldi,
  M.~Malinowski, A.~Tacchetti, D.~Raposo, A.~Santoro, R.~Faulkner, C.~Gulcehre,
  F.~Song, A.~Ballard, J.~Gilmer, G.~Dahl, A.~Vaswani, K.~Allen, C.~Nash,
  V.~Langston, C.~Dyer, N.~Heess, D.~Wierstra, P.~Kohli, M.~Botvinick,
  O.~Vinyals, Y.~Li, and R.~Pascanu, ``Relational {{Inductive Biases}}, {{Deep
  Learning}}, and {{Graph Networks}},'' 2018, arXiv:1806.01261.

\bibitem{gamaGraphsConvolutionsNeural2020}
F.~Gama, E.~Isufi, G.~Leus, and A.~Ribeiro, ``Graphs, {{Convolutions}}, and
  {{Neural Networks}}: {{From Graph Filters}} to {{Graph Neural Networks}},''
  \emph{IEEE Signal Processing Magazine}, vol.~37, no.~6, pp. 128--138, 2020.

\bibitem{agustssonSofttoHardVectorQuantization2017}
E.~Agustsson, F.~Mentzer, M.~Tschannen, L.~Cavigelli, R.~Timofte, L.~Benini,
  and L.~Van~Gool, ``Soft-to-{{Hard Vector Quantization}} for {{End-to-End
  Learning Compressible Representations}},'' in \emph{Proceedings of the 31st
  {{International Conference}} on {{Neural Information Processing Systems}}
  ({{NeurIPS}})}.\hskip 1em plus 0.5em minus 0.4em\relax Long Beach,
  California, USA: Curran Associates Inc., 2017, pp. 1141--1151.

\bibitem{jangCategoricalReparameterizationGumbelSoftmax2017}
E.~Jang, S.~Gu, and B.~Poole, ``Categorical {{Reparameterization}} with
  {{Gumbel-Softmax}},'' in \emph{Proceedings of the 5th {{International
  Conference}} on {{Learning Representations}} ({{ICLR}})}, Toulon, France,
  2017.

\bibitem{magnusDifferentiatingEigenvaluesEigenvectors1985}
J.~R. Magnus, ``On {{Differentiating Eigenvalues}} and {{Eigenvectors}},''
  \emph{Econometric Theory}, vol.~1, no.~2, pp. 179--191, 1985.

\bibitem{boydConvexOptimization2004}
S.~P. Boyd and L.~Vandenberghe, \emph{Convex {{Optimization}}}.\hskip 1em plus
  0.5em minus 0.4em\relax Cambridge, UK ; New York: Cambridge University Press,
  2004.

\bibitem{amannAnalysisII2008}
H.~Amann and J.~Escher, \emph{Analysis {{II}}}.\hskip 1em plus 0.5em minus
  0.4em\relax Basel: Birkh\"auser, 2008.

\bibitem{agrawalDifferentiableConvexOptimization2019}
A.~Agrawal, B.~Amos, S.~Barratt, S.~Boyd, S.~Diamond, and J.~Z. Kolter,
  ``Differentiable {{Convex Optimization Layers}},'' in \emph{Proceedings of
  the 33rd {{Conference}} on {{Advances}} in {{Neural Information Processing
  Systems}} ({{NeurIPS}})}.\hskip 1em plus 0.5em minus 0.4em\relax Red Hook,
  NY, USA: Curran Associates Inc., 2019, pp. 9562--9574.

\bibitem{blondelEfficientModularImplicit2022}
M.~Blondel, Q.~Berthet, M.~Cuturi, R.~Frostig, S.~Hoyer,
  F.~{Llinares-L{\'o}pez}, F.~Pedregosa, and J.-P. Vert, ``Efficient and
  {{Modular Implicit Differentiation}},'' in \emph{Proceedings of the 36th
  {{International Conference}} on {{Neural Information Processing Systems}}
  ({{NeurIPS}})}.\hskip 1em plus 0.5em minus 0.4em\relax Red Hook, NY, USA:
  Curran Associates Inc., 2022, pp. 5230--5242.

\bibitem{gershmanConvexOptimizationBasedBeamforming2010}
A.~Gershman, N.~Sidiropoulos, S.~Shahbazpanahi, M.~Bengtsson, and B.~Ottersten,
  ``Convex {{Optimization-Based Beamforming}},'' \emph{IEEE Signal Processing
  Magazine}, vol.~27, no.~3, pp. 62--75, 2010.

\bibitem{huangRankConstrainedSeparableSemidefinite2010}
Y.~Huang and D.~P. Palomar, ``Rank-{{Constrained Separable Semidefinite
  Programming With Applications}} to {{Optimal Beamforming}},'' \emph{IEEE
  Transactions on Signal Processing}, vol.~58, no.~2, pp. 664--678, 2010.

\bibitem{nandwaniPrimalDualFormulation2019}
Y.~Nandwani, A.~Pathak, {Mausam}, and P.~Singla, ``A {{Primal Dual
  Formulation}} for {{Deep Learning}} with {{Constraints}},'' in
  \emph{Proceedings of the 33rd {{International Conference}} on {{Neural
  Information Processing Systems}} ({{NeurIPS}})}, vol.~32.\hskip 1em plus
  0.5em minus 0.4em\relax Red Hook, NY, USA: Curran Associates, Inc., 2019, pp.
  12\,179--12\,190.

\bibitem{adamMachineLearningApproach2019}
L.~Adam and M.~Branda, ``Machine {{Learning Approach}} to {{Chance-Constrained
  Problems}}: {{An Algorithm Based}} on the {{Stochastic Gradient Descent}},''
  2019, arXiv:1905.10986.

\bibitem{huTwoTimescaleEndtoEndLearning2022}
Q.~Hu, Y.~Cai, K.~Kang, G.~Yu, J.~Hoydis, and Y.~C. Eldar, ``Two-{{Timescale
  End-to-End Learning}} for {{Channel Acquisition}} and {{Hybrid Precoding}},''
  \emph{IEEE Journal on Selected Areas in Communications}, vol.~40, no.~1, pp.
  163--181, 2022.

\bibitem{loshchilovDecoupledWeightDecay2019}
I.~Loshchilov and F.~Hutter, ``Decoupled {{Weight Decay Regularization}},'' in
  \emph{Proceedings of the 7th {{International Conference}} on {{Learning
  Representations}} ({{ICLR}})}, 2019, pp. 1--11.

\bibitem{seetharamanAutoclipAdaptiveGradient2020}
P.~Seetharaman, G.~Wichern, B.~Pardo, and J.~L. Roux, ``Autoclip: {{Adaptive
  Gradient Clipping}} for {{Source Separation Networks}},'' in \emph{2020
  {{IEEE}} 30th {{International Workshop}} on {{Machine Learning}} for {{Signal
  Processing}} ({{MLSP}})}.\hskip 1em plus 0.5em minus 0.4em\relax Espoo,
  Finland: IEEE, 2020, pp. 1--6.

\bibitem{schynolAdaptiveAnomalyDetection2025}
L.~Schynol and M.~Pesavento, ``Adaptive {{Anomaly Detection}} in {{Network
  Flows}} with {{Low-Rank Tensor Decompositions}} and {{Deep Unrolling}},''
  \emph{IEEE Open Journal of Signal Processing}, vol.~6, pp. 417--433, 2025.

\bibitem{ioffeBatchNormalizationAccelerating2015}
S.~Ioffe and C.~Szegedy, ``Batch {{Normalization}}: {{Accelerating Deep Network
  Training}} by {{Reducing Internal Covariate Shift}},'' in \emph{Proceedings
  of the 32nd {{International Conference}} on {{Machine Learning}}}, ser.
  {{ICML}}'15, vol.~37.\hskip 1em plus 0.5em minus 0.4em\relax JMLR.org, 2015,
  pp. 448--456.

\bibitem{forenzaSimplifiedSpatialCorrelation2007}
A.~Forenza, D.~Love, and R.~Heath, ``Simplified {{Spatial Correlation Models}}
  for {{Clustered MIMO Channels With Different Array Configurations}},''
  \emph{IEEE Transactions on Vehicular Technology}, vol.~56, no.~4, pp.
  1924--1934, 2007.

\bibitem{dreifuerstMachineLearningCodebook2024}
R.~M. Dreifuerst and R.~W. Heath, ``Machine {{Learning Codebook Design}} for
  {{Initial Access}} and {{CSI Type-II Feedback}} in {{Sub-6-Ghz 5G NR}},''
  \emph{IEEE Transactions on Wireless Communications}, vol.~23, no.~6, pp.
  6411--6424, 2024.

\bibitem{3gppPhysicalLayerProcedures2022}
3GPP, ``Physical {{Layer Procedures}} for {{Data}},'' 3rd Generation
  Partnership Project (3GPP), Tech. Rep. TS 38.214, 2022, version 17.1.0.

\bibitem{hjorungnesComplexValuedMatrixDerivatives2011}
A.~Hjorungnes, \emph{Complex-{{Valued Matrix Derivatives}}: {{With
  Applications}} in {{Signal Processing}} and {{Communications}}},
  1st~ed.\hskip 1em plus 0.5em minus 0.4em\relax USA: Cambridge University
  Press, 2011.

\bibitem{bocheGeneralDualityTheory2002}
H.~Boche and M.~Schubert, ``A {{General Duality Theory}} for {{Uplink}} and
  {{Downlink Beamforming}},'' in \emph{Proceedings {{IEEE}} 56th {{Vehicular
  Technology Conference}}}.\hskip 1em plus 0.5em minus 0.4em\relax Vancouver,
  BC, Canada: IEEE, 2002, pp. 87--91.

\end{thebibliography}

\begin{IEEEbiography}[{\includegraphics[width=1in,height=1.25in,clip,keepaspectratio]{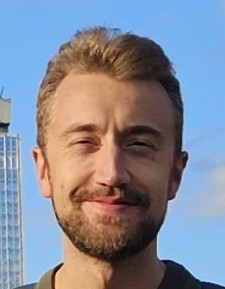}}]{Lukas Schynol}~(Graduate Student Member, IEEE) received the M.Sc. degree in 2021 in electrical engineering from Technical University of Darmstadt, Darmstadt, Germany, where he is currently working towards the Ph.D. degree. 
His research interests include resource allocation and anomaly detection in wireless networks using model-aided deep learning. 
He received the best student paper award (2nd place) at the IEEE International Workshop on Computational Advances in Multi-Sensor Adaptive Processing (CAMSAP) in 2023.
\end{IEEEbiography}

\begin{IEEEbiography}[{\includegraphics[width=1in,height=1.25in,clip,keepaspectratio]{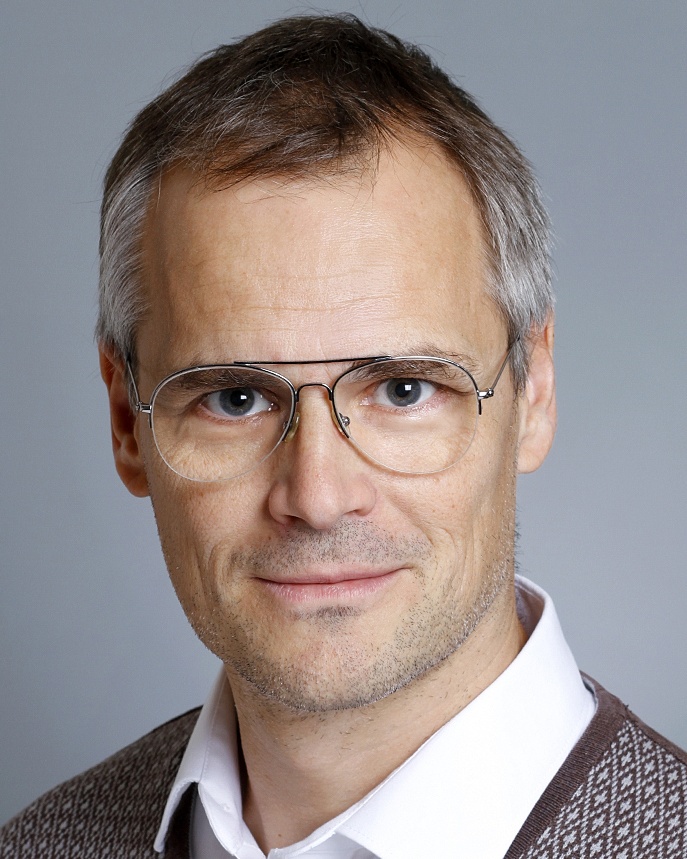}}]{Marius Pesavento}~(Senior Member, IEEE) received the Dipl.-Ing.~and Ph.D.~degrees from Ruhr-Uni\-ver\-si\-tät Bochum, Germany, in 1999 and 2005, respectively, and the M.Eng. degree from McMaster University, Hamilton, ON, Canada, in 2000. 
From 2005 to 2008, he was a Research Engineer with two start-up companies. 
He became an Assistant Professor in 2010 and a Full Professor (W3) of Communication Systems in 2013 in the Department of Electrical Engineering and Information Technology, Technische Universität Darmstadt, Germany. 
His research interests include statistical and sensor array signal processing, MIMO communications, optimization, and model-assisted deep learning. 
	
He is the Regional Director-at-Large for Region 8 (2025–2027) and a member of the IEEE Signal Processing Society Board of Governors. 
He is the Editor-in-Chief (2026–2028) of the IEEE Open Journal of Signal Processing (Deputy Editor-in-Chief, 2025; Senior Area Editor, 2019–2024). 
He was an Associate Editor of the IEEE Transactions on Signal Processing (2012–2016) and a Subject Editor of the EURASIP Journal on Signal Processing (2024–2025; Handling Editor, 2011–2023).
He is a member of the ``Signal Processing Theory and Methods'' Technical Committee (since 2021), served as Vice-Chair of the IEEE SPS “Sensor Array and Multichannel Signal Processing” Technical Committee in 2025 (member, 2012–2017), and is Past Chair of the EURASIP ``Signal Processing for Multisensor Systems'' Technical Area Committee.
\end{IEEEbiography}

\end{document}